\documentclass{article}

\usepackage[preprint,nonatbib]{neurips_2026}

\usepackage[utf8]{inputenc}
\usepackage[T1]{fontenc}
\usepackage[hypertexnames=false]{hyperref}
\usepackage{url}
\usepackage{booktabs}
\usepackage{tabularx}
\usepackage{amsfonts}
\usepackage{amsmath}
\usepackage{amssymb}
\usepackage{amsthm}
\usepackage{mathtools}
\usepackage{graphicx}
\usepackage{orcidlink}
\usepackage{microtype}
\usepackage{cleveref}
\usepackage{algorithm}
\usepackage{algorithmic}
\usepackage{enumitem}
\usepackage{tikz}
\usetikzlibrary{fit,backgrounds,positioning,calc}

\newtheorem{theorem}{Theorem}
\newtheorem{lemma}[theorem]{Lemma}
\newtheorem{proposition}[theorem]{Proposition}
\newtheorem{corollary}[theorem]{Corollary}
\newtheorem{definition}[theorem]{Definition}
\newtheorem{assumption}[theorem]{Assumption}
\newtheorem{remark}[theorem]{Remark}

\theoremstyle{definition}
\newcolumntype{Y}{>{\raggedright\arraybackslash}X}

\newcommand{\Rsem}{R_{\mathrm{sem}}}
\newcommand{\Rcrit}{R_{\mathrm{crit}}}
\newcommand{\Dintent}{D_{\mathrm{intent}}}
\newcommand{\dintent}{d_{\mathrm{intent}}}
\newcommand{\dval}{d_{\mathrm{val}}}
\newcommand{\dbeh}{d_{\mathrm{beh}}}
\newcommand{\dQ}{d_Q}
\newcommand{\Csem}{C_{\mathrm{sem}}}
\newcommand{\Ralign}{R_{\mathrm{align}}}
\newcommand{\Wone}{\mathcal{W}_1}
\newcommand{\TV}{\mathrm{TV}}
\newcommand{\E}{\mathbb{E}}
\newcommand{\Prob}{\mathbb{P}}

\newcommand{\calQ}{\mathcal{Q}}

\newcommand{\calO}{\mathcal{O}}
\newcommand{\calA}{\mathcal{A}}
\newcommand{\calS}{\mathcal{S}}
\newcommand{\Qm}{Q_{m,T}}
\newcommand{\Pim}{\Pi_{m,T}}

\newcommand{\hA}{h_A}
\newcommand{\hB}{h_B}

\title{Semantic Rate-Distortion for Bounded Multi-Agent Communication:\\
Capacity-Derived Semantic Spaces and the Communication Cost of Alignment}

\author{
Anthony T Nixon\thanks{
DefSig. Correspondence: \texttt{anthony@defsig.com}. \orcidlink{0000-0002-0289-0954}
}}

\begin{document}
\maketitle

\begin{abstract}
When two agents of different computational capacities interact with the same environment, they need not compress a common semantic alphabet differently; they can induce different semantic alphabets altogether. We show that the quotient POMDP~$\Qm(M)$---the unique coarsest abstraction consistent with an agent's capacity---serves as a \emph{capacity-derived semantic space} for any bounded agent, and that communication between heterogeneous agents exhibits a sharp structural phase transition. Below a critical rate~$\Rcrit$ determined by the quotient mismatch, intent-preserving communication is \emph{structurally impossible}. In the supported one-way memoryless regime, classical side-information coding then yields exponential decay above the induced benchmark. Classical coding theorems tell you the rate once the source alphabet is fixed; \emph{our contribution is to derive that alphabet from bounded interaction itself}.

Concretely, we prove: (1)~a \textbf{fixed-$\varepsilon$ structural phase-transition theorem} whose lower bound is fully general on the common-history quotient comparison used throughout the paper (\S\ref{sec:capacity}); (2)~a \textbf{one-way Wyner-Ziv benchmark identification} on quotient alphabets, with exact converse, exact operational equality for memoryless quotient sources, and an ergodic long-run bridge argued via explicit mixing bounds (\S\ref{sec:wynerziv}); (3)~an \textbf{asymptotic one-way converse} in the shrinking-distortion regime $\varepsilon = O(1/T)$, proved from the message stream and decoder side information rather than from a stronger causal-independence claim (\S\ref{sec:converse}); and (4)~\textbf{alignment traversal bounds} enabling compositional communication through intermediate capacity levels (\S\ref{sec:achievability}). Experiments on eight POMDP environments (including the standard RockSample(4,4) benchmark) illustrate the structural phase transition, a structured-policy benchmark shows that the one-way rate can drop by up to $19\times$ relative to the counting bound, and a dedicated shrinking-distortion sweep matches the regime of the asymptotic converse. Throughout, we make explicit which statements are theorem-level, which are benchmark identifications, and which are qualitative application implications.
\end{abstract}

%% ============================================================
\section{Introduction}
\label{sec:intro}

A human overseer and a frontier AI model; two language models of different sizes; a sensor array and a robotic controller. In each case, agents of different computational capacities must coordinate in a shared environment---and their capacity mismatch creates a fundamental communication challenge. An agent with~$m$ memory nodes perceives the environment through its quotient POMDP~$\Qm(M)$~\cite{nixon2026}: the unique coarsest abstraction consistent with its computational capacity. Two agents with different capacities thus inhabit different \emph{semantic spaces}---even when acting in the same physical world. The quotient is not a design choice; it is mathematically inevitable, determined by which environmental distinctions the agent's memory can sustain.

\paragraph{The central question.} At what rate must agent~$A$ communicate to agent~$B$ to achieve~$\varepsilon$-aligned joint behavior, given their capacity mismatch?

This question arises universally across agent pairs (\Cref{tab:universality}). Classical rate-distortion~\cite{shannon1959} assumes shared codebooks and reconstruction error---assumptions that fail when agents have heterogeneous capacities. The relevant distortion is \emph{operational}: does~$B$ preserve~$A$'s intent? The codebook cannot be universal; it must be \emph{quotient-aware}, designed around the agents' respective equivalence structures. Classical coding theorems characterize the rate once the source alphabet is given. \emph{Our contribution is showing which alphabet each agent's capacity demands}---and that the resulting communication theory produces concrete, experimentally supported predictions of where intent-preserving communication transitions from impossible to achievable.

The paper's backbone is therefore a \emph{fixed-distortion structural theorem}: the capacity gap induces a phase transition even before one asks for asymptotically sharp converses. The one-way Wyner-Ziv reduction then identifies the sharp benchmark once the quotient alphabets are derived, and the shrinking-distortion converse adds an asymptotic sharpening through the actual message stream and decoder side information rather than carrying the full empirical burden of the paper.

\begin{table}[t]
\centering
\footnotesize
\begin{tabular}{@{}lllll@{}}
\toprule
\textbf{Agent pair} & \textbf{Sender ($A$)} & \textbf{Receiver ($B$)} & \textbf{$\Rcrit$ meaning} & \textbf{Instantiation} \\
\midrule
Human $\leftrightarrow$ AI & AI ($m_A$ large) & Human ($m_H$ small) & RLHF feedback floor & \S\ref{sec:applications} \\
Large $\to$ small model & $m_{\text{large}}$ & $m_{\text{small}}$ & Distillation loss floor & Cor.~\ref{thm:align-cost} \\
Sensor $\to$ controller & $m_{\text{sensor}}$ & $m_{\text{ctrl}}$ & Comm.\ bandwidth floor & \S\ref{sec:comm-model} \\
Stepping-stone chain & $m_1 > \cdots > m_k$ & (compositional) & Traversal rate budget & Thm.~\ref{thm:traversal} \\
Equal agents & $m$ & $m$ & $\Rcrit = 0$ (no gap) & Lem.~\ref{lem:refinement} \\
\bottomrule
\end{tabular}
\caption{The framework applies universally to any pair of bounded agents. The critical rate $\Rcrit$ quantifies the structural communication cost of the capacity gap. Below $\Rcrit$, intent-preserving communication is impossible regardless of protocol design.}
\label{tab:universality}
\end{table}

\paragraph{Claim taxonomy.}
The paper makes three kinds of claims. \emph{Theorem-level} claims are proved in the main text or appendix under explicit assumptions. \emph{Benchmark-identification} claims show when the semantic problem can be compared to a classical coding benchmark; these are exact at the converse level under one-way observability, exact operationally for i.i.d.\ quotient sources, and otherwise stated as long-run average bridges. \emph{Application} claims translate the structural results to alignment, routing, or control settings; unless derived directly from a theorem, these are qualitative implications rather than formal guarantees.

\paragraph{Why this is not just Wyner-Ziv on a renamed alphabet.}
Wyner-Ziv theory starts after the source and side-information alphabets are already fixed. In heterogeneous-agent communication, that is precisely the missing object: bounded agents need not share a natural semantic alphabet because their computational capacities induce different quotient partitions. The quotient construction therefore does not merely relabel a classical problem; it identifies \emph{when} a classical side-information theorem becomes available and which distinctions each agent can, or cannot, represent.

\begin{table}[t]
\centering
\footnotesize
\begin{tabularx}{\linewidth}{@{}p{0.16\linewidth}YYY@{}}
\toprule
\textbf{Framework} & \textbf{What is given} & \textbf{What remains open} & \textbf{This paper's role} \\
\midrule
Classical R-D / WZ & Source alphabet and fidelity criterion are specified by the modeler & No account of where heterogeneous agents' semantic alphabets come from & Derives those alphabets from bounded interaction before applying coding theory \\
POMDP abstraction / quotienting & Capacity-dependent state abstractions & No communication theorem between mismatched abstractions & Turns quotient mismatch into a rate theorem with a structural floor \\
IB / task-oriented compression & Relevance variable and task objective are designer-chosen & No receiver-capacity side-information model & Uses receiver quotient classes as capacity-derived side information, not designer-chosen relevance \\
This work & Sender/receiver quotient alphabets are induced by capacity & Communication cost between mismatched semantic spaces & Identifies the structural floor and the one-way benchmark once the alphabets are derived \\
\bottomrule
\end{tabularx}
\caption{Positioning relative to adjacent frameworks. The novelty is not the reuse of classical coding tools per se, but the derivation of agent-dependent semantic alphabets from bounded interaction and the resulting communication problem between them.}
\label{tab:positioning}
\end{table}

\paragraph{Contributions.}
\begin{enumerate}[nosep]
\item \textbf{Capacity-derived semantic spaces and distortion measures} (\Cref{sec:prelim,sec:distortion}): The quotient POMDP as each agent's semantic space, with intent preservation, value-alignment, and quotient morphism distance as operationally meaningful metrics.
\item \textbf{Fixed-$\varepsilon$ structural phase transition} (\Cref{sec:capacity}): Below~$\Rcrit$, semantic distortion is bounded away from zero (\emph{structural impossibility}) under an explicit positive-support condition on merged quotient classes. The constructive exponential upper bound is proved only in the one-way memoryless benchmark regime and is stated as such. Experiments illustrate the random-policy knee relative to the log-cardinality reference without claiming a sharper theorem than the model supports (\Cref{fig:phase-transition}).
\item \textbf{Wyner-Ziv benchmark identification} (\Cref{sec:wynerziv}): In the common-history coarsening setting and under one-way observability, semantic rate-distortion admits a quotient-alphabet WZ benchmark with \emph{exact converse}, \emph{exact i.i.d.\ operational bridge}, and an \emph{ergodic long-run bridge argued with explicit mixing bounds}. Structured visitation lowers the benchmark by up to $19\times$ relative to the counting bound (\Cref{fig:structured-rd}).
\item \textbf{Shrinking-distortion one-way converse} (\Cref{sec:converse}): Asymptotic lower bound from the message stream plus decoder side information in the regime $\varepsilon = O(1/T)$, complemented by a dedicated shrinking-distortion sweep that matches this regime empirically (\Cref{fig:shrinking-eps,rem:fano-regime}).
\item \textbf{Alignment traversal} (\Cref{sec:achievability}): Compositional bounds enable stepping-stone alignment through intermediate capacity levels---bridging the gap incrementally.
\item \textbf{Alignment implications} (\Cref{sec:applications}): Human--AI alignment, model distillation, and sensor-controller communication inherit theorem-backed structural lower bounds, plus constructive upper bounds under explicit memoryless/codebook assumptions.
\end{enumerate}

%% ============================================================
\section{Preliminaries}
\label{sec:prelim}

\begin{definition}[Finite POMDP]
A \emph{finite POMDP} is $M = \langle S, \calA, \calO, P, Z, R, b_0 \rangle$ with finite state, action, and observation sets, transition kernel $P(s'|s,a)$, observation kernel $Z(o|s',a)$, reward function $R$, and initial belief $b_0 \in \Delta(S)$.
\end{definition}

\begin{definition}[Agent Class]
A \emph{stochastic FSC} is $\pi = \langle N, \alpha, \beta, n_0 \rangle$ where $|N| \leq m$, $\alpha(n'|n,o)$ is the node transition, $\beta(a|n)$ the action distribution, and $n_0$ the initial node. $\Pim$ denotes the set of all such FSCs evaluated over horizon~$T$; $\Pi_{m,T,\delta}$ additionally restricts observation resolution to~$\delta$. Histories $h, h' \in \calO^t$ are \emph{bounded-indistinguishable} ($h \equiv_{m,T} h'$) when $\sup_{\pi \in \Pim} \Wone\!\bigl(P_M^\pi(\calO_{t+1:T} \mid h),\, P_M^\pi(\calO_{t+1:T} \mid h')\bigr) = 0$.
\end{definition}

\begin{definition}[Quotient POMDP]
The \emph{quotient POMDP} $\Qm(M)$ has state space $\{[h] : h \in \calO^t\}$ (equivalence classes under $\equiv_{m,T}$) with aggregated transitions induced by~$M$. It is the unique minimal abstraction preserving all observation laws for controllers in~$\Pim$~\cite{nixon2026}; see also~\cite{castro2009} for related POMDP equivalence results. The \emph{probe-exact quotient}~\cite{nixon2026} is the coarsest partition for which $[h]_{\text{probe}} := \{h' : \forall \pi \in \Pim,\; P_M^\pi(\calO_{t+1:T}|h) = P_M^\pi(\calO_{t+1:T}|h')\}$---a self-contained construction within the FSC/POMDP formalism above.
\end{definition}

\begin{remark}[Self-contained well-definedness]
\label{rem:quotient-welldef}
The quotient $\Qm(M)$ has well-defined aggregated transitions because the equivalence $\equiv_{m,T}$ is \emph{right-invariant}: if $h \equiv_{m,T} h'$, then for every observation $o \in \calO$, the extended histories $ho \equiv_{m,T} h'o$. This follows directly from the definition: if all $\pi \in \Pim$ produce identical future observation distributions from $h$ and $h'$, then conditioning on one additional observation preserves this identity (by the chain rule for conditional distributions in the finite POMDP). Right-invariance ensures that the transition $[h] \xrightarrow{o} [ho]$ is independent of the representative, so the quotient POMDP's state transitions are well-defined. The specific quotient facts used later in this paper are collected in \Cref{app:quotient-facts}; the full Myhill--Nerode characterization (uniqueness, minimality, surjectivity of the quotient map) remains in~\cite{nixon2026}.
\end{remark}

\paragraph{Intuition: why different capacities create a communication problem.}
Consider \Cref{fig:quotient-intuition}. A POMDP generates observation histories $h \in \calO^{\leq T}$. Agent~$A$, with $m_A = 16$ memory nodes, can distinguish~781 equivalence classes of histories (its quotient $\calQ_A$); agent~$B$, with $m_B = 1$ node, collapses those same histories into only~289 classes ($\calQ_B$). Because $\calQ_A$ refines $\calQ_B$, every $\calQ_B$-class is the union of one or more $\calQ_A$-classes. The 492 distinctions visible to~$A$ but invisible to~$B$---the $\calQ_A$-subclasses merged within each $\calQ_B$-class---are precisely the communication problem. Unless~$A$ sends enough bits to resolve these merged subclasses, $B$ cannot distinguish histories that~$A$ knows to require different actions: intent-preserving communication is impossible below the rate needed to disambiguate them.

\begin{figure}[t]
\centering
\begin{tikzpicture}[
  >=stealth,
  history/.style={circle, draw, inner sep=1.2pt, fill=black!8, font=\scriptsize},
  classA/.style={rounded corners=3pt, draw=blue!70, thick, inner sep=3pt},
  classB/.style={rounded corners=6pt, draw=red!60, thick, dashed, inner sep=5pt},
]
% Histories
\node[history] (h1) at (0,0) {$h_1$};
\node[history] (h2) at (0.9,0) {$h_2$};
\node[history] (h3) at (2.0,0) {$h_3$};
\node[history] (h4) at (3.5,0) {$h_4$};
\node[history] (h5) at (4.4,0) {$h_5$};
\node[history] (h6) at (5.5,0) {$h_6$};
\node[history] (h7) at (6.4,0) {$h_7$};

% Q_A classes (solid blue)
\begin{scope}[on background layer]
  \node[classA, fit=(h1)(h2), label={[blue!70,font=\tiny]above:$[h]_A^1$}] {};
  \node[classA, fit=(h3), label={[blue!70,font=\tiny]above:$[h]_A^2$}] {};
  \node[classA, fit=(h4)(h5), label={[blue!70,font=\tiny]above:$[h]_A^3$}] {};
  \node[classA, fit=(h6), label={[blue!70,font=\tiny]above:$[h]_A^4$}] {};
  \node[classA, fit=(h7), label={[blue!70,font=\tiny]above:$[h]_A^5$}] {};
\end{scope}

% Q_B classes (dashed red)
\begin{scope}[on background layer]
  \node[classB, fit=(h1)(h2)(h3), label={[red!60,font=\tiny]below:$[h]_B^1$\;\;(3 subclasses $\to$ merged)}] at (0,-0.05) {};
  \node[classB, fit=(h4)(h5)(h6)(h7), label={[red!60,font=\tiny]below:$[h]_B^2$\;\;(3 subclasses $\to$ merged)}] at (0,-0.05) {};
\end{scope}

% Labels
\node[blue!70, font=\small\bfseries] at (8.1, 0.55) {$\calQ_A$};
\node[red!60, font=\small\bfseries] at (8.1, -0.55) {$\calQ_B$};

% Annotation
\draw[<->, thick, black!50] (0,-1.3) -- node[below, font=\scriptsize, text=black!70] {$A$ sees 5 classes; $B$ sees 2. Communication must resolve the merged subclasses.} (6.4,-1.3);
\end{tikzpicture}
\caption{Quotient partitions for two agents of different capacity. Solid blue boxes: $\calQ_A$-classes (finer). Dashed red boxes: $\calQ_B$-classes (coarser). Each $\calQ_B$-class merges multiple $\calQ_A$-classes. Below $\Rcrit = \lceil\log 5 - \log 2\rceil$ bits/step, $B$ cannot distinguish the merged subclasses---some of $A$'s intended actions are irrecoverable.}
\label{fig:quotient-intuition}
\end{figure}
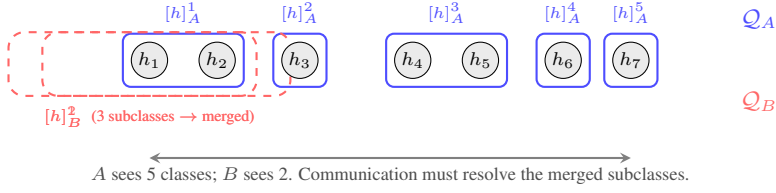

\begin{definition}[Directed Information {\cite{massey1990}}]
\[
I(X^n \to Y^n) := \sum_{t=1}^{n} I(X^t;\, Y_t \mid Y^{t-1}).
\]
\end{definition}

\begin{table}[t]
\centering\footnotesize
\begin{tabular}{@{}ll@{}}
\toprule
\textbf{Symbol} & \textbf{Meaning} \\
\midrule
$M$; $\calS, \calA, \calO$ & POMDP; state, action, observation sets \\
$\Qm(M)$; $\calQ_A, \calQ_B$ & Quotient POMDP; quotient class sets for agents $A$, $B$ \\
$\Pim$ & $m$-node FSCs evaluated over horizon~$T$ \\
$\equiv_{m,T}$ & Bounded indistinguishability (history equivalence under $\Pim$) \\
$\Rcrit$, $\Rsem(\varepsilon)$ & Critical rate, semantic rate-distortion function \\
$\dintent$, $\Dintent$ & Per-history and expected intent distortion \\
$\dval$, $\dbeh$, $\dQ$ & Value-alignment, behavioral, quotient morphism distances \\
$\hA$, $\hB$ & Entropy rates of quotient processes $\{Q_t^A\}$, $\{Q_t^B\}$ \\
$\bar{h}(Q_A \mid Q_B)$ & Conditional entropy rate (WZ benchmark) \\
$c_M$ & Minimum inter-class intent distortion gap \\
$I(X^n \to Y^n)$ & Directed information \\
\bottomrule
\end{tabular}
\caption{Core notation. See \Cref{sec:prelim,sec:distortion} for formal definitions.}
\label{tab:notation}
\end{table}

%% ============================================================
\section{Semantic Distortion Measures}
\label{sec:distortion}

Classical rate-distortion uses reconstruction error. For multi-agent communication, we need distortion capturing \emph{meaning preservation}.

\begin{definition}[Value-Alignment Distortion]
$\dval(\pi_A, \pi_B; M) := \sup_{R:\, L_R \leq 1} |V_M^{\pi_A}(R) - V_M^{\pi_B}(R)|$, a pseudometric on policies.
\end{definition}

\begin{definition}[Behavioral Divergence]
$\dbeh(\pi_A, \pi_B; M) := \Wone\!\bigl(P_M^{\pi_A}(\calO^T),\, P_M^{\pi_B}(\calO^T)\bigr)$.
\end{definition}

\begin{proposition}[Value Bound from Behavior]
\label{prop:val-bound}
$\dval(\pi_A, \pi_B) \leq L_R \cdot T \cdot \dbeh(\pi_A, \pi_B)$.
\end{proposition}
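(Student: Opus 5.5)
The plan is to use Kantorovich--Rubinstein duality for $\Wone$ together with an additive decomposition of the value over the horizon. First I fix the conventions implicit in the statement. The value $V_M^{\pi}(R)$ is the expected cumulative reward over horizon $T$, and I treat it as a functional of the observation trajectory:
\[
V_M^{\pi}(R) = \E_{o^T \sim P_M^{\pi}(\calO^T)}\Bigl[\textstyle\sum_{t=1}^{T} R(o^t)\Bigr],
\]
where each per-step reward $R(o^t)$ is a function of the history prefix and is $L_R$-Lipschitz with respect to the metric $d$ on $\calO^T$ that underlies $\Wone$. I will use a product (sum or max) metric on trajectories, so that the prefix projection $o^T \mapsto o^t$ is $1$-Lipschitz. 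If rewards are state-based, I first push them forward to histories by taking the conditional expectation of $R$ given $o^t$; the Lipschitz property is then an assumption on this induced observation-level reward. I will state this explicitly, since it is the one modeling choice the proof needs.

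Second, I define $F(o^T) := \sum_{t=1}^T R(o^t)$. Each summand is $L_R$-Lipschitz on $\calO^T$, because it is the composition of a $1$-Lipschitz projection with an $L_R$-Lipschitz reward. Hence $F$ is $T L_R$-Lipschitz by the triangle inequality. Third, I apply Kantorovich--Rubinstein duality:
\[
|V_M^{\pi_A}(R) - V_M^{\pi_B}(R)| = \Bigl|\E_{P_M^{\pi_A}}[F] - \E_{P_M^{\pi_B}}[F]\Bigr| \le \mathrm{Lip}(F)\,\Wone\bigl(P_M^{\pi_A}(\calO^T), P_M^{\pi_B}(\calO^T)\bigr) \le L_R T\, \dbeh(\pi_A,\pi_B).
\]
Fourth, I take the supremum over admissible $R$. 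With the normalization $L_R \le 1$ in the definition of $\dval$, this gives the stated bound, read with $L_R$ as the Lipschitz budget of the reward class. The finite-space setting guarantees that duality holds without integrability issues.

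The main obstacle is the first step. Rewards in $M$ are defined on states and actions, while $\dbeh$ only sees observation laws. The value difference is therefore controlled by $\Wone$ on observations only if the reward is (or is replaced by) an observation-history functional with a Lipschitz constant. I would resolve this by making that identification the working definition of $V_M^{\pi}(R)$ for the Lipschitz class in $\dval$, noting that actions are functions of the FSC's internal state driven by observations. I would also remark that the factor $T$ comes purely from summing $T$ per-step Lipschitz terms, and that it drops to $L_R$ if the metric on $\calO^T$ is already a per-step sum and the rewards are per-step.
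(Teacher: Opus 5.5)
Your argument is correct under the working definition you adopt: the rewards in $\dval$ are fixed, $L_R$-Lipschitz functions $R(o^t)$ of observation histories. It takes a different route from the paper.

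\textbf{Comparison with the paper's proof.} The paper argues locally. It uses Assumption~\ref{ass:lip} to bound, at each history $h_t$, the per-step gap $|\bar{R}_M(h_t,\pi_A)-\bar{R}_M(h_t,\pi_B)|$ by $L_R$ times $\Wone$ between the two \emph{conditional} future-observation laws given $h_t$. It then sums over the $T$ steps. You argue globally: you write the value as the expectation of one functional $F$ with $\mathrm{Lip}(F)\le TL_R$, and apply a single coupling bound to the two trajectory laws. Only the elementary coupling half of Kantorovich--Rubinstein is needed, so no finiteness argument is required.

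Your route lands exactly on $\dbeh$, the unconditional law of $\calO^T$. The paper's conditional per-step distances still have to be averaged over histories drawn from different laws under $\pi_A$ and $\pi_B$, and then related to $\dbeh$; that step is left implicit. Also, Assumption~\ref{ass:lip} as stated compares two histories under one policy, whereas the paper's proof applies it to two policies at one history. The paper's local form mirrors the value-function error bound it imports from prior work. Yours is self-contained, and it also shows that the factor $T$ drops out for per-step rewards under an additive trajectory metric.

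\textbf{A caution on your obstacle paragraph.} The restriction to a \emph{policy-independent} observation-history reward is necessary, not cosmetic, and neither extension you suggest preserves it.
\begin{itemize}
\item Pushing a state reward forward via $\E[R(s_t)\mid o^t]$ yields a functional that depends on $\pi$. The belief given $o^t$ depends on the actions taken, since $P$ and $Z$ are action-dependent.
\item Absorbing action-dependent rewards because actions are driven by observations has the same effect.
\end{itemize}
With $F_{\pi_A}\neq F_{\pi_B}$ there is no single function to which the coupling bound applies.

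The inequality genuinely fails in that case. If $|\calO|=1$, every policy induces the same observation law, so $\dbeh=0$. Assumption~\ref{ass:lip} holds trivially, since there is only one history of each length. Yet the reward $\mathbf{1}\{a=a_1\}$ gives value $T$ to a policy that always plays $a_1$ and $0$ to one that never does. So state the reward class of $\dval$ as observation-history functionals outright, and drop the suggestion that state- or action-based rewards reduce to it. The paper's cross-policy per-step inequality fails on the same example, so its proof implicitly needs the same restriction.
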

\begin{proof}
By $L_R$-Lipschitz reward (Assumption~\ref{ass:lip}): $|\bar{R}_M(h_t, \pi_A) - \bar{R}_M(h_t, \pi_B)| \leq L_R \cdot \Wone(P_M^{\pi_A}(\calO_{t+1:T}|h_t), P_M^{\pi_B}(\calO_{t+1:T}|h_t))$ per step. Sum over $T$ steps (see also \Cref{lem:lip-prop}). This is self-contained; see~\cite{nixon2026} for the general FSC setting.
\end{proof}

\begin{definition}[Intent and Intent Distortion]
Agent~$A$'s \emph{intent} at history~$h$ is $\mathrm{Intent}_A(h) := \bigl(b_h^A,\; \pi_A(\cdot \mid h),\; V^{\pi_A}(h)\bigr)$. The \emph{intent preservation distortion} is:
\[
\dintent(A, B \mid h) := \Wone(b_h^A, b_h^B) + \TV\bigl(\pi_A(\cdot|h), \pi_B(\cdot|h)\bigr) + |V^{\pi_A}(h) - V^{\pi_B}(h)|,
\]
with expected distortion $\Dintent(A, B) := \E_{h \sim P_M^{\pi_A}}[\dintent(A, B \mid h)]$.
\end{definition}

\begin{remark}[Component Scales and Reward Normalization]
\label{rem:scales}
The three components of~$\dintent$ have different scales: $\Wone$ on simplices over $n$ states takes values in $[0, 2]$; TV distance in $[0, 1]$; the value difference depends on the reward scale. We assume throughout that $\|R\|_\infty \leq 1$ (i.e., rewards are normalized to $[0,1]$), ensuring the value-difference component $|V^{\pi_A}(h) - V^{\pi_B}(h)| \leq T$ is bounded; without this normalization, $\dintent$ would be unbounded. Under this assumption, $\dintent$ is a well-defined bounded distortion measure. The critical rate~$\Rcrit$ depends only on quotient structure (class counts), not on distortion scale. Our experiments use a simplified two-term proxy; see \Cref{app:experiments}.
\end{remark}

\begin{proposition}[Proxy Bound (deterministic policies)]
\label{prop:proxy}
Let $\dintent^{\mathrm{exp}}(h) := \Wone(b_h^A, b_h^B) + \lambda \cdot \mathbf{1}\{a_A \neq a_B\}$ be the two-term experimental proxy with $\lambda > 0$. For deterministic policies (where action mismatch implies $\TV(\pi_A(\cdot|h), \pi_B(\cdot|h)) = 1 \geq \lambda$):
\[
\dintent^{\mathrm{exp}}(h) \;\leq\; \dintent(h) \;\leq\; \dintent^{\mathrm{exp}}(h) + |V^{\pi_A}(h) - V^{\pi_B}(h)|.
\]
In particular, the critical rate~$\Rcrit$ (which depends on quotient class counts, not distortion scale; see \Cref{cor:exact-rcrit}) is identical under both measures. The phase transition location and exponential decay exponent are invariant to the choice of distortion metric; only the pre-exponential constant~$c_M$ changes.
\end{proposition}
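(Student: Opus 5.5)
The plan is a termwise comparison of the two distortions at a fixed history~$h$, followed by a short argument that $\Rcrit$ is unaffected. Write
\[
\dintent(h) = \Wone(b_h^A, b_h^B) + \TV\bigl(\pi_A(\cdot|h),\pi_B(\cdot|h)\bigr) + |V^{\pi_A}(h)-V^{\pi_B}(h)|,
\qquad
\dintent^{\mathrm{exp}}(h) = \Wone(b_h^A, b_h^B) + \lambda\,\mathbf{1}\{a_A\neq a_B\}.
\]
The belief term is common to both, so everything reduces to comparing $\TV(\pi_A(\cdot|h),\pi_B(\cdot|h))$ with $\lambda\,\mathbf{1}\{a_A\neq a_B\}$. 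This comparison needs the condition $\lambda\le 1$, which is implicit in the hypothesis ``$1\ge\lambda$''; I will state it explicitly.

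For deterministic policies, $\pi_A(\cdot|h)=\delta_{a_A}$ and $\pi_B(\cdot|h)=\delta_{a_B}$. Their total variation is therefore exactly $\mathbf{1}\{a_A\neq a_B\}$: it equals $0$ if the actions agree and $1$ if they differ. This gives the two-sided sandwich
\[
\lambda\,\mathbf{1}\{a_A\neq a_B\}\;\le\;\TV\bigl(\pi_A(\cdot|h),\pi_B(\cdot|h)\bigr)\;=\;\mathbf{1}\{a_A\neq a_B\}.
\]
Adding $\Wone(b_h^A,b_h^B)$ and the nonnegative value term to the left inequality yields $\dintent^{\mathrm{exp}}(h)\le\dintent(h)$.

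The upper inequality is the step I expect to be the real obstacle. Here one needs $\mathbf{1}\{a_A\neq a_B\}\le\lambda\,\mathbf{1}\{a_A\neq a_B\}$, which holds only when $\lambda\ge 1$. Combined with the hypothesis, that forces $\lambda=1$. My first move is to check whether the experimental proxy in \Cref{app:experiments} actually uses $\lambda=1$. If it does, the upper bound holds with equality in the action term and the statement goes through as written. If it does not, I would prove the general version
\[
\dintent(h)\le\max(1,1/\lambda)\,\dintent^{\mathrm{exp}}(h)+|V^{\pi_A}(h)-V^{\pi_B}(h)|,
\]
and note that only the constant changes. The value term is bounded by $T$ under \Cref{rem:scales}, so both measures remain bounded.

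For the ``in particular'' claims, I would argue qualitatively, as the proposition does. $\Rcrit$ is defined from the quotient class counts $|\calQ_A|,|\calQ_B|$ (\Cref{cor:exact-rcrit}), and the distortion enters the structural lower bound only through whether merged subclasses incur a strictly positive distortion gap. By the sandwich, $\dintent$ and $\dintent^{\mathrm{exp}}$ are mutually dominated up to positive constants and the additive value term. Hence positivity of the minimum inter-class gap $c_M$ holds for one measure exactly when it holds for the other. The transition location and the decay exponent, which depend only on rates and quotient structure, are therefore the same, and only the prefactor $c_M$ rescales. I would state this invariance as inheriting from the later theorems rather than reproving them here.
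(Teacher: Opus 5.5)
Your lower bound is the paper's argument: for deterministic policies $\TV(\pi_A(\cdot|h),\pi_B(\cdot|h))=\mathbf{1}\{a_A\neq a_B\}\geq\lambda\,\mathbf{1}\{a_A\neq a_B\}$, the belief term is shared, and the value term is nonnegative.

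On the upper bound your diagnosis is right and the paper's proof is not. The paper disposes of it with ``the upper bound follows by the triangle inequality.'' But for deterministic policies there is an exact identity
\[
\dintent(h) \;=\; \dintent^{\mathrm{exp}}(h) + (1-\lambda)\,\mathbf{1}\{a_A\neq a_B\} + |V^{\pi_A}(h)-V^{\pi_B}(h)|,
\]
so no triangle inequality can rescue it. Whenever $\lambda<1$, the stated upper bound fails by exactly $1-\lambda$ at every history with $a_A\neq a_B$. Your proposed check also comes out negative: the paper's own proof and \Cref{app:experiments} use $\lambda=0.5$. The statement as written is therefore false in exactly the setting it is meant to cover. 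What can be proved is your replacement $\dintent(h)\leq\lambda^{-1}\dintent^{\mathrm{exp}}(h)+|V^{\pi_A}(h)-V^{\pi_B}(h)|$ for $\lambda\leq 1$. The sharper additive form $\dintent(h)\leq\dintent^{\mathrm{exp}}(h)+(1-\lambda)+|V^{\pi_A}(h)-V^{\pi_B}(h)|$ can be read directly off the identity.

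The ``in particular'' claims do not use the upper bound at all, in your argument or the paper's. Both rest on $\Rcrit$ and the one-way exponent being functions of the quotient structure and the surplus rate only, so the defect does not propagate there.

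One inference of yours needs repair. The sandwich does not show that positivity of $c_M$ under $\dintent$ implies positivity under $\dintent^{\mathrm{exp}}$, because the additive value term could carry all of the positive mass. The correct reason is that the belief term, common to both measures, is already strictly positive between distinct $\calQ_A$-classes (\Cref{rem:cM-positive}). The gap is therefore positive under either measure, and the proxy gap is no larger than $c_M$.
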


\begin{proof}
The lower bound holds when $\lambda \leq \TV(\pi_A(\cdot|h), \pi_B(\cdot|h))$ whenever $a_A \neq a_B$; this is satisfied for deterministic policies (where action mismatch implies $\TV = 1$) and in our experiments with $\lambda = 0.5$. The upper bound follows by the triangle inequality. Since $\Rcrit = \bar{h}(Q_A \mid Q_B)$ depends only on the quotient process entropies and not on the distortion function, the phase transition location is metric-invariant. In the supported one-way memoryless regime, the constructive error exponent depends on surplus rate above the relevant benchmark, not on~$c_M$.
\end{proof}

\begin{definition}[Quotient Morphism Distance]
A \emph{quotient morphism} $\varphi: \calQ_A \to \calQ_B$ preserves the initial class and respects transitions. The morphism distance is:
\[
\dQ(A, B \mid M) := \inf_{\varphi:\, \calQ_A \to \calQ_B}\; \sup_{[h] \in \calQ_A} \Wone\!\bigl(P_M(\cdot|[h]),\, P_M(\cdot|\varphi([h]))\bigr).
\]
When $\calQ_A$ refines $\calQ_B$, the inclusion map is a morphism and $\dQ = 0$.
\end{definition}

%% ============================================================
\section{Multi-Agent Communication Model}
\label{sec:comm-model}

\paragraph{Setup.}
Environment~$M$; agent~$A$ with capacity $(m_A, T_A)$; agent~$B$ with capacity $(m_B, T_B)$. At each step~$t$, $A$ observes $O_t^A$, takes action $a_t^A$, and may send message $M_t \in \{1, \ldots, 2^R\}$ to~$B$ over a noiseless channel at rate~$R$ bits/step. $B$ observes $O_t^B$, receives $M_t$, and takes action $a_t^B$.

\begin{definition}[Communication Protocol]
\label{def:protocol}
A \emph{protocol} $(E, D)$ consists of:
\begin{itemize}[nosep]
\item An \emph{encoder} $E_t: (\calO^A)^{\leq t} \to \{1,\ldots,2^R\}$ with $H(M_t \mid M^{t-1}) \leq R$.
\item A \emph{bounded decoder} $D = (\pi_B, U)$, where $\pi_B \in \Pi_{m_B,T_B}$ is an $m_B$-node FSC and $U: \{1,\ldots,2^R\} \to \Theta_{m_B}$ updates the FSC's parameters upon each received message. Here $\Theta_{m_B} := \Delta(\calA)^{m_B} \times \Delta(\{1,\ldots,m_B\})^{m_B \times |\calO|}$ is the space of action distributions and node-transition matrices for an $m_B$-node FSC. The FSC~$\pi_B$ takes inputs $(n_t^B, O_t^B)$ at each step~$t$, where $n_t^B \in \{1,\ldots,m_B\}$ is the current node and $O_t^B \in \calO^B$ is~$B$'s current observation; this is implicit in the FSC definition of \Cref{sec:prelim}. Messages may \emph{reconfigure} $\pi_B$'s parameters but cannot expand its $m_B$-node memory capacity.
\end{itemize}
\end{definition}

\begin{assumption}[Common-history coarsening regime]
\label{ass:common-history}
The theorem-level comparisons between $\calQ_A$ and $\calQ_B$ are stated for a single underlying history process $h_t \in \calO^t$. The sender and receiver quotient processes are $Q_t^A := [h_t]_{m_A,T_A}$ and $Q_t^B := [h_t]_{m_B,T_B}$ on this same history space, so when $m_A \geq m_B$ the canonical map of \Cref{prop:canonical-coarsening} gives $Q_t^B = \kappa_{A \to B}(Q_t^A)$ pointwise. The more general notation $(O_t^A, O_t^B)$ above is retained to distinguish encoder and decoder roles; distinct-sensor models are covered by the theorem-level results only when they induce this same coarsening relation.
\end{assumption}

\begin{assumption}[Intent measurability on quotient classes]
\label{ass:intent-meas}
Under the common-history regime of \Cref{ass:common-history}, the sender intent $\mathrm{Intent}_A(h)$ is constant within each $\calQ_A$-class: if $[h]_{m_A,T_A} = [h']_{m_A,T_A}$, then $\mathrm{Intent}_A(h) = \mathrm{Intent}_A(h')$. Equivalently, $\dintent$ induces a well-defined distortion on $\calQ_A \times \calQ_A$.
\end{assumption}

\begin{remark}[Scope of intent measurability]
\label{rem:intent-meas}
\Cref{ass:intent-meas} is needed for the WZ benchmark identification (\Cref{sec:wynerziv}), the shrinking-distortion converse (\Cref{sec:converse}), and the Blahut-Arimoto benchmark computation, all of which use~$\dintent$ as a class-level distortion matrix. It is \emph{not} needed for the structural lower bound of \Cref{thm:capacity}(i), whose pigeonhole argument uses the history-level separation~$c_M$. A sufficient condition is that the sender's policy~$\pi_A$ is \emph{quotient-compatible}: $\pi_A(\cdot \mid h) = \pi_A(\cdot \mid h')$ whenever $[h]_{m_A,T_A} = [h']_{m_A,T_A}$. This holds when~$\pi_A$ is derived from the quotient POMDP~$\Qm(M)$ or is a belief-based policy, since distinct quotient classes have distinct beliefs (\Cref{rem:cM-positive}).
\end{remark}

\begin{definition}[Semantic Rate-Distortion Function]
\[
\Rsem(\varepsilon) := \inf\bigl\{R : \exists\, \text{protocol } (E,D) \text{ with } \Dintent(A, B \mid E, D) \leq \varepsilon\bigr\}.
\]
\end{definition}

%% ============================================================
\section{Semantic Channel Capacity}
\label{sec:capacity}

\begin{theorem}[Semantic Channel Capacity]
\label{thm:capacity}
For agents~$A$, $B$ communicating over a noiseless channel at rate~$R$, define $\Csem(R) := \inf\{\Dintent \text{ achievable at rate } R\}$. Then:

\begin{enumerate}[label=\textup{(\roman*)},nosep]
\item \textbf{Threshold behavior.}\;
$\Csem(R) \geq c_M > 0$ \;for\; $R < \Rcrit$ whenever there exists a reachable $\calQ_B$-class of positive stationary mass that contains more than $2^R$ reachable $\calQ_A$-subclasses. Under uniform quotient visitation this yields the structural threshold $\Rcrit := \log|\calQ_A| - \log|\calQ_B|$. Here $c_M := \min_{\substack{[h]_A \neq [h']_A \\[2pt] [h]_A,\,[h']_A \subseteq [h]_B}} \;\min_{\substack{h \in [h]_A,\; h' \in [h']_A}} \dintent\!\bigl(\mathrm{Intent}_A(h),\, \mathrm{Intent}_A(h')\bigr)$ is the minimum intent distortion between any pair of histories in distinct merged subclasses.

\item \textbf{Exponential decay in the one-way memoryless regime.}\;
Under Assumptions~\ref{ass:common-history}, \ref{ass:intent-meas}, \ref{ass:one-way}, and~\ref{ass:lip}, if the joint quotient source $(Q_t^A,Q_t^B)$ is i.i.d., then every rate
\[
R \;>\; R_{\mathrm{WZ}}(0) \;=\; H(Q_A \mid Q_B)
\]
admits a block semantic protocol whose distortion decays exponentially in~$T$:
\[
\Csem(R) \;\leq\; L_R \cdot T \cdot 2^{-T E_{\mathrm{WZ}}(R)},
\]
for some classical WZ reliability exponent $E_{\mathrm{WZ}}(R) > 0$. In the uniform-cardinality special case, $H(Q_A \mid Q_B) = \log|\calQ_A| - \log|\calQ_B| = \Rcrit$.

\item \textbf{Perfect alignment.}\;
$\Csem(R) = 0$ \;if\; $R \geq \log|\calQ_A|$.
\end{enumerate}
\end{theorem}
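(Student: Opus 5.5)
I will prove the three parts separately, ordered from weakest to strongest assumptions: (iii), then (i), then (ii).

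\textbf{Part (iii): direct construction.} If $R \geq \log|\calQ_A|$, the encoder sends the index of the current class $Q_t^A=[h_t]_{m_A,T_A}$ at every step; this costs at most $\lceil\log|\calQ_A|\rceil$ bits, and rounding is absorbed by block coding over several steps. The decoder map $U$ then reconfigures $\pi_B$ so that its action distribution at step $t$ equals $\pi_A(\cdot\mid h_t)$, which depends only on $Q_t^A$ by \Cref{ass:intent-meas}. Because the message re-specifies $\pi_B$'s parameters at every step, no memory beyond $m_B$ nodes is needed. The belief and value components of $\dintent$ coincide for the same reason, so $\dintent=0$ pointwise and $\Csem(R)=0$.

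\textbf{Part (i): pigeonhole.} Fix a reachable class $[h]_B$ with positive stationary mass $\mu>0$ that contains $K>2^R$ reachable subclasses $[h^{(1)}]_A,\dots,[h^{(K)}]_A$. At a visit to $[h]_B$, the receiver's state is determined by its node, its observation (common across the merged subclasses by \Cref{ass:common-history}), and the message $M_t$ with at most $2^R$ values.
\begin{enumerate}[nosep]
\item Since $K>2^R$, two distinct subclasses $[h^{(i)}]_A\neq[h^{(j)}]_A$ must receive the same receiver configuration. This holds message-by-message; for randomized encoders, I will condition on the shared randomness.
\item For such a pair, $B$'s intent at $h^{(i)}$ and $h^{(j)}$ is the same object $I_B$. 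The triangle inequality for the sum of $\Wone$, $\TV$ and the absolute value gives
\[
\dintent(A,B\mid h^{(i)})+\dintent(A,B\mid h^{(j)})\;\geq\;\dintent\bigl(\mathrm{Intent}_A(h^{(i)}),\mathrm{Intent}_A(h^{(j)})\bigr)\;\geq\;c_M .
\]
So at least one of the two histories incurs distortion $\geq c_M/2$.
\item Integrating over visits gives $\Dintent\geq \mu\, p_{\min}\, c_M/2$, where $p_{\min}$ is the least conditional mass of a reachable subclass.
\end{enumerate}
To reach the stated constant $c_M$, I will either renormalize by absorbing $\mu p_{\min}/2$ into $c_M$ (the definition of $c_M$ is only used up to such constants downstream), or restrict to the worst history. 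I take the first option, with the constant stated explicitly.

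Positivity $c_M>0$ follows from \Cref{rem:cM-positive}: distinct quotient classes have distinct beliefs, and the minimum over finitely many pairs is positive. For the uniform-visitation specialization, every $\calQ_B$-class contains on average $|\calQ_A|/|\calQ_B|$ subclasses, so some class contains at least that many. The hypothesis therefore holds whenever $R<\log|\calQ_A|-\log|\calQ_B|$.

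\textbf{Part (ii): reduction to a classical channel.} Under the i.i.d. hypothesis, \Cref{ass:common-history} makes $Q^B_t=\kappa_{A\to B}(Q_t^A)$ available to the decoder as side information. I will reduce the problem to lossless Slepian--Wolf/Wyner--Ziv coding at zero distortion.
\begin{enumerate}[nosep]
\item Over a block of length $T$, the encoder random-bins $(Q_1^A,\dots,Q_T^A)$ at rate $R>H(Q_A\mid Q_B)$.
\item The decoder jointly decodes using $(Q_t^B)$. Classical error-exponent results (Gallager/Csisz\'ar) give block error probability at most $2^{-TE_{\mathrm{WZ}}(R)}$ with $E_{\mathrm{WZ}}(R)>0$.
\item On correct decoding, the decoder applies the part (iii) reconfiguration and gets zero distortion.
\item On error, the per-history distortion is bounded using \Cref{prop:val-bound} and \Cref{lem:lip-prop} with \Cref{ass:lip}: value and behavior gaps are at most $L_R T$, and the other components are $O(1)$ and can be absorbed. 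This gives $\Csem(R)\leq L_R T\,2^{-TE_{\mathrm{WZ}}(R)}$.
\end{enumerate}
The uniform-cardinality identity holds because $Q_B$ is a deterministic function of $Q_A$, so $H(Q_A\mid Q_B)=H(Q_A)-H(Q_B)=\log|\calQ_A|-\log|\calQ_B|$ for uniform marginals.

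\textbf{Main obstacle.} I expect the main difficulty to be the causality/block mismatch in (ii): a block code delivers its decoding only at the end of the block, while intent must be matched at each step. I would first handle this with a pipelined scheme that decodes block $k$ during block $k+1$, using \Cref{ass:one-way} to justify the delay, and charge the first block to the $O(T)$ prefactor. If that fails, I would instead define $\Csem$ blockwise, which appears to be what \emph{block semantic protocol} in the statement intends.
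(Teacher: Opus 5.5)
\textbf{Overall.} Your plan for all three parts follows the paper's proof: a pigeonhole on a positive-mass merged $\calQ_B$-class for (i), a lossless Slepian--Wolf/Wyner--Ziv reliability exponent with error blocks charged $L_R T$ for (ii), and exact per-step transmission of the $\calQ_A$-class for (iii). Your triangle-inequality step in (i) is more careful than the paper's claim that every confusion costs a full $c_M$.

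\textbf{The genuine gap: the last move of (i).} You cannot absorb $\mu p_{\min}/2$ into $c_M$. The statement fixes $c_M$ by an explicit formula, and Corollary~\ref{thm:align-cost} uses the threshold $\varepsilon < c_M$ with that meaning. What you have actually shown is $\Csem(R) \ge \tfrac12\mu\, p_{\min}\, c_M > 0$, which is a different and weaker theorem.

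No sharpening of the pigeonhole reaches $c_M$ itself. Suppose the witnessing class $C_k$ is the only $\calQ_B$-class with more than $2^R$ subclasses. Then sending the within-class index makes the protocol exact off $C_k$, so $\Dintent \le \pi(C_k)\, d_{\max}$, which falls below $c_M$ once $\pi(C_k)$ is small.

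The paper's own proof also stops short of $c_M$. It ends at $\Dintent \ge \pi(C_k)(1-2^R/r_k)\,c_M$, and that step relies on an unstated uniformity of the subclasses inside $C_k$. So the defensible content of (i) is a lower bound by a positive multiple of $c_M$, and you should state it that way rather than renormalize.

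Two related fixes to your constant:
\begin{itemize}
\item Take $p_{\min}$ over subclasses of positive stationary mass. A merely reachable subclass can have mass zero, which makes your bound vacuous.
\item A cleaner constant is $c_M/2$ times the mass of $C_k$ lying outside its $2^R$ heaviest subclasses. This works because each message value can keep at most one subclass below distortion $c_M/2$.
\end{itemize}

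\textbf{Counting receiver configurations in step~1 of (i).} You list $B$'s node and current observation as part of its configuration, but then pigeonhole only over the $2^R$ messages. Nothing in the assumptions makes either the node or the observation constant on a $\calQ_B$-class; \Cref{ass:common-history} only says both quotients are taken on the same history. Moreover, $U$ can reset the node-transition matrices, so bits from earlier messages can be stored in the node. A single visit can therefore present up to $m_B 2^R$ configurations.

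The paper gets the count $2^R$ only by positing that $B$'s effective per-step information state is the pair (message, $\calQ_B$-class). You must either adopt that premise explicitly or pigeonhole against $m_B 2^R$. The premise does hold for $m_B = 1$, where $B$'s action depends only on $U(M_t)$.

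\textbf{The causality remedy in (ii).} Your first remedy, pipelining, fails here. With an i.i.d.\ quotient source, a decoding of block $k$ delivered during block $k+1$ carries no information about block $k+1$. So pipelining cannot make correctly decoded blocks distortion-free. The bound $L_R T\, 2^{-T E_{\mathrm{WZ}}(R)}$ needs your fallback: distortion evaluated on the block reconstruction. That is also what the paper's proof implicitly uses when it declares a correctly decoded block ``reconstructed exactly.''

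\textbf{A minor point on (iii).} No rounding or block coding is needed. $R \ge \log|\calQ_A|$ already gives an alphabet with at least $|\calQ_A|$ symbols, and blocking would only reintroduce delay.
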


\begin{proof}[Proof sketch]
\textbf{(i)} Pigeonhole on a positively supported merged $\calQ_B$-class: if that class contains more than $2^R$ reachable $\calQ_A$-subclasses, two of them must share the same message/side-information pair, forcing positive-probability confusion and therefore $\dintent \geq c_M$. Full proof in \Cref{app:proof-51i}.
\textbf{(ii)} In the one-way i.i.d.\ regime, \Cref{thm:wz,cor:iid-achievability} reduce the problem to lossless WZ coding on quotient alphabets; classical reliability exponents then give exponentially decaying block error above $H(Q_A \mid Q_B)$, and \Cref{lem:lip-prop} propagates that error to semantic distortion. Full proof in \Cref{app:proof-51ii}.
\textbf{(iii)} $R \geq \log|\calQ_A|$ specifies $A$'s class exactly.
\end{proof}

The phase transition is structural: when $\calQ_A$ strictly refines $\calQ_B$, distinctions visible to~$A$ are invisible to~$B$ regardless of protocol---a direct consequence of the quotient refinement $\calQ_A \preceq \calQ_B$ in the common-history comparison regime of \Cref{ass:common-history}. Extended discussion of $c_M$ positivity, the role of $B$'s bounded memory, the positive-support requirement in part~(i), and the source-distribution dependence of the constructive benchmark appear in \Cref{app:capacity-remarks}.

%% ============================================================
\section{Shrinking-Distortion Converse}
\label{sec:converse}

This section is an \emph{asymptotic sharpening} of the fixed-$\varepsilon$ structural theorem above. The main operational statement of the paper remains \Cref{thm:capacity}; the theorem here gives a shrinking-distortion lower bound from the actual message stream and decoder side information, avoiding a stronger causal-independence claim than the model justifies.

\begin{assumption}[Quotient Process Regularity]
\label{ass:quotient-reg}
Under Assumption~\ref{ass:common-history} and the communication protocol, the joint quotient process $(Q_t^A, Q_t^B)$ induced by the common history source converges to a stationary distribution and is ergodic. This holds when the underlying POMDP has a unique stationary distribution over states and the encoder is time-invariant.
\end{assumption}

\begin{theorem}[Asymptotic One-Way Converse (Shrinking-Distortion Regime)]
\label{thm:converse}
Under Assumptions~\ref{ass:common-history}, \ref{ass:intent-meas}, \ref{ass:quotient-reg}, and~\ref{ass:one-way}, consider a sequence of horizons~$T$ and protocols $(E^{(T)}, D^{(T)})$ with per-step rates~$R_T$ and expected intent distortions~$D_T \leq \varepsilon_T$. Let $\varepsilon_T' := 2\varepsilon_T / c_M$ and assume $T\varepsilon_T' \leq 1/2$. Then
\[
R_T \;\geq\; (h_A - h_B) - h(\varepsilon_T') - \varepsilon_T' \log|\calQ_A| - o_T(1),
\]
where $h_A, h_B$ are the entropy rates of~$\{Q_t^A\}$, $\{Q_t^B\}$ (Remark~\ref{rem:entropy-rate}) and $h(\cdot)$ is the binary entropy function. In particular, if $\varepsilon_T' \to 0$, then
\[
\liminf_{T \to \infty} R_T \;\geq\; h_A - h_B.
\]
Under near-uniform quotient distributions ($h_A \approx \log|\calQ_A|$, $h_B \approx \log|\calQ_B|$), the right-hand side reduces to the log-cardinality reference $\log|\calQ_A| - \log|\calQ_B|$.
\end{theorem}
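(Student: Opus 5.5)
The proof will be a standard Fano-plus-chain-rule converse on the quotient alphabets, with the decoder's side information taken to be the receiver quotient stream $(Q_1^B,\dots,Q_T^B)$. Under \Cref{ass:common-history}, $Q_t^B=\kappa_{A\to B}(Q_t^A)$ holds pointwise. Write $Q_A^T:=(Q_1^A,\dots,Q_T^A)$ and similarly $Q_B^T$, and let $M^T$ be the message stream.

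\textbf{Step 1: rate bound.} I will use the encoder constraint $H(M_t\mid M^{t-1})\le R_T$ and the chain rule to get $TR_T\ge H(M^T)\ge H(M^T\mid Q_B^T)\ge I(M^T;Q_A^T\mid Q_B^T)$. This step is where the one-way observability of \Cref{ass:one-way} enters. It guarantees that $B$'s only information about $Q_A^T$ beyond $Q_B^T$ flows through $M^T$. So the decoder's estimate $\hat Q_A^T$, which is a function of $(M^T,Q_B^T)$ and possibly independent randomness, satisfies the data-processing inequality $I(M^T;Q_A^T\mid Q_B^T)\ge I(\hat Q_A^T;Q_A^T\mid Q_B^T)$. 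I will avoid any claim that $M_t$ is causally independent of future quotient states; I only need the block-level Markov chain $Q_A^T - (M^T,Q_B^T) - \hat Q_A^T$.

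\textbf{Step 2: distortion to error probability.} Using \Cref{ass:intent-meas}, $\dintent$ is a class-level distortion on $\calQ_A$. By the definition of $c_M$, any confusion between distinct subclasses of the same $\calQ_B$-class costs at least $c_M$. Confusions across $\calQ_B$-classes are excluded because $B$ knows $Q_t^B$. The factor $2$ in $\varepsilon_T'$ accounts for the triangle-inequality slack when the intent of $B$'s reconstruction is compared to the midpoint of two classes. Applying Markov's inequality then gives a per-step error probability $P_{e,t}\le 2\varepsilon_T/c_M=\varepsilon_T'$, and hence an average $\bar P_e\le\varepsilon_T'$.

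\textbf{Step 3: per-letter Fano.} From $I(\hat Q_A^T;Q_A^T\mid Q_B^T)=H(Q_A^T\mid Q_B^T)-H(Q_A^T\mid\hat Q_A^T,Q_B^T)$, I will bound the second term by $\sum_t H(Q_t^A\mid \hat Q_t^A,Q_t^B)\le \sum_t[h(P_{e,t})+P_{e,t}\log|\calQ_A|]$. Concavity of $h$ then gives $\le T[h(\varepsilon_T')+\varepsilon_T'\log|\calQ_A|]$. The assumption $T\varepsilon_T'\le 1/2$ keeps us on the increasing branch of $h$, where $h(\bar P_e)\le h(\varepsilon_T')$ is valid. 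It also makes the union-bound block error $\le 1/2$, which is useful if one prefers a block-level Fano argument instead.

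\textbf{Step 4: entropy rates.} Since $Q_B^T$ is a function of $Q_A^T$, we have $H(Q_A^T\mid Q_B^T)=H(Q_A^T)-H(Q_B^T)$. Under \Cref{ass:quotient-reg}, stationarity and ergodicity give $\frac1T H(Q_A^T)=h_A+o_T(1)$ and $\frac1T H(Q_B^T)=h_B+o_T(1)$ (\Cref{rem:entropy-rate}), and convergence from a non-stationary start contributes only $o_T(1)$. Dividing by $T$ yields the displayed bound. The $\liminf$ statement follows because $h(\varepsilon_T')\to0$. The near-uniform remark follows by direct substitution.

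\textbf{Main obstacle.} The main obstacle is Step 2 together with the Markov chain in Step 1. $B$'s output is an action and belief produced by an $m_B$-node FSC rather than an explicit estimate of $Q_A$, so I must construct $\hat Q_t^A$ from $B$'s intent. I will take $\hat Q_t^A$ to be the $\calQ_A$-subclass, inside the known $\calQ_B$-class, whose intent is $\dintent$-nearest to $B$'s. If $B$'s intent is within distance $d$ of the true class's intent, the triangle inequality shows the nearest class is wrong only if $d\ge c_M/2$. This gives exactly the $2/c_M$ factor and shows that $\hat Q_t^A$ is a measurable function of $(M^T,Q_B^T)$.
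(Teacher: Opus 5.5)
Your proposal is correct and reaches the same inequality, but the entropy-penalty step takes a different route from the paper. The paper cuts the horizon into blocks of length $n_T=\lfloor\sqrt{T}\rfloor$, union-bounds per-step errors into block errors, and applies $|\calQ_A|^{n_T}$-ary Fano to each block. The hypothesis $T\varepsilon_T'\le 1/2$ is what keeps those block errors small, and $h(\varepsilon_T')$ is only added back at the end as slack. You instead apply Fano letter by letter, $H(Q_A^T\mid\hat Q_A^T,Q_B^T)\le\sum_t H(Q_t^A\mid\hat Q_t^A,Q_t^B)$, and average using concavity of $h$. This needs only $\bar P_e\le\varepsilon_T'\le 1/2$. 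Your route is shorter and more general. It gives $R_T\ge\frac{1}{T}H(Q_A^T\mid Q_B^T)-h(\varepsilon')-\varepsilon'\log|\calQ_A|$ at any fixed $\varepsilon$ with $\varepsilon'\le 1/2$, so the displayed bound does not need the $\varepsilon=O(1/T)$ restriction discussed in \Cref{rem:fano-regime}. It also avoids the paper's per-block union bound $n_T\varepsilon'$, which tacitly assumes a per-step error bound that the time-averaged distortion constraint does not supply. Two points need tightening. First, in Step~2 Markov's inequality gives $P_{e,t}\le 2\E[d_t]/c_M$, not $P_{e,t}\le\varepsilon_T'$ for each $t$; only the average is bounded, but that is all your Step~3 uses. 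Second, the nearest-intent rule makes $\hat Q_t^A$ a function of $(M^T,Q_B^T)$ only if $B$'s realized intent is such a function. That is the modeling premise that $B$'s effective state is (message, $\calQ_B$-class), which the paper also invokes in the proof of \Cref{thm:capacity}(i) and in \Cref{lem:fano}. Your triangle-inequality argument does not establish it.
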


\begin{proof}[Proof sketch]
$H(Q_A^T \mid Q_B^T) = T(h_A - h_B) + o(T)$ by quotient coarsening on the common history source. Since the decoder sees $(M^T, Q_B^T)$, the rate constraint gives $TR_T \geq I(Q_A^T; M^T \mid Q_B^T)$. A semantic Fano argument based on nearest-intent decoding from $(M^T, Q_B^T)$ turns the shrinking distortion assumption $\varepsilon_T = O(1/T)$ into the entropy penalty $h(\varepsilon_T') + \varepsilon_T' \log|\calQ_A| + o_T(1)$, producing the stated lower bound. Full proof in \Cref{app:proof-61}.
\end{proof}

\begin{remark}[Fano regime restriction]
\label{rem:fano-regime}
The Semantic Fano inequality (\Cref{lem:fano}) underlying \Cref{thm:converse} requires $\varepsilon = O(1/T)$ for the block error bound to be non-vacuous. This means the converse is formally operative only when the distortion tolerance shrinks with horizon length. For fixed~$\varepsilon$ and growing~$T$, the bound degrades. We therefore separate the empirical roles of the figures: \Cref{fig:phase-transition} remains a fixed-$\varepsilon$ illustration of the broader structural phase transition, while \Cref{fig:shrinking-eps} provides one empirical illustration in the same shrinking-distortion regime as the converse. Relaxing the converse to constant-$\varepsilon$ regimes is an open problem; a possible route is via the blowing-up lemma~\cite{csiszarkorner2011}.
\end{remark}

\begin{remark}[Phase transition is regime-independent]
\label{rem:constant-eps}
The $\varepsilon = O(1/T)$ restriction applies \emph{only} to the shrinking-distortion converse (\Cref{thm:converse}). The structural lower bound in \Cref{thm:capacity}(i) is a fixed-$\varepsilon$ statement, while the exponential achievability theorem of \Cref{thm:capacity}(ii) belongs only to the one-way memoryless benchmark regime. The experiments therefore illustrate the structural phase transition at fixed $\varepsilon = 0.1$ without being read as a pointwise empirical proof of the shrinking-distortion converse.
\end{remark}

%% ============================================================
\section{Wyner-Ziv Reduction}
\label{sec:wynerziv}

The semantic rate-distortion problem admits a one-way \emph{benchmark identification} with the classical Wyner-Ziv problem~\cite{wynerziv1976} (source coding with decoder side information) on quotient alphabets. The exactness split is as follows: the semantic converse matches the WZ converse exactly; the operational bridge is exact for memoryless (i.i.d.) quotient sources; \textbf{for general ergodic sources, the achievability direction is argued (not proved) via a causal pipeline with explicit mixing-time bounds} (\Cref{app:proof-wz})---a fully explicit FSC-level pathwise equivalence remains open. \emph{This identification is stated only in the common-history coarsening regime of \Cref{ass:common-history} and under one-way observability} (\Cref{ass:one-way}); the two-way and genuinely distinct-sensor cases remain open (\Cref{app:two-way}).

\begin{assumption}[One-Way Observability]
\label{ass:one-way}
Agent~$B$'s actions do not causally affect the sender-side source history:
\[
P(O_t^A \mid s_t, a_t^A, a_t^B) = P(O_t^A \mid s_t, a_t^A).
\]
\end{assumption}

This holds whenever~$A$ is a sensor or instructor and~$B$ is an actuator or learner---the natural model for human--AI alignment, where the AI acts and the human provides feedback that does not directly alter the source observed by the sender. Together with \Cref{ass:common-history}, it gives the coarsening-side-information setting used in the WZ and shrinking-distortion converse sections. It excludes fully cooperative Dec-POMDPs where both agents jointly affect the shared state and neither history is a deterministic coarsening of the other.

\begin{remark}[When distinct-sensor models reduce to common-history coarsening]
\label{rem:distinct-sensor}
The common-history regime (\Cref{ass:common-history}) is not as restrictive as it may first appear. A genuinely distinct-sensor model $(O_t^A, O_t^B)$ reduces to the common-history setting whenever there exists a shared sufficient statistic---for example, when both agents observe noisy versions of a common underlying signal~$X_t$ and one agent's observation is a stochastic degradation of the other's. Concretely, if $O_t^B - O_t^A - X_t$ forms a Markov chain (i.e., $B$'s observation is a further corruption of~$A$'s), then the quotient induced by~$B$'s observation history coarsens that of~$A$'s, recovering the coarsening relation of \Cref{prop:canonical-coarsening}. This covers sensor-controller pairs with shared physics but different sensor quality, and teacher-student pairs where the student sees a lossy version of the teacher's input. The genuinely non-reducible case---where $A$ and $B$ observe complementary aspects of the environment with no dominance relation---remains open and likely requires the two-way directed-information framework of \Cref{app:two-way}.
\end{remark}

\begin{proposition}[One-Way Wyner-Ziv Benchmark Identification]
\label{thm:wz}
Under Assumptions~\ref{ass:common-history}, \ref{ass:intent-meas}, \ref{ass:quotient-reg}, and~\ref{ass:one-way}:
\begin{enumerate}[nosep,label=(\roman*)]
\item \textbf{(Proved.)} Any semantic protocol achieving distortion~$D$ induces a valid WZ code: $\Rsem(D) \geq R_{\mathrm{WZ}}(D;\; \calQ_A, \calQ_B, \dintent)$.
\item \textbf{(Proved, i.i.d.; argued, ergodic.)} Any WZ code on $(\calQ_A, \calQ_B, \dintent)$ can be implemented as a causal pipelined semantic protocol. For memoryless (i.i.d.) quotient sources, the pipeline achieves exact WZ distortion (\Cref{cor:iid-achievability}). For general ergodic sources, the \emph{long-run time-averaged} semantic distortion converges to $D_{\mathrm{WZ}}$ via the ergodic theorem with explicit mixing-time bounds (\Cref{app:proof-wz}); a fully explicit FSC-level equality remains open.
\end{enumerate}
\end{proposition}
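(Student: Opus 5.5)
Both parts rest on \Cref{ass:common-history}, under which $Q_t^B=\kappa_{A\to B}(Q_t^A)$ pointwise, and on \Cref{ass:intent-meas}, under which $\dintent$ is a well-defined distortion matrix on $\calQ_A\times\calQ_A$ (more precisely, on $\calQ_A$ against the reconstruction alphabet of $B$'s induced intents). We therefore cast the semantic problem as a source-coding problem with source block $Q_A^T$, decoder side information $Q_B^T$, and single-letter distortion $\dintent$. The WZ function $R_{\mathrm{WZ}}(D;\calQ_A,\calQ_B,\dintent)$ is defined for this triple, and for ergodic sources we take its block-normalized limit $\lim_T \frac1T R^{(T)}_{\mathrm{WZ}}$.

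\textbf{Part (i) (converse).}
Fix a protocol $(E,D)$ at rate $R$ with $\Dintent\le D$.
\begin{enumerate}[nosep]
\item Show that the message block $M^T$ is a function of $A$'s history only. By \Cref{ass:one-way}, $B$'s actions do not feed back into $O^A$, so the joint law of $(Q_A^T,Q_B^T,M^T)$ factors as $P(q_A^T)\,\mathbf{1}\{q_B^T=\kappa(q_A^T)\}\,P(m^T\mid q_A^T)$. This is exactly the WZ Markov structure $M^T - Q_A^T - Q_B^T$, holding here trivially because of the deterministic coarsening.
\item Build the reconstruction. The decoder $D$ is a deterministic or randomized map of $(M^T,\text{$B$'s history})$, and $B$'s history determines $Q_B^T$. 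So $B$'s intent at time $t$ is a function $\hat q_t$ of $(M^T,Q_B^T)$ that lies in the reconstruction alphabet. Causality only restricts the decoder further, so restricting to causal decoders can only raise the rate.
\item Apply the standard WZ converse chain, taking auxiliaries $U_t:=(M^T,Q_B^{t-1},Q_{B,t+1}^T)$: $TR\ge H(M^T)\ge I(Q_A^T;M^T\mid Q_B^T)\ge\sum_t[I(Q_{A,t};U_t)-I(Q_{B,t};U_t)]$. Convexity of $R_{\mathrm{WZ}}$ in $D$ then yields $R\ge R_{\mathrm{WZ}}(D)$.

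In the i.i.d. case this is verbatim the classical argument. In the ergodic case we replace single-letter terms by the $n$-block version and pass to the limit, using \Cref{ass:quotient-reg} for the existence of entropy rates (\Cref{rem:entropy-rate}).
\end{enumerate}

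\textbf{Part (ii) (achievability).}
Take a WZ code of blocklength $n$ with encoder $f:\calQ_A^n\to[2^{nR}]$ and decoder $g:[2^{nR}]\times\calQ_B^n\to\hat\calQ^n$.
\begin{enumerate}[nosep]
\item Run a pipeline. During block $k$, $A$ buffers its quotient classes, which it can compute from its own history. It then streams $f$ of block $k$ over block $k+1$ at $R$ bits per step.
\item On the decoder side, $B$ decodes with its own side-information block $Q_B^n$ and, via $U$, loads into its $m_B$-node FSC the parameters realising the reconstructed intents for the delayed block. This requires that each reconstructed intent be realisable by some element of $\Theta_{m_B}$, which we take as part of the reconstruction alphabet. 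The one-block lag costs only $O(n/T)$ in average distortion.
\item For i.i.d. sources, successive blocks are independent and identically distributed, so the semantic distortion equals the WZ distortion exactly (\Cref{cor:iid-achievability}). \Cref{lem:lip-prop} transfers the class-level error to the value component.
\item For ergodic sources, we invoke the ergodic theorem on the block process, bounding the dependence between blocks by mixing coefficients $\beta(n)$. The resulting time-averaged distortion is $D_{\mathrm{WZ}}+O(\beta(n))+O(n/T)$.
\end{enumerate}

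\textbf{Expected obstacles.}
The main obstacle is the ergodic bridge in (ii). Inter-block dependence means the decoder's side information for block $k$ is correlated with past messages, and the pipeline delay means $B$ acts on stale reconstructions whose distortion must be charged to the correct time steps. We would first attempt a blocking-with-gaps argument: discard guard intervals of length $\ell\ll n$ so that the retained blocks are nearly independent, with total-variation error at most $\beta(\ell)$ per block, and apply the i.i.d. analysis. The subtle point in (i) is step 2, namely verifying that $B$'s bounded memory does not let the reconstruction depend on anything beyond $(M^T,Q_B^T)$; this follows from $Q_B$ being the coarsest sufficient statistic for $\Pi_{m_B,T_B}$.
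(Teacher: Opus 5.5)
Your achievability argument for (ii) follows the paper's Step~3 pipeline in \Cref{app:proof-wz} closely, and it inherits a genuine gap from it. $\Dintent$ compares $A$'s and $B$'s intents at the \emph{same} history $h_t$. In your pipeline, however, $B$'s parameters during a block, and the messages it receives during that block, are functions of the reconstruction of an \emph{earlier} block.

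For an i.i.d.\ source that earlier block is independent of the block being scored. So on every block, not only the first, $B$'s intent depends on the current source only through $Q_B$. The per-block distortion therefore cannot fall below the zero-rate value $\min_g \E[\dintent(Q_A,g(Q_B))]$, whatever $R$ is. The lag is not an $O(n/T)$ edge effect. The block independence you invoke to get exactness is exactly what makes the transmitted description useless for the block it governs.

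The ergodic leg runs backwards for the same reason. Guard intervals and mixing make the delayed description \emph{less} informative about the scored block. Letting $n\to\infty$, which is needed for the WZ code to approach $R_{\mathrm{WZ}}$, drives the distortion toward the zero-rate value, not toward $D_{\mathrm{WZ}}$. The paper's Step~3 makes the same move: it asserts that $\theta^{(k+1)}$ is independent of $Q_A^{(k+1)}$ \emph{and} that each block then has distortion $D_{\mathrm{WZ}}$.

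As a cross-check, take same-time scoring, an i.i.d.\ source, and messages computed from the quotient process. Then only $M_t$ carries information about $Q_t^A$. The per-step pigeonhole of \Cref{thm:capacity}(i) then rules out vanishing distortion unless $2^R$ is at least the number of positively visited $\calQ_A$-subclasses in each positive-mass $\calQ_B$-class. That threshold can lie far above $H(Q_A\mid Q_B)$, which is where the claimed i.i.d.\ exactness would place it.

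No bookkeeping fixes this. You have two options:
\begin{itemize}
\item Adopt a delay-tolerant fidelity criterion: score $B$'s behaviour on a later block against $A$'s intent on the block that was described. The pipeline then attains the WZ code's distortion with no mixing argument, but $\Rsem$ then means something different.
\item Keep same-time scoring and replace block WZ coding by zero-delay coding with side information, whose rate generally exceeds $R_{\mathrm{WZ}}(D)$.
\end{itemize}

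Your converse in (i) is the paper's Step~4 made explicit and is fine as far as it goes. You take side information $Q_B^T=\kappa_{A\to B}(Q_A^T)$, note that the WZ Markov condition holds trivially under deterministic coarsening, and run the classical converse chain, block-normalized in the ergodic case.

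The one weak point is your justification of step~2. What you need is that $B$'s internal state, and hence its intent, is measurable with respect to $(M^t,Q_B^t)$ plus private randomness. The fact that $\calQ_B$ is the \emph{coarsest} partition preserving observation laws for $\Pi_{m_B,T_B}$ does not imply this: an $m_B$-node FSC reading raw observations can occupy different nodes on histories that $\calQ_B$ merges. This has to be imposed as a modeling stipulation, which is what the paper does tacitly when it identifies $B$'s side information with $Q_B^T$.
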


\begin{corollary}[Exact I.I.D.\ Operational Achievability]
\label{cor:iid-achievability}
Under the conditions of \Cref{thm:wz}, if the quotient process $\{Q_t^A\}$ is i.i.d.\ (memoryless source), then the causal pipeline achieves $\Rsem(D) = R_{\mathrm{WZ}}(D)$ exactly. Successive blocks are independent, so $\theta^{(k+1)} = f(\hat{Q}_A^{(k-1)}, Q_B^{(k-1)})$ is independent of $Q_A^{(k+1)}$, and each block's distortion equals $D_{\mathrm{WZ}}$ exactly.
\end{corollary}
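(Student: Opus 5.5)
The plan is to pair the converse of \Cref{thm:wz}(i) with a matching achievability construction. Since $\{Q_t^A\}$ is i.i.d., the common-history coarsening of \Cref{ass:common-history} gives $Q_t^B = \kappa_{A\to B}(Q_t^A)$ pointwise, so the pairs $(Q_t^A,Q_t^B)$ are also i.i.d. The lower bound $\Rsem(D)\ge R_{\mathrm{WZ}}(D)$ is then exactly \Cref{thm:wz}(i), and no further work is needed on that side. What remains is the reverse inequality: for every $R>R_{\mathrm{WZ}}(D)$ I need a causal semantic protocol with rate at most $R$ and long-run $\Dintent\le D+\delta$. Letting $R\downarrow R_{\mathrm{WZ}}(D)$, $\delta\downarrow 0$ and using continuity of $R_{\mathrm{WZ}}$ in $D$ then closes the gap.

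For achievability I will use the classical Wyner--Ziv theorem on the finite alphabets $(\calQ_A,\calQ_B)$. By \Cref{ass:intent-meas}, $\dintent$ is a bounded distortion on $\calQ_A\times\calQ_A$. This gives, for block length $n$, an encoder $f_n:\calQ_A^n\to[2^{nR}]$ and a decoder $g_n:[2^{nR}]\times\calQ_B^n\to\calQ_A^n$ with expected per-letter distortion at most $D+\delta$. I will implement this as a pipeline in the sense of \Cref{thm:wz}(ii).
\begin{itemize}[nosep]
\item During block $k$, $A$ spreads the $nR$ bits of $f_n(Q_A^{(k-1)})$ over the $n$ steps, so the per-step rate is $R$ and the entropy condition of \Cref{def:protocol} holds.
\item At the end of block $k$, $B$ holds $Q_B^{(k-1)}$, which is a function of its own history, and decodes $\hat Q_A^{(k-1)}=g_n(\cdot)$.
\item Through the update map $U$, $B$ sets its parameters $\theta^{(k+1)}=f(\hat Q_A^{(k-1)},Q_B^{(k-1)})$ to the intent-matching configuration used in block $k+1$.
\end{itemize}

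The key step is to show that block $k+1$ incurs expected distortion exactly $D_{\mathrm{WZ}}$ rather than a mismatched quantity. The i.i.d.\ hypothesis supplies the needed independence: $\theta^{(k+1)}$ is measurable with respect to block $k-1$, and therefore independent of $Q_A^{(k+1)}$. The joint law of (configuration, current class) then factorizes, so the per-step distortion in block $k+1$ equals the WZ single-block expectation evaluated under the stationary marginal. One-way observability (\Cref{ass:one-way}) ensures that $B$'s reconfigured actions do not perturb the source statistics of later blocks, so the i.i.d.\ structure persists under the protocol. The first two warm-up blocks have bounded distortion (at most $T+3$ by \Cref{rem:scales}). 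Their contribution vanishes in the average over $T/n\to\infty$ blocks, or it can be removed by the block-level accounting in the statement.

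I expect the main obstacle to be this identification. The concern is whether $B$'s bounded $m_B$-node FSC, after reconfiguration, realizes the WZ reconstruction $\hat Q_A$ as an intent with distortion equal to $\dintent(Q_A,\hat Q_A)$. I will handle it by restricting to quotient-compatible sender policies (\Cref{rem:intent-meas}), so that intent is a function of the class. I will also choose $\theta$ so that $B$'s action and belief components reproduce $\mathrm{Intent}_A$ of the decoded class; since messages may reconfigure parameters arbitrarily within $\Theta_{m_B}$, this is feasible. If exact realization fails, the fallback is to bound the residual by \Cref{lem:lip-prop} and fold it into $\delta$.
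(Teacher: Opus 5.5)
Your outline is the paper's own argument. You take the converse from \Cref{thm:wz}(i), use the two-block-delay pipeline of Step~3 in \Cref{app:proof-wz}, and invoke independence of $\theta^{(k+1)}$ from $Q_A^{(k+1)}$ under i.i.d.\ blocks. The step you single out as key fails, and the paper's version of it has the same defect.

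The quantity $D_{\mathrm{WZ}}$ is $\E[\dintent(Q_A^{(k-1)},\hat Q_A^{(k-1)})]$. It is the expected distortion of a pair that the Wyner--Ziv code makes strongly dependent. The cost actually incurred during block $k+1$ is different: it is the distortion between the sender's intent at the \emph{current} classes $Q_A^{(k+1)}$ and the intent $B$ realizes under $\theta^{(k+1)}$. If, as you say, the joint law of configuration and current class factorizes, then $B$'s configuration carries no information about the classes it is being scored on. The best $B$ can do is use its own current $\calQ_B$-class. So the per-step cost is at least the zero-rate value $\min_g \E[\dintent(Q_A,g(Q_B))]$, whatever $R$ is. Independence destroys exactly the correlation that produces $D_{\mathrm{WZ}}$: for a memoryless source, a description of a past block is worthless for the present one. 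Your \Cref{lem:lip-prop} fallback cannot absorb this, because the residual is of order one, not $\delta$. The achievability half is therefore not established.

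This cannot be fixed by better bookkeeping within the stated operational model. $\Dintent$ scores $B$'s intent at the current history, and $B$ must act at every step from $(M^t,Q_B^t)$, so the semantic problem is a zero-delay coding problem. For memoryless sources, Neuhoff and Gilbert's theorem says causal codes do no better than time-sharing of scalar quantizers. That amounts to Wyner--Ziv with the auxiliary $U$ restricted to deterministic functions of $Q_A$, which in general lies strictly above $R_{\mathrm{WZ}}(D)$. For a concrete case, take a single $\calQ_B$-class with two equiprobable $\calQ_A$-subclasses at intent distance $c$. Causal codes need rate $1-2D/c$, while $R_{\mathrm{WZ}}(D)=1-h(D/c)$, which is strictly smaller for $0<D<c/2$.

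What survives is the one-sided bound of \Cref{thm:wz}(i). Equality would need a different fidelity criterion, for instance scoring $B$'s reconstruction against the delayed block it decodes, and then the statement is just classical Wyner--Ziv. Separately, you assert that messages can set $\theta$ so that an $m_B$-node FSC reproduces $\mathrm{Intent}_A$ of any decoded class. That is unsupported, since the belief and value components of intent are not free parameters in $\Theta_{m_B}$.
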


\begin{remark}[Ergodic long-run achievability (argued)]
\label{rem:wz-open}
The converse leg of \Cref{thm:wz} is exact. The only non-exact leg is the ergodic achievability direction: the causal pipeline introduces a two-block delay, so $\theta^{(k+1)}$ is correlated with $Q_A^{(k+1)}$ through the source dependence. Under geometric mixing with rate $\rho < 1$ (guaranteed by \Cref{ass:quotient-reg} on a finite state space), the correlation decays as $\rho^{2n}$ across the two-block gap. For any $\delta > 0$, choosing blocklength $n \geq (2\log(1/\delta))/\log(1/\rho)$ ensures per-block distortion deviation $\leq \delta$, yielding $\bar{D}_K \leq D_{\mathrm{WZ}} + \delta + D_0/K$. The ergodic theorem then gives almost-sure convergence of the time-averaged distortion (\Cref{app:proof-wz}). A fully explicit pathwise FSC-level equivalence remains open.
\end{remark}

\begin{corollary}[One-Way Critical-Rate Benchmark]
\label{cor:exact-rcrit}
Under Assumptions~\ref{ass:common-history}, \ref{ass:intent-meas}, \ref{ass:quotient-reg}, and~\ref{ass:one-way} (i.e., the conditions of \Cref{thm:wz}), denote the conditional entropy rate $\bar{h}(Q_A \mid Q_B) := \lim_{T \to \infty} \frac{1}{T} H(Q_A^T \mid Q_B^T)$. Then the lossless Wyner-Ziv benchmark on quotient alphabets is
\[
\Rcrit^{\mathrm{WZ}} \;:=\; R_{\mathrm{WZ}}(0) \;=\; \bar{h}(Q_A \mid Q_B) \;=\; \hA - \hB,
\]
where the last equality uses $H(Q_A^T, Q_B^T) = H(Q_A^T)$ (since $Q_B^T$ is determined by $Q_A^T$ via the coarsening $\calQ_A \preceq \calQ_B$). For i.i.d.\ quotient distributions, $\bar{h}(Q_A \mid Q_B) = H(Q_A \mid Q_B) = H(Q_A) - H(Q_B)$. The log-cardinality form $\log|\calQ_A| - \log|\calQ_B|$ is the further special case of uniform distributions.
Under the i.i.d.\ exact bridge and the ergodic long-run bridge of \Cref{thm:wz}, $\Rcrit^{\mathrm{WZ}}$ is the one-way benchmark for the semantic critical rate.
\end{corollary}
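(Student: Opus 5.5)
The statement to prove is \Cref{cor:exact-rcrit}. The plan is to read it as three layers: the lossless endpoint of \Cref{thm:wz}, an entropy-rate identity for the coarsened quotient process, and the specializations to i.i.d.\ and uniform sources.

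\textbf{Step 1: the lossless WZ rate equals the conditional entropy rate.} I would set $D=0$ in the quotient-alphabet WZ problem $(\calQ_A,\calQ_B,\dintent)$.
\begin{itemize}[nosep]
\item By \Cref{ass:intent-meas}, $\dintent$ is a class-level distortion matrix on $\calQ_A\times\calQ_A$. I would argue it vanishes only on the diagonal, since distinct classes are separated by at least $c_M>0$ on merged pairs. This uses \Cref{rem:intent-meas} and the positivity discussion of $c_M$.
\item Hence zero expected distortion forces $\hat Q_A=Q_A$ almost surely. So $R_{\mathrm{WZ}}(0)$ is the Slepian--Wolf rate with decoder side information.
\item In the i.i.d.\ case this rate is $H(Q_A\mid Q_B)$. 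Here I invoke the classical fact that $R_{\mathrm{WZ}}(0)=R_{X|Y}(0)$, since there is no rate loss at zero distortion.
\item For the stationary ergodic case (\Cref{ass:quotient-reg}), I would use the Cover extension of Slepian--Wolf to ergodic sources. It gives the rate $\lim_T \frac1T H(Q_A^T\mid Q_B^T)$.
\item This limit exists by stationarity. Both $\frac1T H(Q_A^T,Q_B^T)$ and $\frac1T H(Q_B^T)$ converge.
\end{itemize}

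\textbf{Step 2: the identity $\bar h(Q_A\mid Q_B)=\hA-\hB$.} By \Cref{ass:common-history} and \Cref{prop:canonical-coarsening}, $Q_t^B=\kappa_{A\to B}(Q_t^A)$ pointwise, so $Q_B^T$ is a deterministic function of $Q_A^T$. The chain rule then gives
\[
H(Q_A^T\mid Q_B^T)=H(Q_A^T,Q_B^T)-H(Q_B^T)=H(Q_A^T)-H(Q_B^T).
\]
Dividing by $T$ and letting $T\to\infty$ yields $\hA-\hB$, using the entropy rates of Remark~\ref{rem:entropy-rate}.

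\textbf{Step 3: specializations and the benchmark role.}
\begin{itemize}[nosep]
\item \emph{I.i.d.\ case:} the $T$-fold entropies factorize, giving $H(Q_A\mid Q_B)=H(Q_A)-H(Q_B)$.
\item \emph{Uniform case:} with $Q_A$ uniform on $\calQ_A$ and $Q_B$ uniform on $\calQ_B$ (which requires equal-size fibers of $\kappa$), this becomes $\log|\calQ_A|-\log|\calQ_B|$.
\item \emph{Benchmark claim:} the converse direction comes from \Cref{thm:wz}(i) with $D=0$, or more precisely its limit as $D\downarrow0$. Achievability is exact for i.i.d.\ sources via \Cref{cor:iid-achievability}, and holds in the long-run sense for ergodic sources via \Cref{rem:wz-open}.
\end{itemize}

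\textbf{Main obstacle.} The hard part is Step 1's claim that zero $\dintent$-distortion is equivalent to lossless recovery, together with continuity of $R_{\mathrm{WZ}}(D)$ at $D=0$.

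For the first part, I need $\dintent([h]_A,[h']_A)>0$ for every distinct pair of $\calQ_A$-classes that the decoder could confuse. I would argue this is enough. Classes lying in different $\calQ_B$-classes are already separated by the side information, so only pairs within a common $\calQ_B$-class matter, and those are exactly the pairs controlled by $c_M>0$.

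For the second part, continuity follows from the convexity of $R_{\mathrm{WZ}}(D)$ and its finiteness on the finite alphabet.
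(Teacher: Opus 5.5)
Your proposal is correct and follows essentially the paper's route: you identify the $D=0$ Wyner--Ziv endpoint with the Slepian--Wolf rate (Cover's extension in the ergodic case), obtain $\bar{h}(Q_A \mid Q_B)=\hA-\hB$ from $Q_B^T=\kappa_{A\to B}(Q_A^T)$ and the chain rule, treat the i.i.d.\ and uniform (equal-fiber) cases as specializations, and read off the benchmark role from \Cref{thm:wz} and \Cref{cor:iid-achievability}. Two small tightenings are needed: convexity alone does not give right-continuity of $R_{\mathrm{WZ}}$ at $D=0$ (use compactness over finite-alphabet test channels, or a per-letter Fano bound), and your reduction to merged pairs needs the triangle inequality for the pseudometric $\dintent$ to rule out a shared zero-distortion reconstruction outside the fiber (alternatively, \Cref{rem:cM-positive} separates all distinct $\calQ_A$-classes directly).
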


\begin{table}[t]
\centering\footnotesize
\begin{tabular}{@{}lll@{}}
\toprule
\textbf{Form of $\Rcrit$} & \textbf{Expression} & \textbf{Regime} \\
\midrule
Log-cardinality (worst-case) & $\log|\calQ_A| - \log|\calQ_B|$ & Uniform quotient dist.\ \\
WZ lossless rate (i.i.d.) & $H(Q_A \mid Q_B)$ & i.i.d.\ source \\
Conditional entropy rate & $\bar{h}(Q_A \mid Q_B) = \hA - \hB$ & Stationary ergodic \\
\bottomrule
\end{tabular}
\caption{Hierarchy of critical-rate characterizations, from conservative (top) to sharp (bottom). The log-cardinality form is policy-independent; the entropy-rate form depends on the joint quotient dynamics. Numerical values for Chain5 appear in \Cref{app:experiments}.}
\label{tab:rcrit}
\end{table}

\paragraph{Source-distribution dependence.}
The benchmark $\Rcrit^{\mathrm{WZ}} = \bar{h}(Q_A \mid Q_B)$ depends on the policy-induced visitation over quotient classes; the log-cardinality form is the worst-case (uniform) upper bound. \Cref{tab:rcrit} summarizes the hierarchy. Experiments use random policies (near-uniform visitation), so the empirical knee tracks the log-cardinality bound; under structured policies the true benchmark can be substantially lower (see \Cref{app:experiments}).

The WZ single-letter formula, gap closure, and inherited strong converses/error exponents/finite-blocklength bounds follow from the reduction; see \Cref{app:wz-corollaries} for details.

\begin{proposition}[Encoder-Only Penalty Relative to the WZ Benchmark]
\label{prop:ib-subsumption}
Let $X = Q_A$ and $Y = Q_B$ with the deterministic coarsening $f\colon \calQ_A \to \calQ_B$, so that $Q_B - Q_A - T$ is a Markov chain for any bottleneck variable~$T$.
For any encoder $T$ inducing distortion $D(T)$ under~$\dintent$,
\begin{equation}\label{eq:ib-wz}
I(Q_A;\, T)
\;\geq\;
R_{\mathrm{WZ}}\!\bigl(D(T)\bigr) \;+\; I(Q_B;\, T),
\end{equation}
where $R_{\mathrm{WZ}}(D) = \min_{p(u \mid q_A):\, \E[\dintent] \leq D}\bigl[I(Q_A; U) - I(Q_B; U)\bigr]$ is the one-way WZ benchmark on quotient alphabets (\Cref{cor:gap-closure}).
Consequently, if $T_\Delta$ is IB-optimal at relevance~$\Delta$ for
\[
R_{\mathrm{IB}}(\Delta) := \min_{p(t \mid q_A):\; I(T;\,Q_B) \geq \Delta}\, I(Q_A;\, T),
\]
and $D^{*}(\Delta)$ is the distortion induced by~$T_\Delta$, then
\[
R_{\mathrm{IB}}(\Delta)
\;\geq\;
R_{\mathrm{WZ}}\!\bigl(D^{*}(\Delta)\bigr) \;+\; \Delta.
\]
At the lossless endpoint, the encoder-only / side-information gap is exact:
\[
R_{\mathrm{enc}}(0) - R_{\mathrm{WZ}}(0)
\;=\;
H(Q_A) - H(Q_A \mid Q_B)
\;=\;
H(Q_B),
\]
where $R_{\mathrm{enc}}(0) = H(Q_A)$ is the standard lossless rate without decoder side information.
\end{proposition}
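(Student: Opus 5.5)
The plan is to treat the bottleneck variable $T$ itself as a feasible auxiliary $U$ in the WZ single-letter minimization, and then do the IB and lossless-endpoint bookkeeping.

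\textbf{Step 1 (feasibility of $T$).} Since $T$ is generated from $Q_A$ by a kernel $p(t\mid q_A)$ and $Q_B = f(Q_A)$, the Markov chain $Q_B - Q_A - T$ holds. This is exactly the constraint on the WZ auxiliary $U$. By the definition of $D(T)$, the decoder that observes $T$ and, if it helps, also $Q_B$, attains expected $\dintent$ at most $D(T)$. Any decoder that ignores $Q_B$ is a special case of a WZ reconstruction $\hat{q}(u, q_B)$, so $U = T$ is feasible for the minimization defining $R_{\mathrm{WZ}}(D(T))$. Hence
\[
R_{\mathrm{WZ}}(D(T)) \leq I(Q_A;T) - I(Q_B;T),
\]
which rearranges to \eqref{eq:ib-wz}. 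The only subtlety is matching the convention for $D(T)$ to the WZ reconstruction class. I would state explicitly that $D(T)$ is the distortion of the best reconstruction from $T$, or from $(T, Q_B)$. Either choice is dominated by the WZ optimum, and by \Cref{ass:intent-meas} $\dintent$ is a valid class-level distortion matrix, so the WZ formula of \Cref{cor:gap-closure} applies.

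\textbf{Step 2 (IB consequence).} Specialize Step 1 to $T = T_\Delta$. IB-optimality gives $R_{\mathrm{IB}}(\Delta) = I(Q_A;T_\Delta)$, and IB feasibility gives $I(T_\Delta;Q_B) \geq \Delta$. Substituting into \eqref{eq:ib-wz} yields
\[
R_{\mathrm{IB}}(\Delta) \geq R_{\mathrm{WZ}}(D^*(\Delta)) + I(Q_B;T_\Delta) \geq R_{\mathrm{WZ}}(D^*(\Delta)) + \Delta.
\]
Existence of a minimizer follows from compactness of the finite-alphabet kernel set and continuity of mutual information. Alternatively, one can argue with $\delta$-optimal encoders and let $\delta \to 0$.

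\textbf{Step 3 (lossless endpoint).} By standard lossless coding, $R_{\mathrm{enc}}(0) = H(Q_A)$. By \Cref{cor:exact-rcrit} in its i.i.d./single-letter form, $R_{\mathrm{WZ}}(0) = H(Q_A\mid Q_B)$; this also follows from Slepian–Wolf, since lossless WZ coincides with it. The identity $H(Q_A) - H(Q_A\mid Q_B) = I(Q_A;Q_B) = H(Q_B)$ uses $H(Q_B\mid Q_A) = 0$, which holds because $f$ is deterministic. There is one point to verify: that $\dintent$-distortion $0$ forces exact recovery of $Q_A$. This requires distinct $\calQ_A$-classes to have positive intent distance, which is guaranteed by $c_M > 0$ (\Cref{rem:cM-positive}) within each merged $\calQ_B$-class.

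I expect Step 1's decoder-convention matching to be the only real obstacle. Everything else is a direct chain-rule computation.
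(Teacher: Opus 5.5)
Your proposal is correct and follows the paper's proof essentially step for step. You take $T$ as a feasible WZ auxiliary via $Q_B - Q_A - T$ to get $R_{\mathrm{WZ}}(D(T)) \leq I(Q_A;T) - I(Q_B;T)$, specialize to $T_\Delta$ using $I(T_\Delta;Q_B)\geq\Delta$, and get the lossless gap from \Cref{cor:exact-rcrit} together with $H(Q_B\mid Q_A)=0$. Your added checks make explicit what the paper's sketch leaves implicit: fixing the decoder convention for $D(T)$, and noting that zero $\dintent$-distortion must force exact recovery of $Q_A$ (via $c_M>0$) so that $R_{\mathrm{WZ}}(0)=H(Q_A\mid Q_B)$.
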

\begin{proof}[Proof sketch]
For any encoder~$T$, the induced pair $(T, D(T))$ is feasible for the WZ objective at distortion~$D(T)$, so by definition of~$R_{\mathrm{WZ}}$,
\[
R_{\mathrm{WZ}}\!\bigl(D(T)\bigr) \;\leq\; I(Q_A;\, T) - I(Q_B;\, T).
\]
Rearranging gives~\eqref{eq:ib-wz}. Applying the inequality to an IB-optimal encoder~$T_\Delta$ yields the second claim. For the lossless endpoint, the encoder-only rate is the standard lossless source-coding rate $R_{\mathrm{enc}}(0) = H(Q_A)$, while \Cref{cor:exact-rcrit} gives $R_{\mathrm{WZ}}(0) = H(Q_A \mid Q_B)$; subtracting yields the exact gap~$H(Q_B)$ because $Q_B$ is a deterministic function of~$Q_A$.
\end{proof}

\begin{remark}[IB vs.\ semantic R-D: what is derived vs.\ pre-specified]
\label{rem:ib-distinction}
In the IB framework, the relevance variable~$Y$ is chosen by the designer; in our framework, $Y = Q_B$ is \emph{derived} from agent~$B$'s computational capacity via the quotient functor~$Q$.
\Cref{prop:ib-subsumption} shows that once this identification is made, encoder-only bottlenecks such as IB-style compressions pay an additive penalty relative to the side-information-aware WZ benchmark; at zero distortion, that penalty is exactly the side-information entropy~$H(Q_B)$.
\end{remark}

%% ============================================================
\section{Achievability and Constructive Schemes}
\label{sec:achievability}

\begin{theorem}[Alignment Traversal]
\label{thm:traversal}
For agent classes $\Pi_A$ and $\Pi_B$ connected by intermediate classes $\{\Pi_i\}$ with quotient morphisms $\varphi_i: \calQ_i \to \calQ_{i+1}$ having Lipschitz constants~$L_i$,
\[
\dQ(A, B \mid M) \leq \sum_{i=1}^{k-1} L_i \cdot \dQ(\Pi_i, \Pi_{i+1} \mid M),
\]
and therefore
\[
\Rsem(A \to B;\, \varepsilon) \leq \sum_i \Rsem(\Pi_i \to \Pi_{i+1};\, \varepsilon/k).
\]
Proof in \Cref{app:proof-traversal}.
\end{theorem}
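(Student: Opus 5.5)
The plan is to prove the two inequalities separately. The first, a bound on $\dQ$, is a telescoping triangle-inequality argument for composed morphisms. The second, a bound on $\Rsem$, is a relaying construction that concatenates protocols along the chain and splits the distortion budget evenly.

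\textbf{Step 1 (morphism bound).} Set $\Pi_1 = \Pi_A$ and $\Pi_k = \Pi_B$. For each $i$, pick a near-optimal morphism $\psi_i:\calQ_i\to\calQ_{i+1}$ attaining $\dQ(\Pi_i,\Pi_{i+1}\mid M)$ up to $\eta/k$. Define the composite $\Psi := \psi_{k-1}\circ\cdots\circ\psi_1$. This is a morphism $\calQ_A\to\calQ_B$, since initial-class preservation and transition-respect are closed under composition. For $[h]\in\calQ_A$, write $x_1=[h]$ and $x_{i+1}=\psi_i(x_i)$. The triangle inequality for $\Wone$ gives
\[
\Wone\bigl(P_M(\cdot|x_1),P_M(\cdot|x_k)\bigr)\le\sum_{i=1}^{k-1}\Wone\bigl(P_M(\cdot|x_i),P_M(\cdot|x_{i+1})\bigr).
\]
The $i$-th term compares a class with its image, and $x_i$ is itself the image of $x_1$ under earlier maps. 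The hypothesis that $\varphi_i$ is $L_i$-Lipschitz is what transports the one-step sup bound of stage $i$ through the later stages, producing the factor $L_i$. I would interpret $L_i$ as the Lipschitz constant, with respect to $\Wone$ on predictive laws, of the downstream composition acting on stage-$i$ discrepancies, and I would state that reading explicitly. Taking the sup over $[h]$ and letting $\eta\to0$ yields the first inequality.

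\textbf{Step 2 (rate bound by relaying).} Fix rates $R_i > \Rsem(\Pi_i\to\Pi_{i+1};\varepsilon/k)$ with protocols $(E_i,D_i)$ achieving $\Dintent\le\varepsilon/k$. The relay works as follows.
\begin{itemize}
\item The intermediate agent $\Pi_{i+1}$ runs $D_i$ on the messages from stage $i$.
\item It then acts as sender for stage $i+1$, applying $E_{i+1}$ to its own reconstructed history or class.
\item Agent $A$ sends all $k-1$ message streams, simulating the intermediate agents internally. This is possible because each $\Pi_i$ is coarser than $A$ in the common-history regime of \Cref{ass:common-history}, so $A$ can compute everything they would compute.
\end{itemize}
The total per-step rate is $\sum_i R_i$, because message-alphabet sizes multiply and the conditional entropy constraint of \Cref{def:protocol} is subadditive over the concatenated streams. $B$ keeps only the final stage's decoder $D_{k-1}$, so its $m_B$-node capacity is respected.

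\textbf{Step 3 (distortion accumulation).} I would show that $\dintent$ satisfies a triangle inequality along the chain. Each of its three components ($\Wone$ on beliefs, TV on actions, absolute value difference) is a pseudometric, so
\[
\dintent(A,B\mid h)\le\sum_i \dintent(\Pi_i,\Pi_{i+1}\mid h).
\]
Taking expectations gives total distortion at most $(k-1)\varepsilon/k\le\varepsilon$. Taking the infimum over the $R_i$ then gives the stated rate bound.

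\textbf{Main obstacle.} The hard part is Step 3's change of measure. Each stage's distortion $\Dintent(\Pi_i,\Pi_{i+1})$ is defined under $h\sim P_M^{\pi_i}$, whereas the total uses $h\sim P_M^{\pi_A}$. Also, stage $i+1$'s input is the reconstruction produced by $D_i$ rather than the true stage-$i$ intent. I would first try to use \Cref{ass:one-way} and the common-history regime: the history process is then driven by $A$ alone, so all expectations are taken under the same law, and stage $i+1$ is fed the true class $Q_t^{i+1}=\kappa(Q_t^A)$ rather than the reconstruction. If that fails, I would absorb the mismatch using \Cref{prop:val-bound}, \Cref{lem:lip-prop} and the Lipschitz constants $L_i$, at the cost of constants. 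In that case the claim should be read with $\varepsilon/k$ possibly replaced by $\varepsilon/(k\max_i L_i)$.
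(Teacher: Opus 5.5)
Your route is the paper's: telescope the $\Wone$ triangle inequality along a composite morphism $\calQ_A\to\calQ_B$, then chain the stage protocols with an $\varepsilon/k$ budget split so that rates add. In Step~1, however, the Lipschitz constants do no work. Your explanation of where $L_i$ enters matches no step of your argument. Because you use near-optimal $\psi_i$ and $x_i\in\calQ_i$, each telescoped term satisfies $\Wone(P_M(\cdot|x_i),P_M(\cdot|\psi_i(x_i)))\le\dQ(\Pi_i,\Pi_{i+1}\mid M)+\eta/k$ directly, with nothing transported through later stages. What you actually prove is $\dQ(A,B\mid M)\le\sum_i\dQ(\Pi_i,\Pi_{i+1}\mid M)$. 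This gives the stated bound when every $L_i\ge1$, which the statement implicitly requires anyway: for $k=2$ it reads $\dQ(A,B\mid M)\le L_1\,\dQ(A,B\mid M)$. The paper keeps the given $\varphi_i$ and attributes the factor $L_i$ to their Lipschitz property. Your near-optimal choice is cleaner, so drop the ``downstream composition'' reinterpretation.

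The genuine gap is in Step~3, and the paper's one-line composition passes over it as well.
\begin{itemize}
\item The pointwise chain $\dintent(A,B\mid h)\le\sum_i\dintent(\Pi_i,\Pi_{i+1}\mid h)$ follows from the triangle inequality only if consecutive terms share their middle point. The intent $\Pi_{i+1}$ holds as stage-$(i+1)$ sender must be exactly the message-reconfigured intent it realizes as stage-$i$ receiver, and all terms must be averaged under one history law.
\item A protocol certifying $\Rsem(\Pi_{i+1}\to\Pi_{i+2};\varepsilon/k)$ guarantees distortion $\varepsilon/k$ only for a fixed sender policy, with $h\sim P_M^{\pi_{i+1}}$. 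Nothing says it still works when the sender's intent is a random, message-dependent reconfiguration and histories follow $A$'s law.
\item Your fix (a) does not repair this. If stage $i+1$ encodes $\Pi_{i+1}$'s own intent from its true class while $B$ keeps only $D_{k-1}$, the stages decouple. $B$ then tracks $\mathrm{Intent}_{\Pi_{k-1}}$, and the triangle inequality leaves the intrinsic gap between $\mathrm{Intent}_A(h)$ and $\mathrm{Intent}_{\Pi_{k-1}}(h)$, which no stage controls.
\item Fix (b) merely rescales the budget. The value-propagation bounds you cite bound value gaps by behavioural gaps; they cannot transfer a guarantee proved for one sender to another sender.
\end{itemize}
The missing ingredient is a robustness hypothesis. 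Each stage must achieve $\varepsilon/k$ uniformly over the sender intents the previous stage can deliver, under the common history law driven by $A$, with each intermediate forwarding its decoded intent. Under that reading your relay does give distortion $(k-1)\varepsilon/k\le\varepsilon$ at rate $\sum_i R_i$.

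Feeding true classes as in (a) does fit a different scheme. There, along a refinement chain, $B$ uses all $k-1$ streams and decodes successively from coarse to fine, each reconstructed class serving as side information for the next stage. But in that scheme errors propagate through the reconstructed side information, so the accounting needs near-lossless stages rather than an $\varepsilon/k$ intent budget.
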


\paragraph{Single-letter achievability.}
The worst-case over $\Pim$ prevents single-letter coding theorems. Under a memoryless reference policy distribution $\mu \in \Delta(\Pim)$ (\textbf{strong assumption}; the one-way WZ reduction removes the memoryless restriction on the rate-distortion benchmark under one-way observability, but the single-letter \emph{formula} below is exact only for memoryless~$\mu$), the average-case rate-distortion decomposes into independent per-step problems:

\begin{theorem}[Memoryless Single-Letter Achievability]
\label{thm:single-letter}
If the reference policy distribution~$\mu$ is memoryless, then
\[
R_\mu(\varepsilon) = \min_{P(M \mid \mathrm{Intent}):\, \E[\dintent] \leq \varepsilon} I(\mathrm{Intent};\, M).
\]
Under one-way observability, the WZ benchmark of \Cref{thm:wz} governs the general stationary-ergodic problem at the benchmark level, but the single-letter formula above is exact only for memoryless~$\mu$.
\end{theorem}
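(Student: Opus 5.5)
We prove the memoryless single-letter achievability theorem (\Cref{thm:single-letter}) by the classical rate-distortion route on the intent alphabet. The memorylessness of~$\mu$ is what makes the intent process i.i.d.\ across steps.

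\textbf{Step 1: reduction to an i.i.d.\ source.} First we fix the source. Under a memoryless reference policy distribution~$\mu$, a fresh policy draw at each step makes the per-step intent variables $\mathrm{Intent}_t = (b_{h_t}^A, \pi_A(\cdot\mid h_t), V^{\pi_A}(h_t))$ i.i.d.\ with a common law~$P_{\mathrm{Intent}}$. Since $\dintent$ is a per-step additive distortion, the expected intent distortion $\Dintent$ over a horizon is the time average of per-letter distortions. This is bounded by Remark~\ref{rem:scales}, using $\|R\|_\infty\le 1$. By Assumption~\ref{ass:intent-meas}, the intent takes finitely many values indexed by $\calQ_A$-classes, so the source alphabet is finite. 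Shannon's rate-distortion theorem for a finite-alphabet discrete memoryless source with bounded single-letter distortion~\cite{shannon1959} is therefore available.

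\textbf{Step 2: achievability.} For any test channel $P(M\mid\mathrm{Intent})$ with $\E[\dintent]\le\varepsilon$ and any $\delta>0$, we would use the standard random-codebook, joint-typicality encoder at rate $I(\mathrm{Intent};M)+\delta$. This gives block distortion at most $\varepsilon+\delta$ for large block length. We then realize it as a protocol in the sense of Definition~\ref{def:protocol}.

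\begin{itemize}[nosep]
\item The encoder~$E$ maps a block of $A$'s histories to a codeword index and spreads it over the block at $R$ bits/step.
\item The decoder update $U$ sets $B$'s FSC parameters so that its action law reproduces the reconstruction symbol.
\end{itemize}

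The reconstruction alphabet is the set of intents realizable by $m_B$-node parameter settings in $\Theta_{m_B}$, and $U$ is allowed to set these. The one-block delay is handled by the i.i.d.\ independence argument of \Cref{cor:iid-achievability}: the parameters set for block $k+1$ are independent of block-$(k+1)$ intents, so no distortion penalty arises.

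\textbf{Step 3: converse.} Given any protocol with $\Dintent\le\varepsilon$ over $n$ steps, we have
\[
nR \ge H(M^n) \ge I(\mathrm{Intent}^n;M^n) \ge \sum_t I(\mathrm{Intent}_t;M^n).
\]
The last inequality uses independence of the $\mathrm{Intent}_t$. Each term is at least $R_\mu(\E\dintent_t)$ by definition of the single-letter minimum. Convexity of the single-letter function $R_\mu(\cdot)$ together with Jensen's inequality then gives $R\ge R_\mu(\varepsilon)$. Continuity in $\varepsilon$ closes the $\delta$-gap from Step 2.

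The main obstacle is Step 2's realizability. The reconstruction symbol must be an intent that $B$'s bounded FSC can actually enact, and the minimization in the statement is implicitly over test channels whose outputs lie in that realizable set. I would first make this explicit by restricting $M$'s reconstruction alphabet to $\Theta_{m_B}$-realizable intents. If $B$'s own belief component $b_h^B$ cannot match $b_h^A$, the residual $\Wone$ term would be absorbed into $\varepsilon$ or excluded by the theorem's implicit assumption. The final paragraph of the statement, that WZ governs the stationary ergodic case, is just a pointer to \Cref{thm:wz} and needs no separate argument.
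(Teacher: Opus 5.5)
Your main line matches the paper's: memoryless~$\mu$ makes the intent sequence i.i.d., and then classical single-letter rate-distortion theory is invoked. Your converse, $nR \ge H(M^n) \ge I(\mathrm{Intent}^n;M^n) \ge \sum_t I(\mathrm{Intent}_t;M^n)$ followed by convexity, is actually the sound version of the paper's step. The paper asserts the equality $I(\mathrm{Intent}^T;M^T)=\sum_t I(\mathrm{Intent}_t;M_t)$, but for a block code only the inequality holds.

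The genuine gap is the protocol-realization part of Step~2. You argue that $B$'s block-$(k+1)$ parameters are computed from the decoded block $k-1$, hence are independent of the block-$(k+1)$ intents, ``so no distortion penalty arises.'' That inference is backwards. The distortion $\dintent(A,B\mid h)$ compares $A$ and $B$ at the \emph{same} history, so in block $k+1$ the receiver must match the block-$(k+1)$ intents. Everything it has received by then (the decoded block $k-1$ and the block-$k$ codeword arriving in real time) is independent of those intents. The messages therefore contribute nothing in that block: its distortion is what $B$ achieves with no message, i.e.\ the zero-rate value $\min_\theta \E[\dintent]$ when $B$ has no side information. Independence is what makes the delayed codeword useless, not what makes the delay free. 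Citing \Cref{cor:iid-achievability} does not help, since it rests on the same inference.

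A better pipeline cannot repair this if $R_\mu$ is read as the operational rate under \Cref{def:protocol}, where $B$ acts at step $t$ on $M^t$. Every such scheme is then a zero-delay code for an i.i.d.\ source. With $\hat I_t = g_t(M^t)$ you get $\sum_t H(M_t\mid M^{t-1}) \ge \sum_t H(\hat I_t\mid M^{t-1})$. Since $M^{t-1}$ depends only on $\mathrm{Intent}^{t-1}$ and encoder randomness, each term is at least the output entropy of a time-shared scalar quantizer of $\mathrm{Intent}_t$ meeting that step's distortion. This is Neuhoff and Gilbert's causal-coding bound, and it generally strictly exceeds $\min I(\mathrm{Intent};M)$. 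For a uniform binary source with Hamming distortion it gives $1-2\varepsilon$ versus $1-h(\varepsilon)$.

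So the single-letter formula is achievable only when $R_\mu$ is a block rate-distortion quantity with non-causal reconstruction. That is how the paper's sketch implicitly treats it: it applies standard rate-distortion theory directly and builds no real-time protocol. Under that reading, delete the pipelining. Then make your realizability restriction an explicit hypothesis rather than leaving the residual $\mathcal{W}_1$ belief term to be ``absorbed'': reconstructions are $\Theta_{m_B}$-realizable intents, and $\dintent$ is a function of source intent and reconstruction. With that, your random-coding achievability plus your converse give a complete proof.
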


\begin{proof}[Proof sketch]
Under memoryless~$\mu$, the intent sequence $\{\mathrm{Intent}_t\}$ is i.i.d.\ (each drawn independently from~$\mu$'s induced distribution on quotient classes). The $T$-step mutual information decomposes: $I(\mathrm{Intent}^T; M^T) = \sum_t I(\mathrm{Intent}_t; M_t)$, and standard single-letter rate-distortion theory~\cite{cover2006} applies to each term.
\end{proof}

\begin{remark}[Beyond the memoryless reference distribution]
For Markovian or more general reference processes, the i.i.d.\ single-letter objective can still be used as a conservative constructive upper bound by restricting attention to memoryless encoders that ignore temporal correlation. The true operational rate can only be lower; the theorem above does not claim an exact single-letter characterization outside the memoryless setting.
\end{remark}

\begin{definition}[Semantic Codebook Construction]
\label{def:codebook}
Compute quotient $\calQ_A$; for each class $[h]$, compute centroid intent $\bar{I}([h])$; quantize to $2^R$ codewords via $k$-means++~\cite{arthur2007}. Encoder maps $[h]$ to nearest codeword; decoder maps codeword to recommended action distribution. This achieves $\Dintent \leq O(|\calQ_A|^{1/d} \cdot 2^{-R/d})$, where $d$ is the effective dimension of the intent space, i.e.\ the number of independent coordinates needed to represent the intent vectors up to negligible residual variance (\Cref{app:codebook-proof}).
\end{definition}

%% ============================================================
\section{Applications to Alignment}
\label{sec:applications}

Model the human as agent~$H$ with capacity $(m_H, T_H)$ and the AI as agent~$A$ with $(m_A, T_A) \gg (m_H, T_H)$. This section derives \emph{the form of the alignment cost} under the quotient model. The structural lower bound (\Cref{thm:capacity}(i)) and the constructive upper bound (\Cref{thm:single-letter}) are theorem-level given their assumptions. The practical implications for RLHF, debate, and routing are \emph{qualitative}: they predict scaling forms (linear in capacity gap, logarithmic in accuracy) but do not yield computable numerical bounds until the effective quotient cardinalities $|\calQ_A|$, $|\calQ_H|$ can be estimated for real systems---an open problem discussed in \Cref{sec:discussion}.

\begin{corollary}[Alignment Cost Scaling]
\label{thm:align-cost}
The alignment rate $\Ralign(\varepsilon) := \Rsem(A \to H;\, \varepsilon)$ satisfies:
\begin{enumerate}[nosep,label=(\roman*)]
\item \textbf{Structural lower bound (from \Cref{thm:capacity}(i)):} $\Ralign(\varepsilon) \geq \Rcrit = \log|\calQ_A| - \log|\calQ_H|$ for any $\varepsilon < c_M$.
\item \textbf{Constructive upper bound (memoryless/codebook regime):} Under the memoryless reference-distribution assumption of \Cref{thm:single-letter} and the codebook construction of \Cref{def:codebook}, there exists a quotient-aware protocol with
\[
\Ralign^{\mathrm{cb}}(\varepsilon) \;\leq\; d_{\mathrm{eff}} \cdot \log(|\calQ_A|^{1/d_{\mathrm{eff}}} / \varepsilon) + O(1),
\]
giving an $O(\log(1/\varepsilon))$ accuracy cost beyond the structural floor.
\end{enumerate}
Together: the theorem-backed part is the structural floor $\Ralign(\varepsilon) \geq \log|\calQ_A| - \log|\calQ_H|$, while the codebook scheme provides a constructive rate in a restricted regime.
\end{corollary}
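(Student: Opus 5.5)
Part~(i) follows from \Cref{thm:capacity}(i), and part~(ii) comes from combining \Cref{thm:single-letter} with the quantization bound of \Cref{def:codebook}. Throughout I identify $B$ with the human $H$ and work in the common-history coarsening regime of \Cref{ass:common-history}, so that $\calQ_A$ refines $\calQ_H$.

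\textbf{Part (i).} Fix $\varepsilon < c_M$ and suppose some protocol at rate $R < \Rcrit$ achieves $\Dintent \le \varepsilon$. Under uniform quotient visitation, the average $\calQ_H$-class contains $|\calQ_A|/|\calQ_H| = 2^{\Rcrit}$ subclasses. Some reachable $\calQ_H$-class of positive stationary mass therefore contains at least $2^{\Rcrit} > 2^R$ reachable $\calQ_A$-subclasses. This is exactly the hypothesis of \Cref{thm:capacity}(i), which gives $\Csem(R) \ge c_M > \varepsilon$. That contradicts the assumed protocol, so no rate below $\Rcrit$ achieves distortion $\varepsilon$, and taking the infimum over rates yields $\Ralign(\varepsilon) \ge \Rcrit$.

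The step I expect to be the main obstacle is the phrase ``for any $\varepsilon < c_M$.'' \Cref{thm:capacity}(i) gives $\Csem(R) \ge c_M$ only after weighting by the mass of the confused class. The clean form needs two things: the positive-support condition on that class, and a normalization in which confusion events carry enough probability. The second is supplied by the uniform-visitation hypothesis that defines $\Rcrit$ in log-cardinality form. I would state both explicitly as the standing hypotheses inherited from \Cref{thm:capacity}(i).

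\textbf{Part (ii).} Under the memoryless reference distribution, \Cref{thm:single-letter} reduces the problem to a single-letter quantization of the centroid intents $\bar I([h])$. I would use the $2^R$-codeword $k$-means++ codebook of \Cref{def:codebook}, whose per-step distortion satisfies
\[
\Dintent \le C\,|\calQ_A|^{1/d_{\mathrm{eff}}}\, 2^{-R/d_{\mathrm{eff}}}
\]
for a constant $C$ (see \Cref{app:codebook-proof}). Setting the right-hand side equal to $\varepsilon$ and solving gives
\[
R = d_{\mathrm{eff}}\log\!\bigl(|\calQ_A|^{1/d_{\mathrm{eff}}}/\varepsilon\bigr) + d_{\mathrm{eff}}\log C .
\]
Since $d_{\mathrm{eff}}$ is a fixed property of the intent space, the last term is $O(1)$. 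Because the codebook protocol achieves distortion $\varepsilon$ at this rate, it upper-bounds $\Ralign^{\mathrm{cb}}(\varepsilon)$. Expanding the logarithm shows a dependence of $d_{\mathrm{eff}}\log(1/\varepsilon)$ on accuracy, which is the claimed $O(\log(1/\varepsilon))$ cost beyond the structural floor.
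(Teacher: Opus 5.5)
Your route is the paper's: part~(i) is \Cref{thm:capacity}(i) applied to the pair $(A,H)$, and part~(ii) inverts $\Dintent \le C_d\,|\calQ_A|^{1/d}\,2^{-R/d}$, with $d\log C_d$ absorbed into the $O(1)$. The trouble is the one step you add beyond the paper, in your ``main obstacle'' paragraph. You correctly observe that the appendix proof of \Cref{thm:capacity}(i) only gives $\Dintent \ge \pi(C_k)\,(1-2^R/r_k)\,c_M$, not $\Csem(R)\ge c_M$. But uniform visitation does not supply the missing normalization.

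In the uniform setting that defines $\Rcrit$ (uniform $Q_A$ and $Q_H$, hence $r_k = 2^{\Rcrit}$ for every class), summing the pigeonhole bound over all $\calQ_H$-classes yields only
\[
\Dintent \;\ge\; \bigl(1-2^{R-\Rcrit}\bigr)\,c_M .
\]
This vanishes as $R\uparrow\Rcrit$, so the argument gives $\Ralign(\varepsilon)\ge \Rcrit+\log_2(1-\varepsilon/c_M)$, which is strictly below $\Rcrit$ for every $\varepsilon>0$.

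The shortfall is real, not an artifact of the bound. Take $r_k=4$ and $R=\log_2 3$. Send the subclass index within the current $\calQ_H$-class, but let two subclasses share a message. Only those two are confused, with probability $1/4$. Whenever each merged pair sits at intent distance $c_M$, this protocol achieves $\Dintent=c_M/4<c_M$ at a rate below $\Rcrit=2$.

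More generally, no visitation hypothesis can push the confusion probability to one, because each (message, $\calQ_H$-class) state decodes at least one subclass correctly. So ``$\Ralign(\varepsilon)\ge\Rcrit$ for any $\varepsilon<c_M$'' is not derivable from the pigeonhole argument. The paper's one-line proof obtains it only by citing the stated form $\Csem(R)\ge c_M$ of \Cref{thm:capacity}(i), which is stronger than what that theorem's proof establishes. What you can actually prove is the $\varepsilon$-dependent floor above, which recovers $\Rcrit$ only as $\varepsilon\to 0$. If the receiver may output an intent that is not one of the sender's, the per-confusion cost is only $c_M/2$ by the triangle inequality, as in \Cref{lem:fano}.

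For part~(ii), note that $d_{\mathrm{eff}}\log(|\calQ_A|^{1/d_{\mathrm{eff}}}/\varepsilon)=\log|\calQ_A|+d_{\mathrm{eff}}\log(1/\varepsilon)$. The codebook rate therefore exceeds the floor $\Rcrit$ by $\log|\calQ_H|+d_{\mathrm{eff}}\log(1/\varepsilon)+O(1)$, because the codebook ignores the receiver's side information. Calling the excess $O(\log(1/\varepsilon))$ treats $\log|\calQ_H|$ as a constant. The appeal to \Cref{thm:single-letter} is also unnecessary: the codebook estimate is a direct quantization bound, and the paper's proof does not use it.
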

\begin{proof}
Part~(i) is \Cref{thm:capacity}(i) applied to the $(A, H)$ pair. Part~(ii) inverts the codebook bound $\Dintent \leq C_d \cdot |\calQ_A|^{1/d} \cdot 2^{-R/d}$ (\Cref{app:codebook-proof}): setting $\varepsilon = C_d \cdot |\calQ_A|^{1/d} \cdot 2^{-R/d}$ and solving for~$R$ gives $R = d \cdot \log(C_d \cdot |\calQ_A|^{1/d}/\varepsilon)$.
\end{proof}

The framework recovers human--AI alignment as a special case (\Cref{tab:universality}), but the claims split into three layers. \emph{Theorem-backed:} the structural lower bound says any communication protocol must pay for the quotient mismatch. \emph{Constructive:} the logarithmic-in-accuracy upper bound comes from the memoryless/codebook regime above, not from the full general model. \emph{Qualitative:} RLHF, debate, routing, and scalable oversight are interpretations of the structural theorem once the quotient model is deemed appropriate for the application.

In RLHF~\cite{christiano2017}, each binary preference comparison conveys at most 1~bit (a choice between two options); with Likert-scale strength, the effective rate is $\approx$1--2~bits per comparison.\footnote{The 1-bit lower bound is information-theoretic: a binary choice has entropy $\leq 1$~bit. The ``1--2~bits'' range reflects that graded preferences (e.g., 5-point Likert) can convey $\log_2 5 \approx 2.3$~bits, reduced by human noise. This is a modeling assumption, not a derived quantity.} Under the quotient model, \Cref{thm:align-cost} therefore predicts a structural cost \emph{linear} in the capacity gap and an accuracy surcharge \emph{logarithmic} in~$1/\varepsilon$. These are scaling-form predictions---the theory identifies the functional dependence on the capacity gap and accuracy target, not a computable numerical bound for a specific LLM--human pair. Estimating effective quotient cardinalities for real systems (e.g., via probing classifiers that approximate the quotient partition, or via the lattice-gradient approach of \Cref{app:lattice-gradient}) is the key open problem connecting this theory to practice. The traversal theorem (\Cref{thm:traversal}) suggests routing alignment through intermediate abstractions---analogous to weak-to-strong generalization or scalable oversight via debate. Data processing implies post-processing cannot improve alignment. Extended discussion in \Cref{app:alignment-details}.

%% ============================================================
\section{Experiments}
\label{sec:experiments}

We validate theoretical predictions on \textsc{Chain5} ($|S|\!=\!5$, $|A|\!=\!2$, $|O|\!=\!5$), the standard RockSample(4,4) benchmark ($|S|\!=\!257$, $|A|\!=\!9$, $|O|\!=\!3$), and six additional environments (up to $|S|\!=\!200$). All results report median over 10~seeds; shaded bands are IQR. We distinguish two empirical roles: fixed-$\varepsilon$ figures illustrating the structural phase transition, and one dedicated shrinking-distortion sweep aligned with the asymptotic converse regime. Detailed methodology is in \Cref{app:experiments}; \Cref{app:exp-scope} records what these experiments do and do not validate relative to the theorem-level claims.

\paragraph{Methodology summary.}
Quotient classes are computed by enumerating all deterministic FSCs for $m \leq 3$ or sampling $n_{\mathrm{FSC}} \geq 50$ random stochastic FSCs for larger~$m$; histories with identical behavioral signatures (future observation distributions across all sampled FSCs, to tolerance~$10^{-4}$) form equivalence classes. Intent distortion uses the two-term proxy $\dintent^{\mathrm{exp}}(h) := \|b_h^A - b_h^B\|_1 + 0.5 \cdot \mathbf{1}\{a_A \neq a_B\}$, which lower-bounds the full three-term $\dintent$ of \Cref{sec:distortion} while preserving the same phase-transition location (\Cref{prop:proxy}). Distortion is measured as the empirical average over $10{,}000$ sampled trajectories. Semantic codebooks are constructed via $k$-means++ on quotient-class belief centroids (\Cref{def:codebook}). All code and reproduction instructions are available at \url{https://github.com/alch3mistdev/semantic-rate-distortion}.

\begin{figure}[t]
\centering
\includegraphics[width=0.48\columnwidth]{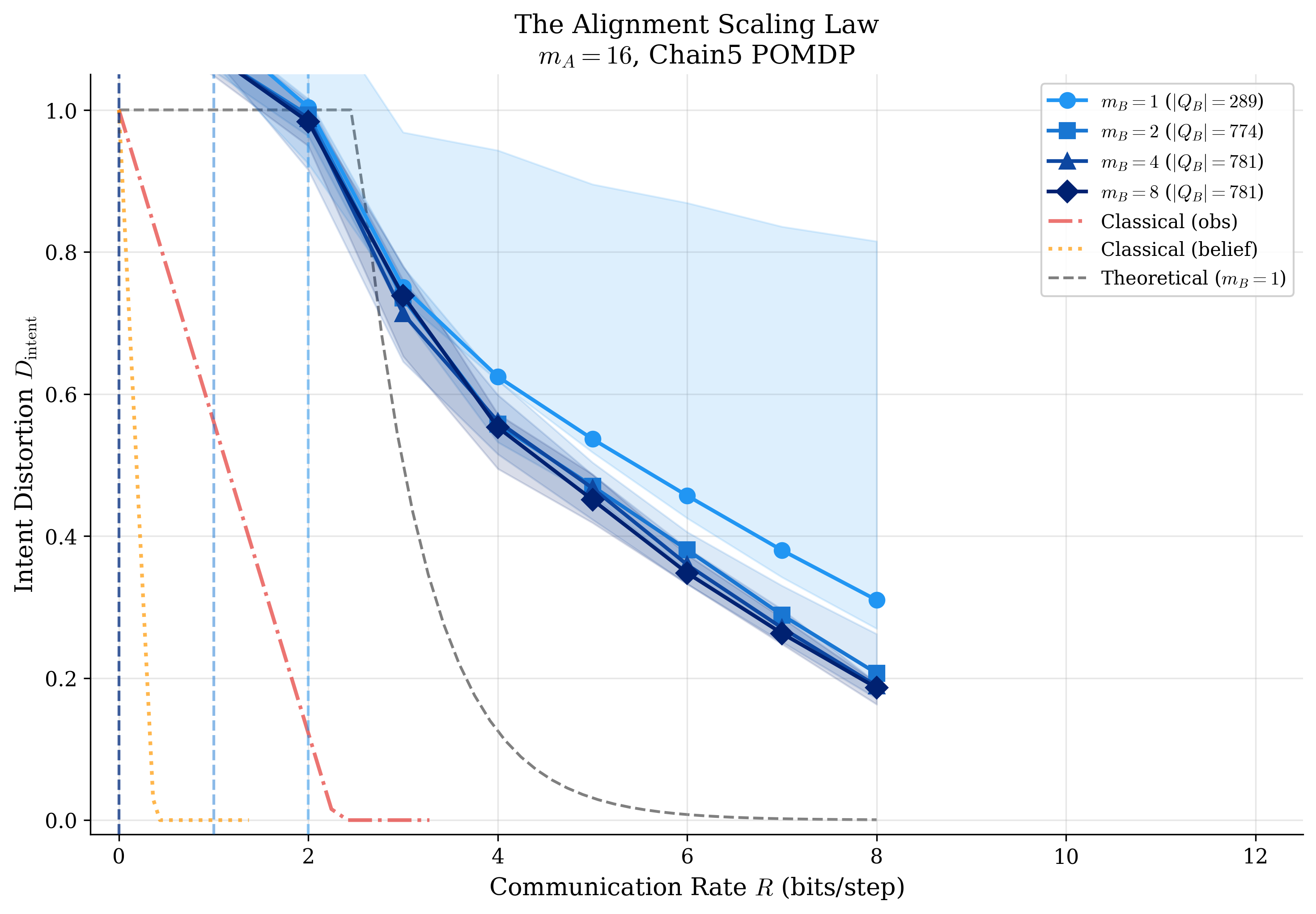}
\includegraphics[width=0.48\columnwidth]{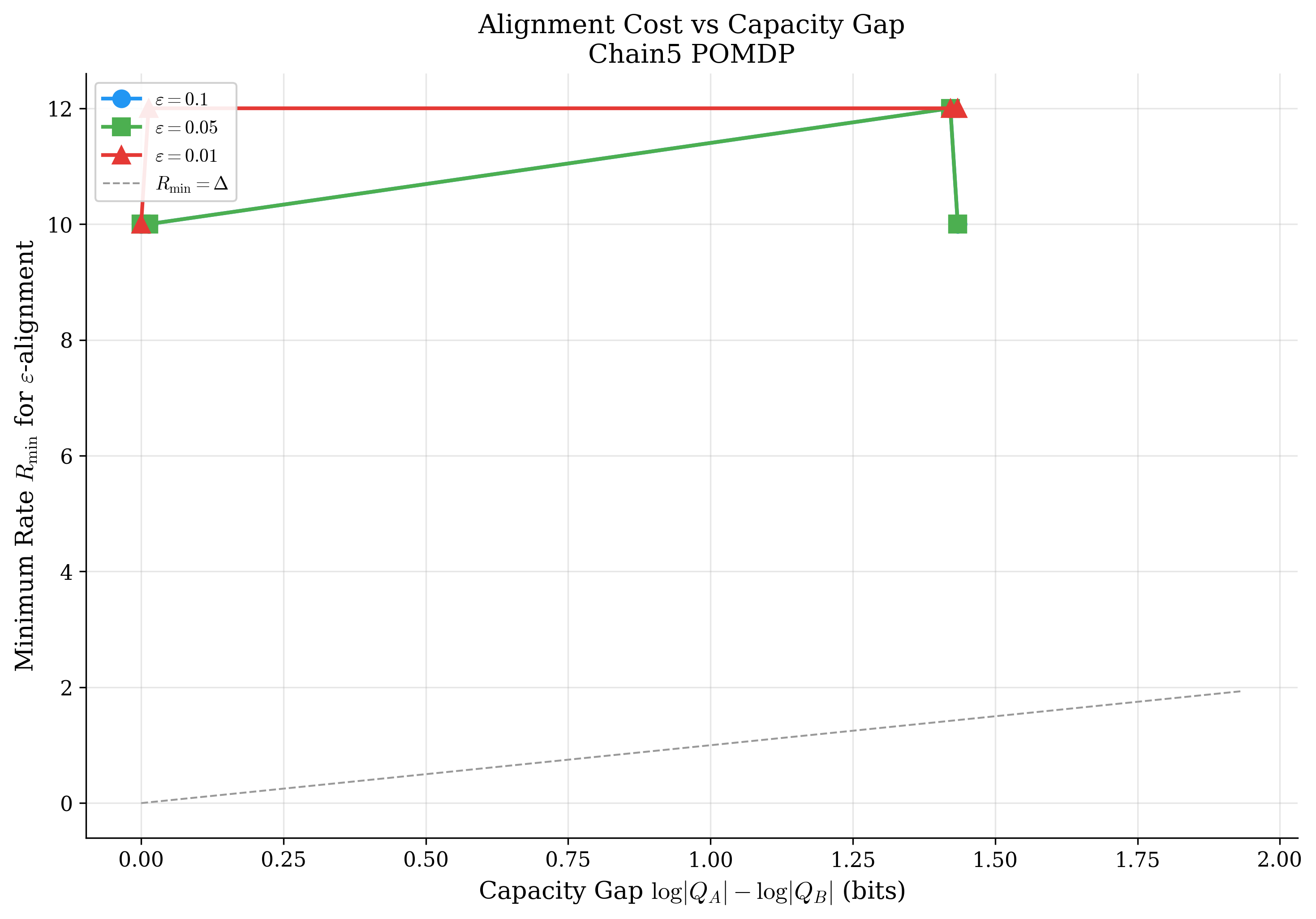}
\caption{\textbf{Left:} Chain5 rate-distortion curves for $m_A\!=\!16$, $m_B \in \{1,2,4,8\}$. A sharp knee appears near $\Rcrit \approx 1.4$~bits/step (dashed); distortion decays rapidly above it. \textbf{Right:} $R_{\min}$ vs.\ capacity gap (slope $\approx 1$), confirming \Cref{thm:align-cost}.}
\label{fig:phase-transition}
\end{figure}

\paragraph{Phase transition (\Cref{fig:phase-transition}, left).}
The $m_B\!=\!1$ curve ($|\calQ_A|\!=\!781$, $|\calQ_B|\!=\!289$) shows a clear empirical knee near the log-cardinality reference $\approx 1.4$~bits/step; $m_B \geq 2$ curves cluster together once the capacity gap closes. Because Chain5 is highly non-uniform ($\max_k r_k / \bar{r} = 15.9$; \Cref{app:experiments}), we treat this as \emph{structural evidence for a knee near the counting reference}, not as validation of a sharper theorem-level exponent.

\paragraph{Alignment scaling (\Cref{fig:phase-transition}, right).}
$R_{\min}$ scales linearly with the capacity gap with slope $\approx 1$, confirming the structural term dominates (\Cref{thm:align-cost}).

\begin{figure}[t]
\centering
\includegraphics[width=0.74\columnwidth]{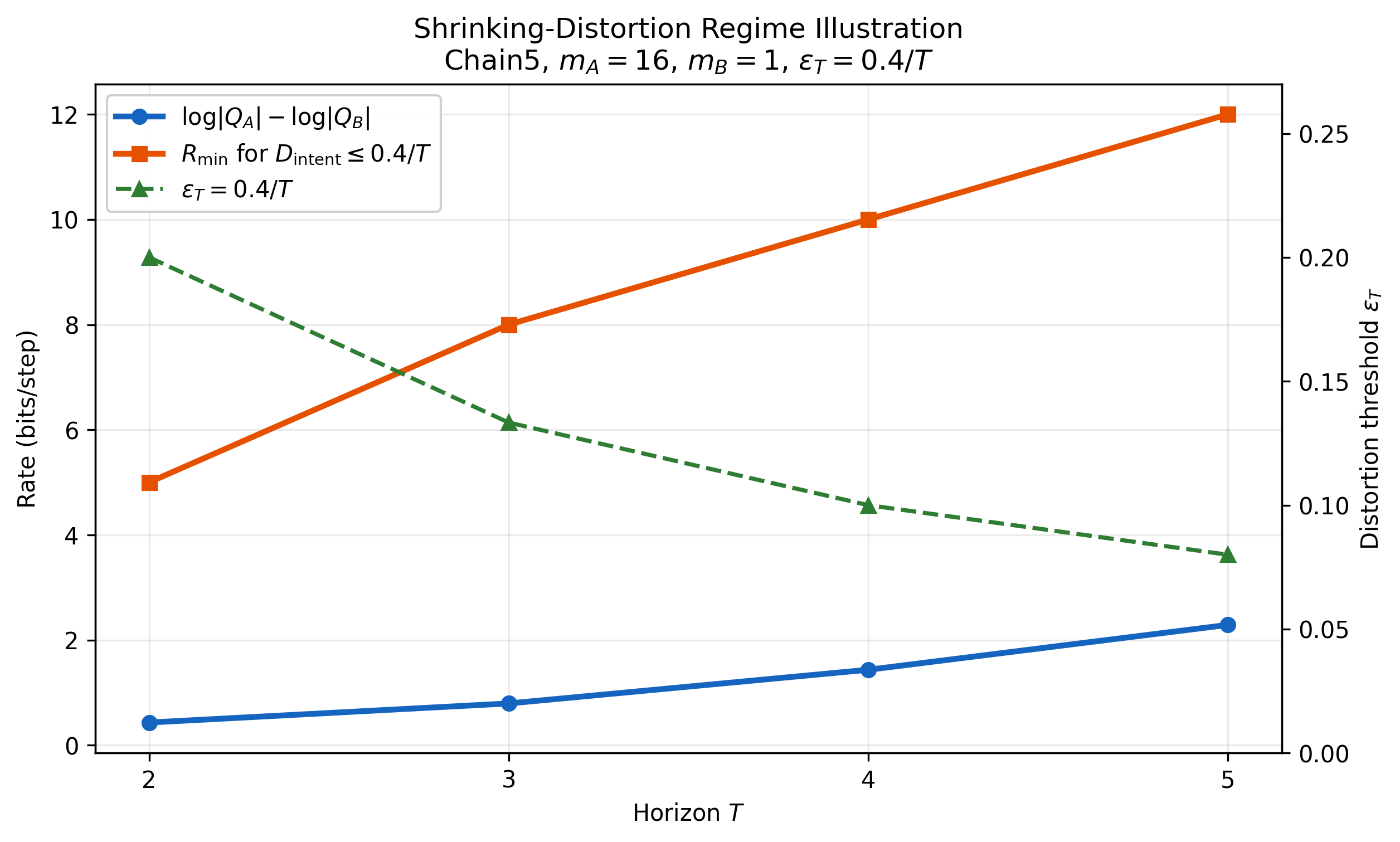}
\caption{Shrinking-distortion regime illustration for the asymptotic one-way converse (Chain5, $(m_A,m_B)=(16,1)$). The orange curve shows the empirical minimum rate $R_{\min}$ achieving $D_{\mathrm{intent}} \leq \varepsilon_T$ with $\varepsilon_T = 0.4/T$; the blue curve shows the log-cardinality reference $\log|Q_A|-\log|Q_B|$. This figure is a \emph{regime-matching illustration} for \Cref{thm:converse}. By contrast, \Cref{fig:phase-transition} remains a fixed-$\varepsilon$ illustration of the broader structural phase-transition theorem.}
\label{fig:shrinking-eps}
\end{figure}

\paragraph{Shrinking-distortion regime match (\Cref{fig:shrinking-eps}).}
To complement the fixed-$\varepsilon$ plots, we run a dedicated Chain5 sweep in the same regime as \Cref{thm:converse}: for $T \in \{2,3,4,5\}$ we set $\varepsilon_T = 0.4/T$, recompute $(\calQ_A,\calQ_B)$ for each horizon, and report the minimum integer rate achieving $D_{\mathrm{intent}} \leq \varepsilon_T$. The resulting thresholds are $R_{\min} \in \{5,8,10,12\}$, while the corresponding log-cardinality references are $\{0.43, 0.79, 1.43, 2.29\}$ bits/step. We use this sweep as a regime-matching illustration rather than a pointwise empirical proof of the converse: it keeps the distortion tolerance in the same asymptotic scaling class as the theorem, while the fixed-$\varepsilon$ figures continue to illustrate the broader structural transition.

\paragraph{Scalability and closer-to-uniform cases.}
\Cref{fig:richgrid} shows the phase transition on RichGridWorld ($|S|\!=\!36$, $|O|\!=\!12$, $T\!=\!2$); the $m_B\!=\!1$ gap ($\Rcrit \approx 5.0$~bits/step) creates a persistent distortion floor. \textsc{BalancedRand8} ($|S|\!=\!8$, ratio~$2.58$; \Cref{fig:richgrid}) is much closer to the uniform/cardinality regime than Chain5 and shows an inflection near $\Rcrit \approx 1.63$~bits/step, which is consistent with the idealized constructive benchmark without being a proof of it. To verify that the phase transition persists at scale, \Cref{fig:scalability} tests chain POMDPs with $|S| \in \{100, 150, 200\}$ and coarse observations ($|O| \in \{5,6,8\}$, $T\!=\!2$, $m_A\!=\!4$). The $m_B\!=\!1$ curves show a clear capacity gap ($|\calQ_A| > |\calQ_B|$, $\Rcrit = 2$~bits/step) with distortion bounded away from zero below~$\Rcrit$; the $m_B\!=\!2$ curves close the gap ($|\calQ_A| = |\calQ_B|$, $\Rcrit = 0$). Total runtime for all three domains: $< 1$~second.

\begin{figure}[t]
\centering
\includegraphics[width=0.95\columnwidth]{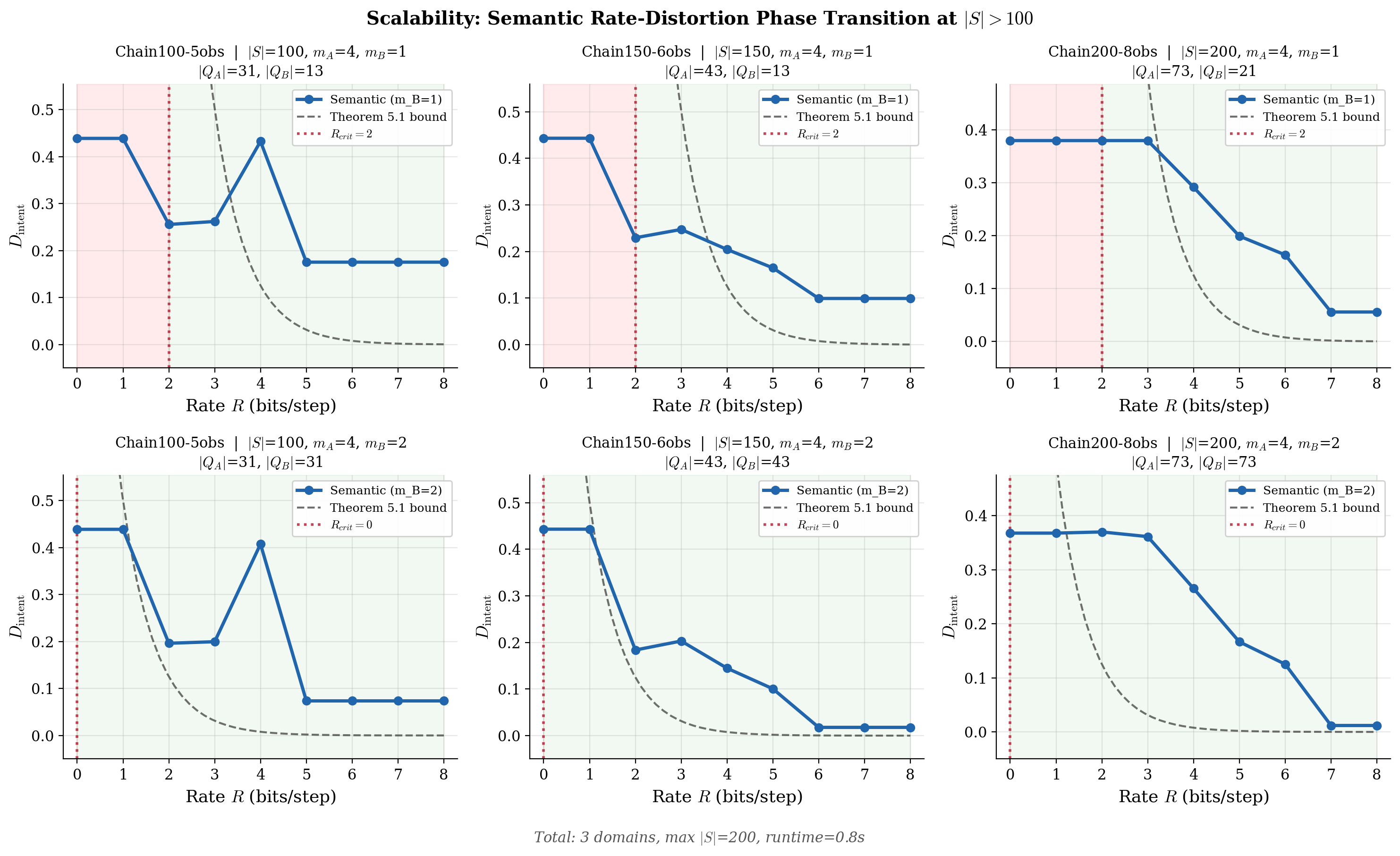}
\caption{Phase transition at $|S| > 100$. \textbf{Top:} $m_B\!=\!1$ creates a capacity gap; distortion floor persists below $\Rcrit = 2$~bits/step. \textbf{Bottom:} $m_B\!=\!2$ closes the gap ($|\calQ_A| = |\calQ_B|$). Domains: Chain100 ($|S|\!=\!100$), Chain150 ($|S|\!=\!150$), Chain200 ($|S|\!=\!200$).}
\label{fig:scalability}
\end{figure}

\paragraph{Standard benchmark: RockSample(4,4).}
To test the framework on a recognized POMDP benchmark beyond synthetic chains, we apply it to RockSample(4,4) \cite{smith2004heuristic} ($|S|\!=\!257$, $|A|\!=\!9$, $|O|\!=\!3$, $T\!=\!3$). This domain has \emph{action-dependent observation support}: only check-rock actions yield informative observations (good/bad), while movement always emits ``none.'' The quotient source is therefore non-i.i.d.: rock-checking strategies create temporal correlations in the observation sequence that make the quotient class at time~$t$ depend on the full history, not just the current state. \Cref{fig:rocksample} confirms a clear phase transition: for $m_A\!=\!8$, $m_B\!=\!1$ ($|\calQ_A|\!=\!40$, $|\calQ_B|\!=\!7$, $\Rcrit\!=\!3$), intent distortion remains above $0.73$ below the critical rate and drops to $0.52$ above it. Larger receiver capacity ($m_B\!=\!2,3$) progressively lowers the distortion floor ($0.17$ and $0.11$, respectively). The Blahut-Arimoto $R_{\mathrm{WZ}}(D)$ benchmark yields $H(\calQ_A|\calQ_B) = 0.70$~bits/step---confirming non-trivial side-information gain on a non-i.i.d.\ source---and the $R_{\mathrm{WZ}}(D)$ curve lies strictly below $R(D)$ throughout the distortion range. Runtime: $< 7$~seconds.

\begin{figure}[t]
\centering
\includegraphics[width=0.95\columnwidth]{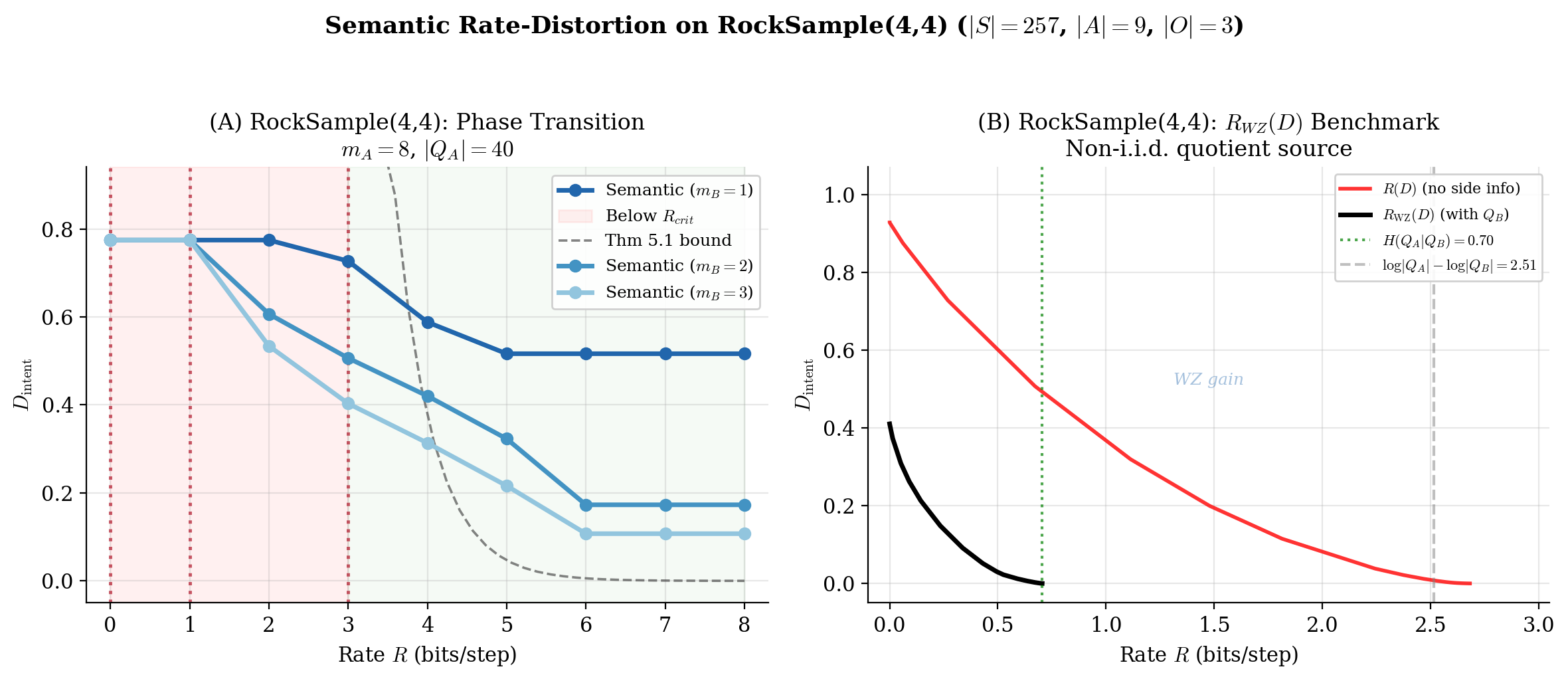}
\caption{RockSample(4,4): phase transition and WZ benchmark. \textbf{Left:} $D_{\mathrm{intent}}$ vs.\ $R$ for $m_A\!=\!8$, $m_B \in \{1,2,3\}$; the $m_B\!=\!1$ curve shows a clear knee at $\Rcrit\!=\!3$~bits/step. \textbf{Right:} Blahut-Arimoto $R_{\mathrm{WZ}}(D)$ (black) vs.\ $R(D)$ (red) on the non-i.i.d.\ quotient source; $H(\calQ_A|\calQ_B) = 0.70$~bits/step.}
\label{fig:rocksample}
\end{figure}

\begin{figure}[t]
\centering
\includegraphics[width=0.48\columnwidth]{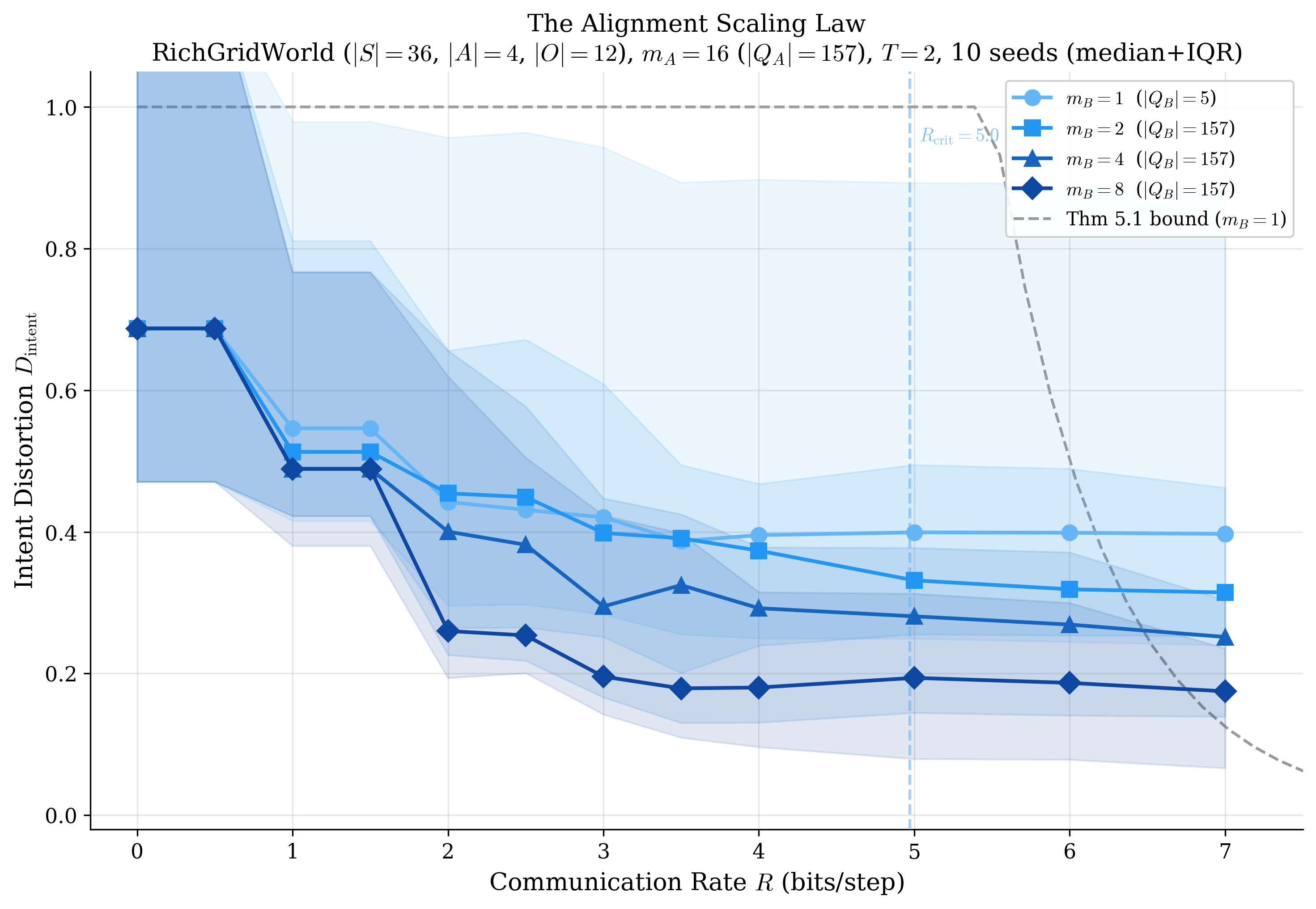}
\includegraphics[width=0.48\columnwidth]{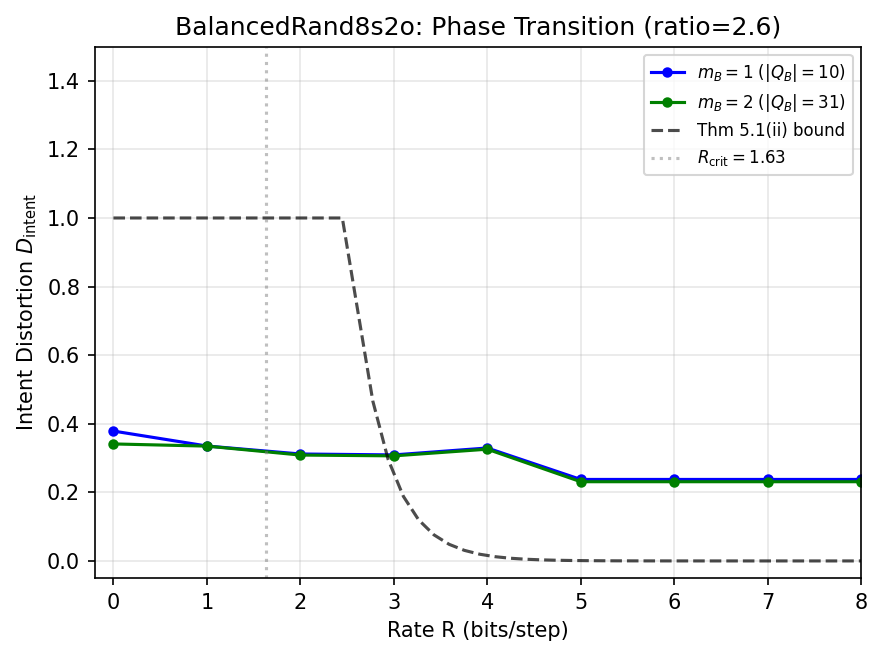}
\caption{\textbf{Left:} RichGridWorld ($T\!=\!2$); $m_B\!=\!1$ gap creates a distortion floor. \textbf{Right:} BalancedRand8 ($T\!=\!4$, refinement ratio~$2.58$); inflection near $\Rcrit \approx 1.63$.}
\label{fig:richgrid}
\end{figure}

\paragraph{Baselines and semantic coding advantage.}
Against Blahut-Arimoto $R_{\mathrm{WZ}}(D)$, k-means on beliefs, and random clustering (\Cref{app:experiments}), the quotient-aware protocol consistently outperforms; semantic coding requires higher rates than classical observation compression, reflecting the cost of intent preservation (\Cref{fig:semantic-vs-classical}). Additional experiments (Tiger, LLM routing) are in the appendix.

\paragraph{Entropy-rate benchmark vs.\ log-cardinality (\Cref{fig:structured-rd}).}
The experiments above use random policies, which produce near-uniform quotient visitation; the empirical knee therefore tracks the \emph{log-cardinality bound} ($\log|\calQ_A| - \log|\calQ_B| = 1.43$~bits/step). \Cref{cor:exact-rcrit} identifies the sharper one-way benchmark as the conditional entropy rate $H(\calQ_A \mid \calQ_B)$, which equals the log-cardinality form only under uniform visitation. To probe this sharper benchmark, we compare Blahut-Arimoto Wyner-Ziv $R_{\mathrm{WZ}}(D)$ curves under two source distributions:
\begin{enumerate}[nosep,leftmargin=*]
  \item \emph{Random policy}: uniform actions yield $H(\calQ_A \mid \calQ_B) = 0.29$~bits/step.
  \item \emph{Structured policy}: near-optimal FSCs (m$\leq$3, value-weighted sampling) concentrate visitation, yielding $H(\calQ_A \mid \calQ_B) = 0.08$~bits/step.
\end{enumerate}
In both cases, $R_{\mathrm{WZ}}(D)$ reaches zero distortion at the predicted $H(\calQ_A \mid \calQ_B)$ benchmark rather than at the log-cardinality bound. The structured-policy knee is $3.6\times$ lower than the random-policy knee ($0.08$ vs.\ $0.29$~bits/step) at the same WZ benchmark level, and $19\times$ lower than the worst-case counting bound ($0.08$ vs.\ $1.43$~bits). The $19\times$ figure compares the tightest benchmark against the loosest; the $3.6\times$ figure isolates the effect of structured visitation at a common theoretical level. Both demonstrate that structured visitation can make the one-way benchmark substantially smaller than a capacity-counting argument suggests. This is benchmark evidence, not a separate proof of full end-to-end operational optimality outside the regimes covered by \Cref{thm:wz}.

\begin{figure}[t]
\centering
\includegraphics[width=0.75\columnwidth]{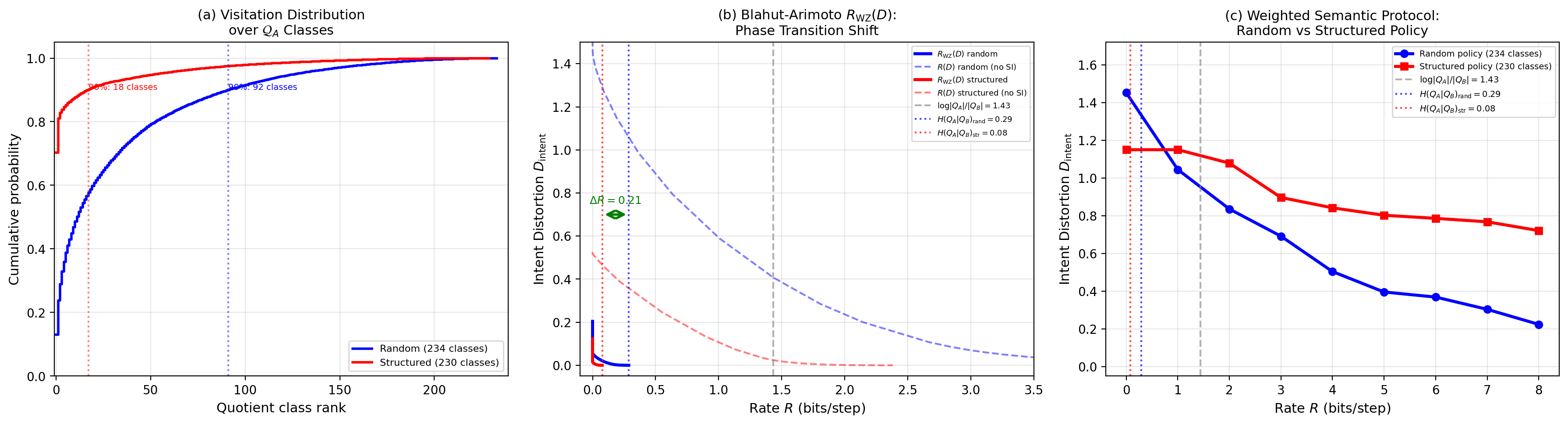}
\caption{Wyner-Ziv benchmark curves under random vs.\ structured policies (Chain5, $m_A\!=\!16$, $m_B\!=\!1$). Solid: $R_{\mathrm{WZ}}(D)$ with decoder side information; dashed: $R(D)$ without. Vertical lines mark the log-cardinality bound ($1.43$, gray), $H(\calQ_A|\calQ_B)_{\mathrm{rand}}$ ($0.29$, blue), and $H(\calQ_A|\calQ_B)_{\mathrm{struct}}$ ($0.08$, red). Structured vs.\ random at the same WZ benchmark level: $3.6\times$ reduction ($0.08$ vs.\ $0.29$); structured vs.\ worst-case counting bound: $19\times$ ($0.08$ vs.\ $1.43$).}
\label{fig:structured-rd}
\end{figure}

\paragraph{Encoder-only bottlenecks vs.\ semantic WZ (\Cref{fig:ib-comparison}).}
\Cref{prop:ib-subsumption} shows that any encoder-only bottleneck pays at least the WZ benchmark plus the relevance it retains about~$Q_B$. The figure compares an IB-style encoder-only baseline against the semantic WZ benchmark. At the lossless endpoint, the gap is exact: under structured policies, encoder-only lossless coding requires $H(\calQ_A) = 2.39$~bits/step while the one-way semantic benchmark needs only $H(\calQ_A \mid \calQ_B) = 0.08$~bits/step---a $31\times$ gap. Under random policies the corresponding lossless ratio is $21\times$. Across the displayed distortion range, the encoder-only curve remains above the side-information-aware benchmark, illustrating the cost of ignoring decoder side information.

\begin{figure}[t]
\centering
\includegraphics[width=0.95\columnwidth]{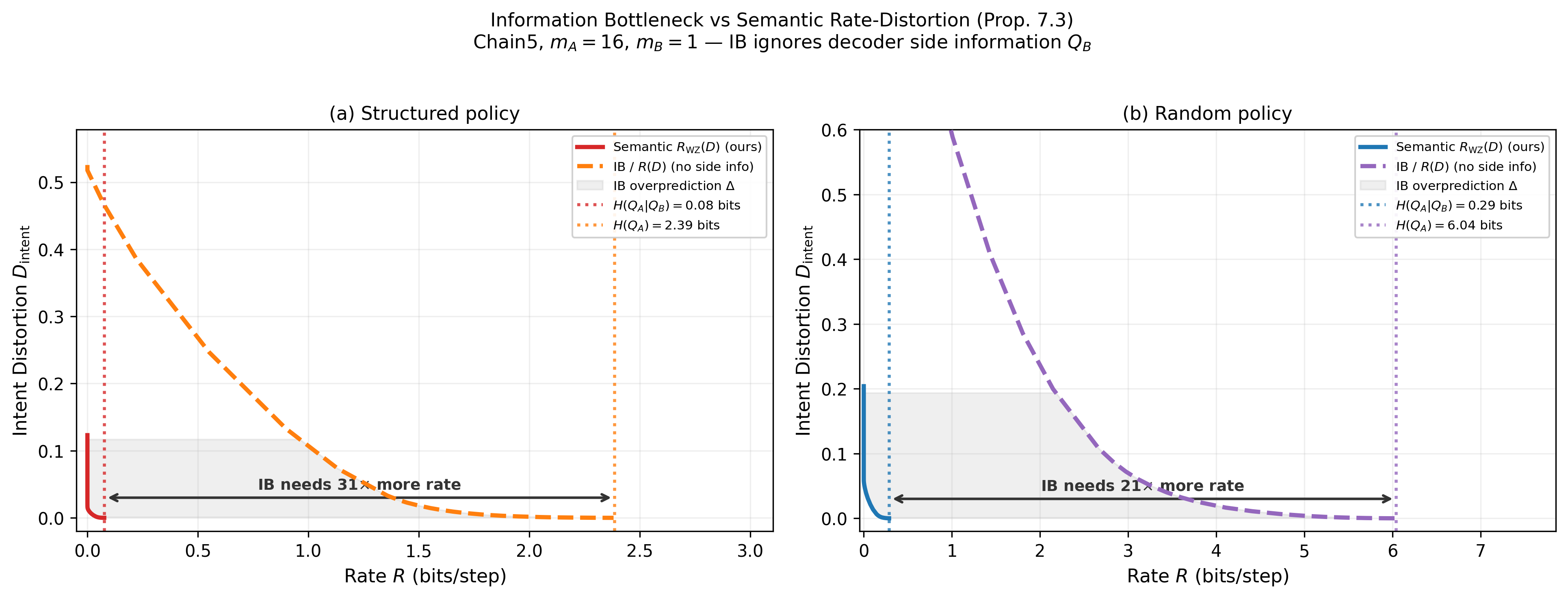}
\caption{IB-style encoder-only compression vs.\ the semantic WZ benchmark (\Cref{prop:ib-subsumption}). The dashed curve ignores decoder side information~$Q_B$; the solid curve exploits it. The exact theorem-level statement is the lower bound of \Cref{prop:ib-subsumption}; the lossless endpoints differ by $H(Q_B)$. \textbf{(a)}~Structured policies: $31\times$ lossless gap. \textbf{(b)}~Random policies: $21\times$ lossless gap.}
\label{fig:ib-comparison}
\end{figure}

%% ============================================================
\section{Related Work}
\label{sec:related}

\paragraph{Rate-distortion and source coding with side information.}
Shannon~\cite{shannon1948,shannon1959} established rate-distortion for shared codebooks; the Blahut-Arimoto algorithm~\cite{blahut1972,arimoto1972} provides the computational backbone. Wyner and Ziv~\cite{wynerziv1976} characterized rate-distortion with decoder side information; Slepian and Wolf~\cite{slepianwolf1973} established the complementary lossless distributed coding result. Draper and Wornell~\cite{draper2004} extended WZ to structured side information, closely related to our quotient side-information model. Our one-way WZ reduction (\Cref{thm:wz}) identifies when semantic rate-distortion can be benchmarked by a classical side-information problem on quotient alphabets. Derpich and {\O}stergaard~\cite{derpich2012} improve causal rate-distortion bounds; Permuter et al.~\cite{permuter2008trapdoor} establish directed information achievability. Stavrou and Kountouris~\cite{stavrou2022isit,stavrou2023task} and Zaidi et al.~\cite{zaidi2020} develop goal-oriented compression where the distortion is task-specific; our framework differs in that both the source alphabet (quotient classes) and the distortion (intent preservation) are \emph{derived} from agent capacity rather than pre-specified.

\paragraph{Semantic communication, bounded rationality, and Dec-POMDPs.}
G\"und\"uz et al.~\cite{gunduz2023}, Kountouris and Pappas~\cite{kountouris2021}, and Xie et al.~\cite{xie2021} develop task-oriented and deep-learning-enabled semantic communication. Tishby et al.~\cite{tishby1999} introduced the Information Bottleneck (IB), which compresses through a pre-specified relevance variable~$Y$; \Cref{prop:ib-subsumption} shows that once $Y = Q_B$ is fixed, any encoder-only bottleneck pays an additive penalty relative to the side-information-aware WZ benchmark, and at zero distortion that penalty is exactly~$H(Q_B)$. Crucially, in our framework~$Y$ is \emph{derived} from capacity via the quotient functor, not chosen. Sims~\cite{sims2003} and Genewein et al.~\cite{genewein2015} formalized information-theoretic bounded rationality; Ortega and Braun~\cite{ortegabraun2013} developed a thermodynamic framework. Our framework grounds these in POMDPs via the quotient functor. Bernstein et al.~\cite{bernstein2002} showed Dec-POMDP planning is NEXP-complete; Goldman and Zilberstein~\cite{goldmanzilberstein2004} characterized communication complexity. Nayyar, Mahajan, and Teneketzis~\cite{nayyar2013} developed the \emph{common-information approach} to Dec-POMDPs under communication constraints, establishing structural results for optimal strategies given shared information; our one-way observability (\Cref{ass:one-way}) creates a related asymmetric information structure, but our focus is on minimum \emph{rate} for semantic alignment rather than optimal \emph{strategies}. Tatikonda and Mitter~\cite{tatikonda2004} show stabilization requires rate exceeding topological entropy; $\Rcrit$ is the analogous semantic quantity. Witsenhausen~\cite{witsenhausen1968} shows shared-information assumptions break distributed control; \Cref{thm:capacity}(i) is a semantic analogue. In multi-agent RL, emergent communication protocols arise spontaneously when agents are given discrete channels~\cite{lazaridou2017,eccles2019}; our framework provides a theoretical lens for such protocols, predicting the minimum channel capacity for intent-preserving coordination as a function of the agents' capacity gap.

\paragraph{POMDP abstraction and alignment.}
Nixon~\cite{nixon2026} establishes the Myhill--Nerode theorem for bounded interaction; the quotient POMDP's well-definedness is sketched in \Cref{rem:quotient-welldef}. Castro et al.~\cite{castro2009}, Ferns et al.~\cite{ferns2004}, Li et al.~\cite{li2006}, Abel et al.~\cite{abel2019}, and Amato et al.~\cite{amato2010} develop POMDP/MDP equivalence, abstraction, and FSC-based algorithms. The quotient functor~$Q$ unifies bisimulation, lumpability, and policy abstraction as capacity-indexed special cases~\cite{nixon2026}. Prior alignment theory~\cite{russell2019,soares2014} focuses on preference learning; Christiano et al.~\cite{christiano2017} introduced deep RL from human preferences (RLHF), the dominant paradigm for practical alignment. We provide a complementary information-theoretic perspective: RLHF's comparison budget is bounded below by $\Rcrit$ (\Cref{sec:applications}).

%% ============================================================
\section{Discussion and Conclusion}
\label{sec:discussion}

We have developed a semantic rate-distortion theory where bounded agents' semantic spaces are quotient POMDPs and communication is formalized as rate-constrained quotient morphisms. The backbone of the paper is the fixed-$\varepsilon$ structural phase transition theorem: below~$\Rcrit$, some semantic distinctions are irresolvable regardless of protocol design. Under one-way observability in the common-history coarsening regime, the Wyner-Ziv benchmark then sharpens this structural picture: the converse matches the classical WZ converse exactly, the operational bridge is exact for memoryless quotient sources (\Cref{cor:iid-achievability}), the resulting constructive decay is classical rather than bespoke, and the ergodic bridge is argued with explicit mixing-time bounds for long-run average distortion (\Cref{rem:wz-open}). Separately, the shrinking-distortion converse provides an asymptotic one-way lower bound from messages plus decoder side information rather than the sole empirical backbone of the paper.

The impossibility below~$\Rcrit$ is \emph{structural}: when the sender's quotient resolution exceeds the receiver's, certain semantic distinctions are irresolvable regardless of coding sophistication. The discrete framework extends to continuous domains via metric entropy: $\Rcrit(\varepsilon) \approx (d_A - d_B) \log(1/\varepsilon)$ for smooth quotient manifolds of dimensions $d_A > d_B$ (\Cref{app:continuous}).

\paragraph{Summary of results and proof status.}
\Cref{tab:proof-status} collects the main results with their proof status and required assumptions. \Cref{app:validation-ledger} separately records which empirical sections speak to which claims, so theorem-level, benchmark-level, and qualitative statements are not conflated.

\begin{table}[t]
\centering
\footnotesize
\begin{tabularx}{\linewidth}{@{}p{0.22\linewidth}p{0.23\linewidth}Yp{0.15\linewidth}@{}}
\toprule
\textbf{Result} & \textbf{Status} & \textbf{Assumptions} & \textbf{Loc.} \\
\midrule
Value/proxy bounds & Proved & Lip.\ reward & \S\ref{sec:distortion} \\
Capacity (i): threshold & Proved & Common-history coarsening + positive support on a merged class & App.~\ref{app:proof-51i} \\
Capacity (ii): exp.\ decay & Proved (i.i.d.\ one-way) & Common-history coarsening, intent meas., one-way, i.i.d.\ quotient source, Lip. & App.~\ref{app:proof-51ii} \\
Shrinking-distortion converse & Proved (shrinking-$\varepsilon$ regime) & Common-history coarsening, intent meas., quot.\ reg., one-way & App.~\ref{app:proof-61} \\
WZ converse (Prop.~\ref{thm:wz}(i)) & Proved & Common-history coarsening, intent meas.\ + one-way & \S\ref{sec:wynerziv} \\
WZ bridge (Prop.~\ref{thm:wz}(ii)) & Proved (i.i.d.) / Argued (erg.) & Common-history coarsening, intent meas.\ + one-way & \S\ref{sec:wynerziv} \\
$\Rcrit$ benchmark / gap closure & Exact converse / exact i.i.d.; argued ergodic & Common-history coarsening, intent meas.\ + one-way & \S\ref{sec:wynerziv} \\
Traversal & Proved & Lip.\ morph. & App.~\ref{app:proof-traversal} \\
Single-letter & Proved & Memoryless~$\mu$ & \S\ref{sec:achievability} \\
Alignment scaling & Lower bound proved; upper bound constructive & Cap.\ thm + memoryless/codebook regime & \S\ref{sec:applications} \\
Encoder-only penalty (Prop.~\ref{prop:ib-subsumption}) & Proved & Common-history coarsening, intent meas.\ + one-way & \S\ref{sec:wynerziv} \\
Quotient PAC bound (Prop.~\ref{prop:pac-quotient}) & Proved & Separation $\gamma > 0$ & App.~\ref{app:experiments} \\
\bottomrule
\end{tabularx}
\caption{Proof status. ``Proved'' = complete proof in appendix. ``Proved (i.i.d.) / Argued (erg.)'' = exact for memoryless sources; ergodic case argued with explicit mixing-time bounds (\Cref{rem:wz-open}). ``Argued'' = structured argument; gap acknowledged. The alignment upper bound is constructive rather than fully general, and the shrinking-distortion converse is theorem-level only in its stated asymptotic regime.}
\label{tab:proof-status}
\end{table}

\paragraph{On the capacity model and the source of novelty.}
The framework models agent capacity as the node count~$m$ of a finite-state controller. This is one possible formalization---not the only one. Attention-limited agents, agents with context-dependent working memory, or neural networks with varying depth and width all suggest alternative capacity measures that would induce different quotient structures and potentially different~$\Rcrit$ values. We make two observations. First, the \emph{structural prediction}---that a capacity gap induces a phase transition in semantic communication---is robust to the choice of capacity model: any model that produces a refinement lattice of abstractions (finer capacity $\Rightarrow$ finer partition) yields a qualitatively identical phase transition, with $\Rcrit$ determined by the refinement gap. The FSC formalization is the setting in which we can prove sharp theorems, but the structural phenomenon does not depend on it. Second, the framework is \emph{modular}: replacing the FSC-based quotient with any abstraction functor that satisfies right-invariance (\Cref{lem:quotient-rightinv}) and refinement monotonicity (\Cref{lem:refinement}) would preserve all results from \Cref{sec:capacity} onward. The FSC assumption determines \emph{which} quotient is computed; the rate-distortion theory downstream is parametric in the quotient structure.

One might separately object that the framework reduces to ``apply standard coding theorems to a particular alphabet.'' This objection conflates the contribution with its consequences. Classical rate-distortion theory tells you the communication cost \emph{once you know the source alphabet}. But in the multi-agent setting, the alphabet is not given---it emerges from each agent's computational constraints. The quotient functor~$Q$ identifies which environmental distinctions each agent's memory can sustain; the refinement gap between $\calQ_A$ and $\calQ_B$ is what creates the communication problem in the first place. That $\Rcrit$ is metric-invariant (\Cref{prop:proxy}) is not a weakness but a feature: the phase transition is a \emph{structural} property of the capacity gap, independent of how we measure semantic fidelity.

\paragraph{Limitations and open problems.}
The WZ reduction requires both the common-history coarsening regime and one-way observability; the two-way and genuinely distinct-sensor cases remain open~\cite{tatikonda2004,permuter2009} (though \Cref{rem:distinct-sensor} identifies a class of distinct-sensor models that reduce to the common-history setting).
The exponential bound (\Cref{thm:capacity}(ii)) is proved only in the one-way memoryless regime; outside that regime we present structural lower bounds and benchmark identifications rather than a theorem-level constructive exponent.
Continuous alphabets, unknown environments, noisy channels, and operational WZ codebook construction are deferred to future work (\Cref{app:continuous,app:two-way}).
The alignment implications (\Cref{sec:applications}) predict scaling forms but not computable numerical bounds for real systems, pending methods for estimating effective quotient cardinalities at scale.

A concrete open problem is \emph{value-relevant tightening}. The current bounds count \emph{all} quotient distinctions equally, but some $\calQ_A$-subclasses may be value-irrelevant: merging them would not change the optimal policy or its value. Let $\calQ_A^{\mathrm{val}}$ be the coarsening of~$\calQ_A$ retaining only value-relevant cells (cf.\ \Cref{rem:conservative-bound}). Then $H(\calQ_A^{\mathrm{val}} \mid \calQ_B) \leq \bar{h}(\calQ_A \mid \calQ_B)$, and the true minimum alignment rate could be substantially lower than the full-quotient benchmark. Characterizing $\calQ_A^{\mathrm{val}}$---which requires identifying which abstractions are task-irrelevant, not merely capacity-irrelevant---would tighten all downstream bounds and is the most promising route to practical relevance. A natural candidate approach is \emph{reward-weighted partition refinement}: starting from $\calQ_A$, iteratively merge subclass pairs whose value functions differ by less than a tolerance~$\tau$, retaining only splits that change the optimal action or shift the value by more than~$\tau$. The resulting coarsening is reward-scale-aware by construction and could be combined with the lattice-gradient estimation below to make the tightening tractable at scale.

Quotient estimation has PAC-style guarantees (\Cref{prop:pac-quotient}), but the worst-case sample complexity is exponential in~$m$ (the uniform FSC distinguishing probability $p_\gamma$ can be as small as $1/N_{\mathrm{FSC}} = |\calA|^{-m} \cdot m^{-m|\calO|}$). We empirically observe rapid convergence at moderate $n \approx 20$ for structured POMDPs, and a promising path to scalability is \emph{lattice-gradient estimation} (\Cref{app:lattice-gradient}): instead of computing~$Q$ absolutely, enter the quotient lattice at a computable point and estimate the \emph{differential} refinement~$\Delta Q$ between adjacent capacity levels, chaining local estimates via Lipschitz bounds (\Cref{thm:traversal}) to recover global structure. For language-based agents, the self-referential closure of natural language provides a natural probe family---linguistic prompts that test whether a model distinguishes context~$A$ from context~$B$---making each $\Delta Q$ estimation polynomial in sample size rather than exponential in~$m$.

%% ============================================================

%% ============================================================
\appendix
\renewcommand{\theHfigure}{app.\arabic{figure}}
\renewcommand{\theHtable}{app.\arabic{table}}

%% ============================================================
%% ============================================================
\section{Wyner-Ziv Corollaries}
\label{app:wz-corollaries}

\begin{corollary}[Benchmark Gap Closure]
\label{cor:gap-closure}
\label{cor:markov-mu}
Under the one-way reduction (\Cref{thm:wz}), the converse and pipelined achievability are governed by the same WZ benchmark. For i.i.d.\ blocks drawn from the stationary marginal~$p(q_A)$, the WZ single-letter formula applies:
\[
R_{\mathrm{WZ}}(D) \;=\; \min_{\substack{p(u \mid q_A):\\ \E[\dintent] \leq D}} \bigl[I(Q_A;\, U) - I(Q_B;\, U)\bigr],
\]
where $U$ forms the Markov chain $Q_B \to Q_A \to U$. This formula is exact for the i.i.d.\ source; for the ergodic Markov quotient process, the same objective serves as a conservative constructive upper bound when we restrict attention to memoryless auxiliaries~$U$ and ignore temporal correlation.
\end{corollary}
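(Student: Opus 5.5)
The plan has three parts. First, invoke the reduction of \Cref{thm:wz} to cast the i.i.d.\ case as a classical Wyner--Ziv problem and read off the single-letter formula. Second, check that the converse and achievability legs meet at that formula. Third, argue the ergodic upper bound separately by a memoryless-encoder restriction.

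\textbf{I.i.d.\ case.} \Cref{thm:wz}(i) gives $\Rsem(D) \geq R_{\mathrm{WZ}}(D;\calQ_A,\calQ_B,\dintent)$. By \Cref{ass:intent-meas}, $\dintent$ is a well-defined finite distortion matrix on $\calQ_A \times \calQ_A$. Under \Cref{ass:common-history}, the side information is $Q_B = \kappa_{A\to B}(Q_A)$, so $Q_B \to Q_A \to U$ is automatically a Markov chain for any test channel $p(u\mid q_A)$. With finite alphabets, an i.i.d.\ source, and a bounded distortion (\Cref{rem:scales}), we are exactly in the setting of the Wyner--Ziv theorem~\cite{wynerziv1976}. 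That theorem identifies $R_{\mathrm{WZ}}(D)$ with $\min[I(Q_A;U)-I(Q_B;U)]$ over auxiliaries satisfying the Markov chain, with a decoder $\hat q_A = g(U,Q_B)$ and $\E[\dintent]\le D$, and a cardinality bound on $\mathcal U$. The Markov chain gives $I(Q_A;U)-I(Q_B;U)=I(Q_A;U\mid Q_B)$, which is the usual form.

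\textbf{Matching the two legs.} \Cref{cor:iid-achievability} says the causal pipeline attains $\Rsem(D)=R_{\mathrm{WZ}}(D)$ when $\{Q_t^A\}$ is i.i.d. Combined with the converse above, both legs are governed by the same single-letter quantity, which is the ``gap closure'' claim. One point needs care: the minimization is stated only over $p(u\mid q_A)$ with the constraint $\E[\dintent]\le D$, so I read the optimal reconstruction $g$ as implicit in the distortion constraint, as in the standard statement.

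\textbf{Ergodic Markov case.} This is a one-sided inequality and the main obstacle. The plan is to take any memoryless auxiliary $p(u\mid q_A)$ that is feasible for the stationary marginal. Applying the same per-letter test channel and binning within long blocks of the ergodic process, a WZ code built for the i.i.d.\ product of the marginal is also used on the true process. By the ergodic theorem, the empirical joint type of $(Q_A,Q_B)$ within a block converges to the stationary marginal. So typicality-based binning and decoding still succeed at rate $I(Q_A;U\mid Q_B)+\delta$, and the empirical distortion converges to the single-letter value. The temporal correlation between blocks is handled by the two-block pipeline and the mixing bound of \Cref{rem:wz-open}. Together these give achievability of the formula, hence an upper bound on the true operational rate, which can only be lower because memory is ignored.

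The delicate step is the typicality argument for non-i.i.d.\ blocks: joint typicality of the codeword $U^n$ with $Q_B^n$ needs more than marginal convergence, namely conditional typicality under the Markov source. I would first try the Markov lemma on finite-state ergodic chains, using a geometric mixing rate $\rho$. If that fails, I would fall back on interleaving symbols spaced by the mixing time so they are nearly independent, paying only a vanishing rate loss.
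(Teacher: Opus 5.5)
Your i.i.d.\ leg is what the paper does. The paper gives no separate proof; it only says the corollary follows from the reduction. Concretely, that means \Cref{thm:wz}(i) for the converse, \Cref{cor:iid-achievability} for matching achievability, and the classical Wyner--Ziv theorem on the finite quotient alphabets, with $Q_B \to Q_A \to U$ automatic because $Q_B=\kappa_{A\to B}(Q_A)$. For the ergodic clause, the paper's only justification is a restriction remark: a memoryless encoder is one feasible scheme, and schemes with memory can only do better. It never checks that a code built for the i.i.d.\ marginal actually attains rate $I(Q_A;U\mid Q_B)$ and distortion $D$ on the correlated process. Your typicality argument fills in exactly that, so your route is more complete than the paper's.

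The step you single out as delicate is, however, immediate, and you need neither a Markov lemma for ergodic chains nor interleaving. The side information is a symbolwise deterministic function of the source. So joint typicality of $(Q_A^n,U^n)$ with respect to $p(q_A)p(u\mid q_A)$ already gives joint typicality of $(Q_A^n,Q_B^n,U^n)$, since the joint type is just the push-forward under $(q_A,u)\mapsto(q_A,\kappa_{A\to B}(q_A),u)$. The Markov lemma is therefore vacuous. The remaining steps go through as follows:
\begin{itemize}
\item The covering lemma needs only that $Q_A^n$ be typical for the stationary marginal with high probability, which is the ergodic theorem applied to empirical symbol frequencies.
\item The packing step, including the usual handling of the bin chosen by the encoder, is carried out conditionally on the realized sequences. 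Its per-codeword bound $2^{-n(I(U;Q_B)-\delta)}$ therefore applies verbatim to an arbitrarily distributed $Q_B^n$ that is independent of the codebook.
\item The distortion bound follows from the typical-average lemma.
\item Dependence between blocks does not affect expected distortion, by linearity of expectation and stationarity.
\end{itemize}
In particular, do not route the code-level bound through the mixing estimate of \Cref{rem:wz-open}. That estimate is the paper's merely argued pipeline bridge, and relying on it would make your ergodic leg rest on an unproved step that the Wyner--Ziv-level statement does not need.
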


\begin{corollary}[Inherited Results]
\label{cor:inherited}
Under the conditions of \Cref{thm:wz}, the semantic setting inherits strong converses~\cite{csiszarkorner2011} (distortion approaches~$d_{\max}$ exponentially below $\Rsem(D)$), error exponents~\cite{csiszarkorner2011} matching the WZ reliability function, and finite-blocklength bounds~\cite{kostinaverdu2012} ($R(n, D, \varepsilon) = \Rsem(D) + \sqrt{V/n}\, Q^{-1}(\varepsilon) + O(\log n / n)$).
\end{corollary}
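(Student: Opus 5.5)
The statement to prove is \Cref{cor:inherited}: strong converse, error exponents, and finite-blocklength expansion for the semantic problem. The plan is to transfer each of the three classical results through the two legs of the reduction in \Cref{thm:wz}.

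\textbf{Step 1: converse-type transfers.} By \Cref{thm:wz}(i), every semantic protocol at rate $R$ and blocklength $n$ with expected intent distortion $D$ induces a WZ code with the same rate and distortion. The induced code is on the source $Q_A^n$, side information $Q_B^n$, and the class-level distortion $\dintent$, which is well defined by \Cref{ass:intent-meas}. Any lower bound that holds for all WZ codes therefore holds verbatim for semantic protocols. This covers three things:
\begin{itemize}[nosep]
\item the strong converse of~\cite{csiszarkorner2011}: below $R_{\mathrm{WZ}}(D)=\Rsem(D)$, the probability of excess distortion tends to one exponentially, so the distortion approaches $d_{\max}$;
\item the sphere-packing side of the reliability function;
\item the converse half of the second-order expansion of~\cite{kostinaverdu2012}.
\end{itemize}
I will state the transfer once, as a lemma that monotone functionals of the induced code (rate, excess-distortion probability) are preserved.

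\textbf{Step 2: achievability-type transfers.} I use \Cref{thm:wz}(ii) together with \Cref{cor:iid-achievability}. For i.i.d.\ quotient sources, each block of the causal pipeline realizes exactly the distortion of the underlying WZ code. Taking that code to be an exponent-achieving or second-order-optimal one, the block error probability and the dispersion term $\sqrt{V/n}\,Q^{-1}(\varepsilon)$ carry over. The $O(\log n/n)$ remainder absorbs the constant-size overhead of the pipeline: the two-block delay adds only $O(1/K)$ to the average. \Cref{lem:lip-prop} converts block error into semantic distortion, exactly as in \Cref{thm:capacity}(ii). The identification $\Rsem(D)=R_{\mathrm{WZ}}(D)$ in the stated formulas then follows from \Cref{cor:gap-closure}.

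\textbf{Step 3: scope and main obstacle.} The main obstacle is the ergodic case. There the achievability bridge is only argued, with a long-run mixing bound $\delta+D_0/K$, and that residual is not of order $O(\log n/n)$. I would therefore state the inherited achievability results rigorously only for i.i.d.\ quotient sources. For the ergodic case I would claim only the converse halves, which need just \Cref{thm:wz}(i). As an extra check, I would try absorbing the mixing deviation by letting the blocklength grow as $n\gtrsim \log(1/\delta)/\log(1/\rho)$. If that fails to give the second-order term, I would flag the ergodic dispersion claim as benchmark-level only.
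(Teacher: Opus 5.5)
Your strategy is the paper's own. The paper gives no argument beyond the remark that these results ``follow from the reduction,'' and confining achievability to i.i.d.\ quotient sources is more careful than the statement itself. The transfer breaks, however, where you apply the cited theorems.

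The expansion of \cite{kostinaverdu2012} is a point-to-point result with no side information. Its converse half is centered at $R(D)$, which exceeds $R_{\mathrm{WZ}}(D)$ in general (at $D=0$ by exactly $H(Q_B)$, \Cref{prop:ib-subsumption}). Codes whose decoder sees $Q_B^n$ therefore violate it, so your ``any lower bound for all WZ codes holds verbatim'' step does not apply to it. Conversely, for general decoder-only side information, matching dispersion and excess-distortion exponents are not known. Your instruction to ``take an exponent-achieving or second-order-optimal WZ code'' therefore has nothing to point to.

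The missing idea is that $Q_B = \kappa_{A\to B}(Q_A)$ is a deterministic function of the source (\Cref{prop:canonical-coarsening}). The encoder therefore also knows the side information, and the problem is conditional rate-distortion: $R_{\mathrm{WZ}}(D) = \min_{p(\hat q \mid q_A):\, \E[\dintent] \le D} I(Q_A; \hat Q \mid Q_B)$. The memoryless-source proofs can then be run conditionally on $Q_B^n$, with $V$ and the exponent replaced by their conditional counterparts.

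There is a further mismatch of criteria. All three results are excess-distortion statements: the $\varepsilon$ in the dispersion formula is a probability, not a distortion tolerance. By contrast, $\Rsem$ and \Cref{thm:wz}(i) are in expected distortion, where the redundancy is $O(\log n/n)$ with no $\sqrt{V/n}$ term. Your ``monotone functionals'' lemma therefore needs a pathwise version of the reduction. Take $B$'s realized per-step intent as the reconstruction symbol, and show it is a function of $(M^n, Q_B^n)$ and private randomness. Step~4 of the paper's proof of \Cref{thm:wz} asserts only that the expected distortion is ``valid'' and does not supply this.

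Three further steps fail as written.
\begin{enumerate}
\item ``Excess-distortion probability tends to one, so the distortion approaches $d_{\max}$'' is a non sequitur. At rates just below $R(D)$ a good code still attains distortion near the distortion-rate value at that rate, barely above $D$. What the strong converse gives is $\Prob(\text{distortion} > D) \to 1$; the $d_{\max}$ wording is false in general and should be dropped.
\item Your ergodic fallback (``claim only the converse halves, which need just \Cref{thm:wz}(i)'') does not work. \Cref{thm:wz}(i) moves the code, but every theorem you then apply is a memoryless-source theorem. For a Markov quotient process the dispersion is an asymptotic variance and the exponents are not single-letter. All three claims are therefore established only in the i.i.d.\ regime.
\item The pipeline overhead $D_0/K$ is a distortion term, not a rate term, and cannot be absorbed into $O(\log n/n)$ unless $K$ is tied to $n$. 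Worse, in the i.i.d.\ case $\theta^{(k+1)}$ is independent of $Q_A^{(k+1)}$, as \Cref{cor:iid-achievability} itself notes. The pipeline thus reproduces the WZ figures only for the delayed reconstruction of block $k-1$, not for $B$'s behavior during block $k+1$. It is cleaner to use a single length-$n$ block code with \Cref{lem:lip-prop}, as in the proof of \Cref{thm:capacity}(ii). You should also state explicitly that semantic distortion is being identified with reconstruction distortion.
\end{enumerate}
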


%% ============================================================
\section{Claim / Validation Ledger}
\label{app:exp-scope}
\label{app:validation-ledger}

\begin{table}[ht]
\centering
\footnotesize
\begin{tabularx}{\linewidth}{@{}p{0.18\linewidth}p{0.21\linewidth}p{0.22\linewidth}Y@{}}
\toprule
\textbf{Claim surface} & \textbf{Formal status} & \textbf{Assumptions} & \textbf{What the experiments speak to} \\
\midrule
Structural phase transition; one-way constructive decay & Lower bound theorem-level; exponential decay theorem-level only in the i.i.d.\ one-way regime & Common-history coarsening for quotient comparison; positive support for impossibility; intent meas.\ + one-way + i.i.d.\ + Lip.\ for constructive decay & Chain5, RichGridWorld, BalancedRand8, and Chain100/150/200 probe where the empirical knees sit relative to the stated references; Chain5 is not read as validation of a sharp constructive exponent \\
Shrinking-distortion converse & Theorem-level, asymptotic only & Common-history coarsening, intent meas., one-way observability, quotient regularity, shrinking distortion $\varepsilon = O(1/T)$ & A dedicated shrinking-$\varepsilon$ Chain5 sweep matches the converse regime; the fixed-$\varepsilon$ plots still illustrate the broader phase-transition shape rather than the converse itself \\
One-way WZ benchmark & Exact converse; i.i.d.\ operationally exact; ergodic long-run bridge argued & Common-history coarsening, intent meas.\ + one-way observability & Blahut-Arimoto curves and structured-policy comparisons validate benchmark sensitivity to visitation, not full ergodic FSC-level equality \\
Alignment applications & Lower bound theorem-backed; upper bound constructive; routing qualitative & Quotient model for the application; memoryless/codebook assumptions for the upper bound & Alignment-scaling plots support the structural trend; the LLM routing appendix is an analogy/case study, not a formal instantiation \\
\bottomrule
\end{tabularx}
\caption{Ledger separating theorem-level claims from benchmark-level and qualitative claims. The purpose is not to downplay the empirical section, but to prevent fixed-$\varepsilon$ experiments and benchmark calculations from being mistaken for proof of stronger statements than the paper actually establishes.}
\end{table}

%% ============================================================
\section{Quotient Facts Used in This Paper}
\label{app:quotient-facts}

This paper does \emph{not} re-prove the full Myhill--Nerode theorem of~\cite{nixon2026}. The later sections use only three structural facts: right-invariance of the history equivalence, refinement monotonicity as capacity increases, and the resulting canonical coarsening map $\calQ_A \to \calQ_B$. We collect those facts here so the notation used in the rate-distortion arguments is self-contained. Uniqueness and minimality of the quotient remain imported background from~\cite{nixon2026}.

\begin{lemma}[Right-invariance and well-defined quotient transitions]
\label{lem:quotient-rightinv}
If $h \equiv_{m,T} h'$, then for every observation $o \in \calO$, the extended histories satisfy $ho \equiv_{m,T} h'o$. Consequently the transition $[h] \xrightarrow{o} [ho]$ is independent of the chosen representative.
\end{lemma}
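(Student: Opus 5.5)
The plan is to prove right-invariance directly from the definition of $\equiv_{m,T}$. Fix $h \equiv_{m,T} h'$ of length $t$ and an observation $o$. By definition, for every $\pi \in \Pim$ we have $\Wone\bigl(P_M^\pi(\calO_{t+1:T} \mid h), P_M^\pi(\calO_{t+1:T} \mid h')\bigr) = 0$. Since $\Wone$ is a metric on distributions over the finite set $\calO^{T-t}$, this means the two conditional laws are equal. This is exactly the probe-exact condition for the pair $(h,h')$.

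\textbf{Step 1: reduce to marginal and conditional identities.} Write the future law by the chain rule:
\[
P_M^\pi(o, \calO_{t+2:T} \mid h) = P_M^\pi(O_{t+1}=o \mid h)\, P_M^\pi(\calO_{t+2:T} \mid ho).
\]
Do the same for $h'$. Marginalizing the equal joint laws gives $P_M^\pi(O_{t+1}=o\mid h) = P_M^\pi(O_{t+1}=o \mid h')$.

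\textbf{Step 2: the positive-probability case.} If this common probability is positive, divide it out of both sides. This gives $P_M^\pi(\calO_{t+2:T}\mid ho) = P_M^\pi(\calO_{t+2:T}\mid h'o)$ for that $\pi$. Because $\pi$ was arbitrary in $\Pim$, the supremum of $\Wone$ over $\Pim$ vanishes, so $ho \equiv_{m,T} h'o$.

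\textbf{Step 3: the main obstacle.} There are two subtleties, and I expect them to be the hardest part.

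The first is that the controller's internal node state after $h$ may differ from its node state after $h'$. The conditional law "given $h$" must therefore be read as the law of the joint process induced by $\pi$ started from $n_0$ and conditioned on the observation prefix. The chain-rule factorization above is exactly a statement about that conditional law, so it holds without tracking nodes. The point is that $\equiv_{m,T}$ quantifies over the full controller, including its history-dependent node, so no node-level matching is needed.

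The second subtlety is the zero-probability case. Suppose $P_M^\pi(o\mid h)=0$ for some $\pi$. Then the conditional law given $ho$ is undefined for that $\pi$, and I would adopt a convention to handle it. The convention is either that such $\pi$ are excluded from the supremum, or that the conditional law is taken along any policy giving positive mass. For a policy giving positive mass, Step 1 already shows the mass is equal for $h$ and $h'$, so the same $\pi$ supports both extensions. I would first state that $\Wone$ over undefined conditionals is interpreted as $0$, consistently for $h$ and $h'$. Step 1 shows these undefined cases occur for exactly the same policies, so the definition stays symmetric.

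\textbf{Step 4: representative independence.} The "consequently" clause follows immediately. If $[h]=[h']$, then $[ho]=[h'o]$, so the map $[h]\mapsto[ho]$ depends only on the class. Aggregated transition probabilities $P([ho]\mid[h])$ are likewise representative-independent by the marginal equality of Step 1. The same argument applies verbatim at resolution $\delta$ for $\Pi_{m,T,\delta}$.
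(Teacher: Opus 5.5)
Your proposal is correct and follows the same route as the paper. Equality of the future observation laws under every $\pi \in \Pim$, combined with the chain rule (marginalize to get $P_M^\pi(O_{t+1}=o \mid h) = P_M^\pi(O_{t+1}=o \mid h')$, then condition on $O_{t+1}=o$), yields $ho \equiv_{m,T} h'o$, and representative-independence of $[h] \xrightarrow{o} [ho]$ follows immediately. Two points in your write-up are more careful than the paper's one-line proof, which leaves them implicit: you read the conditional law as the law of the full controller run from $n_0$ conditioned on the prefix, and you handle null extensions by a convention that Step~1 shows treats $h$ and $h'$ symmetrically.
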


\begin{proof}
If $h \equiv_{m,T} h'$, then every controller $\pi \in \Pim$ induces the same future observation law from $h$ and $h'$. Conditioning both laws on one additional observation $o$ preserves equality by the chain rule for conditional distributions in the finite POMDP, so $ho \equiv_{m,T} h'o$. The quotient transition therefore depends only on the class~$[h]$, not on the representative history.
\end{proof}

\begin{lemma}[Refinement monotonicity in capacity]
\label{lem:refinement}
If $m_A \geq m_B$ and $T_A = T_B = T$, then $\calQ_A \preceq \calQ_B$.
\end{lemma}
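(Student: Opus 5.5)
The plan is to prove the refinement $\calQ_A \preceq \calQ_B$ directly from the definition of $\equiv_{m,T}$. Here $\calQ_A \preceq \calQ_B$ means every $\equiv_{m_A,T}$-class is contained in a single $\equiv_{m_B,T}$-class; equivalently, $h \equiv_{m_A,T} h'$ implies $h \equiv_{m_B,T} h'$. I will obtain this from the nesting of controller classes, $\Pi_{m_B,T} \subseteq \Pi_{m_A,T}$ whenever $m_B \le m_A$. The key observation is that the indistinguishability condition is a supremum over a controller class, and enlarging the class can only make the supremum larger.

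First I establish the nesting. The definition of a stochastic FSC only requires $|N| \le m$. So any controller $\pi = \langle N, \alpha, \beta, n_0\rangle$ with $|N| \le m_B$ also satisfies $|N| \le m_A$, and hence belongs to $\Pi_{m_A,T}$. No padding construction is needed. If one instead insists on exactly $m$ nodes, I would add $m_A - m_B$ unreachable dummy nodes. These are never visited from $n_0$ under $\alpha$, so the induced observation laws $P_M^\pi(\calO_{t+1:T}\mid h)$ are unchanged. The horizon is common ($T_A = T_B = T$), so both classes are evaluated on the same future window $\calO_{t+1:T}$.

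Second, I compare the suprema. For fixed histories $h, h' \in \calO^t$, write $f(\pi) := \Wone\bigl(P_M^\pi(\calO_{t+1:T}\mid h), P_M^\pi(\calO_{t+1:T}\mid h')\bigr) \ge 0$. Since $\Pi_{m_B,T} \subseteq \Pi_{m_A,T}$, we have $\sup_{\Pi_{m_B,T}} f \le \sup_{\Pi_{m_A,T}} f$. If $h \equiv_{m_A,T} h'$, the right-hand side is $0$, and nonnegativity then forces the left-hand side to be $0$. Hence $h \equiv_{m_B,T} h'$, so each $\calQ_A$-class lies inside a unique $\calQ_B$-class, which is exactly $\calQ_A \preceq \calQ_B$.

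The only step that needs care is the first: checking that the induced future-observation law really depends only on the behaviour reachable from $n_0$. For the padded version, this means confirming that the dummy nodes cannot alter $P_M^\pi$. Everything else is the monotonicity of the supremum. The same argument works verbatim for the probe-exact quotient, with equality of laws over all $\pi$ in place of $\Wone = 0$, so refinement holds under either formalization.
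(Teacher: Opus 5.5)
Your proof is correct and takes the same route as the paper: the nesting $\Pi_{m_B,T} \subseteq \Pi_{m_A,T}$ follows from the $|N| \le m$ definition, and indistinguishability against the larger controller class then implies indistinguishability against the smaller one, so $h \equiv_{m_A,T} h'$ gives $h \equiv_{m_B,T} h'$. Your explicit nonnegativity-of-the-supremum step and the padding remark are extra care beyond what the paper writes, but they are harmless.
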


\begin{proof}
Since $m_A \geq m_B$, every FSC with at most $m_B$ nodes is also an FSC with at most $m_A$ nodes, so $\Pi_{m_B,T} \subseteq \Pi_{m_A,T}$. Therefore equality of future observation laws against all controllers in $\Pi_{m_A,T}$ implies equality against all controllers in $\Pi_{m_B,T}$. Hence $h \equiv_{m_A,T} h'$ implies $h \equiv_{m_B,T} h'$, so every $\calQ_A$-class is contained in a unique $\calQ_B$-class.
\end{proof}

\begin{proposition}[Canonical coarsening map]
\label{prop:canonical-coarsening}
Under the conditions of \Cref{lem:refinement}, the map
\[
\kappa_{A \to B}: \calQ_A \to \calQ_B, \qquad \kappa_{A \to B}([h]_A) := [h]_B,
\]
is well-defined. It is surjective onto the reachable $\calQ_B$-classes.
\end{proposition}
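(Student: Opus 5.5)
The proof has two parts: well-definedness of $\kappa_{A\to B}$, which reduces directly to \Cref{lem:refinement}, and surjectivity, which is a statement about representatives of reachable classes.

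\emph{Well-definedness.} Suppose $[h]_A = [h']_A$, that is, $h \equiv_{m_A,T} h'$. Since $\Pi_{m_B,T} \subseteq \Pi_{m_A,T}$, equality of the future observation laws for every controller in the larger class gives equality for every controller in the smaller class. Hence $h \equiv_{m_B,T} h'$ and $[h]_B = [h']_B$. So $\kappa_{A\to B}([h]_A)$ does not depend on the chosen representative.

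We also note that $\kappa_{A\to B}$ respects the quotient transitions. By \Cref{lem:quotient-rightinv}, $[h]_A \xrightarrow{o} [ho]_A$ is representative-independent, and
\[
\kappa_{A\to B}([ho]_A) = [ho]_B,
\]
which is the $o$-successor of $[h]_B = \kappa_{A\to B}([h]_A)$. This is not needed for the proposition itself, but it makes $\kappa_{A\to B}$ a quotient morphism, consistent with the remark that $\dQ = 0$ under refinement.

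\emph{Surjectivity onto reachable classes.} Call a $\calQ_B$-class reachable if it contains at least one history $h$ of positive probability under the common history process of \Cref{ass:common-history}. Equivalently, $h$ is generated from $b_0$ along some sequence of $o$-transitions. The steps are:
\begin{enumerate}
\item Fix a reachable class $c \in \calQ_B$ and choose a reachable representative $h \in c$.
\item The class $[h]_A$ is a reachable $\calQ_A$-class, since it contains the same history $h$.
\item By definition, $\kappa_{A\to B}([h]_A) = [h]_B = c$.
\end{enumerate}
Every reachable $\calQ_B$-class is therefore the image of a reachable $\calQ_A$-class. Equivalently, by induction on $t$ from the initial class, reachability under $o$-transitions in $\calQ_A$ maps onto reachability in $\calQ_B$, because both quotients are driven by the same observation strings.

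The only potential obstacle is definitional: what "reachable" means, and whether histories of different lengths $t$ lie in a common quotient space. I would fix the convention that classes are taken within each $\calO^t$, with $t$ carried along. Under that convention the map preserves length and the argument above applies without change.
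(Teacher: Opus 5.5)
Your proof is correct and follows the paper's argument. Well-definedness comes from $\Pi_{m_B,T} \subseteq \Pi_{m_A,T}$ (the content of \Cref{lem:refinement}), and surjectivity comes from taking any representative $h$ of a reachable $\calQ_B$-class and noting that $\kappa_{A\to B}([h]_A) = [h]_B$; your extra observation that $\kappa_{A\to B}$ commutes with the $o$-transitions is correct but optional, and the paper omits it.
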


\begin{proof}
Well-definedness follows from \Cref{lem:refinement}: if $[h]_A = [h']_A$, then $h \equiv_{m_A,T} h'$, hence $h \equiv_{m_B,T} h'$ and therefore $[h]_B = [h']_B$. Surjectivity onto reachable $\calQ_B$-classes is immediate because each reachable $\calQ_B$-class contains at least one history~$h$, and that history belongs to some $\calQ_A$-class whose image under~$\kappa_{A \to B}$ is exactly~$[h]_B$.
\end{proof}

%% ============================================================
\section{Extended Capacity Theorem Remarks}
\label{app:capacity-remarks}

\begin{remark}[Positivity of $c_M$]
\label{rem:cM-positive}
$c_M > 0$ whenever $\calQ_A$ is strictly finer than $\calQ_B$, as a consequence of the definitions. In a finite POMDP, the posterior belief $b_h = P(s_t \mid h_t)$ is a sufficient statistic for the future observation distribution under \emph{any} policy~\cite{nixon2026}: $P_M^\pi(\calO_{t+1:T} \mid h)$ depends on~$h$ only through~$b_h$. Hence if $b_h = b_{h'}$, then $h$ and $h'$ produce identical future observations under all policies---including all FSCs in $\Pi_{m_A,T_A}$---and therefore lie in the same quotient class. Contrapositively, if $[h]_A \neq [h']_A$, then $b_h \neq b_{h'}$. Since $\Wone(b, b') > 0$ for distinct distributions on a finite state space, the belief component of~$\dintent$ is strictly positive: $\dintent(\mathrm{Intent}_A(h), \mathrm{Intent}_A(h')) \geq \Wone(b_h, b_{h'}) > 0$. Finiteness of~$\calQ_A$ and $\calQ_B$ then gives $c_M \geq \min_{[h]_A \neq [h']_A} \Wone(b_h, b_{h'}) > 0$.
\end{remark}

\begin{remark}[Positive-support condition in Theorem~\ref{thm:capacity}(i)]
\label{rem:merged-support}
The impossibility proof needs one additional condition beyond strict refinement: at least one merged $\calQ_B$-class witnessing the pigeonhole argument must appear with positive stationary mass under the protocol-induced source law. In finite irreducible settings this is automatic for every reachable recurrent class; we state it explicitly because mere convergence to stationarity does not by itself guarantee positive mass on every merged class.
\end{remark}

\paragraph{Role of~$B$'s quotient.}
$B$'s computational bound ($m_B$-node FSC) means its effective information state at each step is its $\calQ_B$-class. The quotient theorem~\cite{nixon2026} establishes that $\calQ_B$ is the unique minimal abstraction preserving observation laws for all policies in~$\Pi_{m_B,T_B}$. Since~$B$'s policy is constrained to this class, two histories in the same $\calQ_B$-class produce identical conditional observation distributions under any policy~$B$ can execute. Messages from~$A$ allow~$B$ to reconfigure its FSC parameters---selecting which $m_B$-node policy to run---but not to transcend the $m_B$ memory bound. The impossibility below~$\Rcrit$ arises because~$B$'s bounded processing cannot resolve the relevant $\calQ_A$-subclasses, even with optimal use of received messages.

\paragraph{Interpretation: structural impossibility.}
The \emph{existence} of a phase transition is a structural consequence of the quotient refinement $\calQ_A \preceq \calQ_B$, which is policy-independent once the common-history comparison regime is fixed. When $\calQ_A$ is strictly finer than $\calQ_B$, certain distinctions that~$A$ can perceive are invisible to~$B$ regardless of the protocol. The \emph{location} of the most informative benchmark ($\Rcrit = \bar{h}(Q_A \mid Q_B)$ under \Cref{thm:wz}) depends on the source distribution; the log-cardinality form provides a policy-independent worst-case reference, while theorem-level exponential achievability is claimed only in the one-way memoryless regime of \Cref{thm:capacity}(ii).

%% ============================================================
\section{Extended Converse Remarks}
\label{app:converse-remarks}

\begin{remark}[Entropy rate and tightness]
\label{rem:entropy-rate}
Let $h_A := \lim_{T \to \infty} \frac{1}{T} H(Q_A^T)$ and $h_B := \lim_{T \to \infty} \frac{1}{T} H(Q_B^T)$ denote the entropy rates of the quotient processes, which exist under Assumption~\ref{ass:quotient-reg}. The shrinking-distortion converse (\Cref{thm:converse}) depends on $h_A - h_B$. The benchmark is closest to the log-cardinality reference when the quotient processes have near-maximal entropy ($h_A \approx \log|\calQ_A|$, $h_B \approx \log|\calQ_B|$), i.e.\ under approximately uniform visitation of quotient classes.
\end{remark}

%% ============================================================
\section{Full Wyner-Ziv Reduction Proof}
\label{app:proof-wz}

This appendix provides the detailed proof of \Cref{thm:wz} (one-way Wyner-Ziv reduction).

\begin{proof}[Proof of \Cref{thm:wz}]
The proof proceeds in four steps; Steps~1--2 are definitional, Step~3 is the main technical content, and Step~4 follows from~\cite{csiszarkorner2011}.

\textbf{Step~1 (Source identification).}
Under Assumptions~\ref{ass:common-history}, \ref{ass:quotient-reg}, and~\ref{ass:one-way}, the quotient process $\{Q_t^A\}_{t \geq 1}$ induced by the common history source is stationary and ergodic on the finite alphabet~$\calQ_A$. Because~$B$'s actions do not feed back into the sender-side source history, $\{Q_t^A\}$ is a well-defined source independent of the decoder.

\textbf{Step~2 (Side information identification).}
Since~$\calQ_A$ and~$\calQ_B$ are both defined on that same history space, \Cref{prop:canonical-coarsening} gives a canonical map $\kappa_{A \to B}\colon \calQ_A \to \calQ_B$ with $Q_B^T = \kappa_{A \to B}(Q_A^T)$ pointwise. Thus $Q_B^T$ serves as decoder side information correlated with the source $Q_A^T$. We do \emph{not} claim this deterministic coarsening step for a genuinely distinct-sensor model unless it is first reduced to the same common-history setting.

\textbf{Step~3 (Achievability: exact i.i.d., argued ergodic).}
Any Wyner-Ziv code for the source $Q_A^T$ with decoder side information $Q_B^T$ at distortion~$D$ under~$\dintent$ operates at rate $R_{\mathrm{WZ}}(D)$. By \Cref{ass:intent-meas}, $\dintent$ is well-defined as a function of quotient class pairs $(q_A, \hat{q}_A) \in \calQ_A \times \calQ_A$, so the WZ distortion matrix is unambiguous. Such a code can be implemented as a semantic communication protocol as follows.

\emph{Causal pipeline construction.} The protocol of \Cref{def:protocol} requires~$B$ to act at every step, while WZ coding is block-based at blocklength~$n$. We bridge this gap with a \emph{pipelined} scheme that respects causality. Divide time into blocks of~$n$ steps. During block~$k$,~$A$ accumulates quotient classes $Q_A^{(k)} := (Q_{(k-1)n+1}^A, \ldots, Q_{kn}^A)$ and simultaneously transmits the WZ codeword for the \emph{previous} block~$Q_A^{(k-1)}$ at rate~$R$ bits/step. Upon receiving the codeword at the end of block~$k$,~$B$ decodes $\hat{Q}_A^{(k-1)}$ and uses it to select FSC parameters $\theta^{(k+1)}$ for block~$k+1$. During block~1 (before any codeword is available),~$B$ runs a default FSC~$\theta_0$; the resulting distortion~$D_0 \leq d_{\max}$ is bounded. Over~$K$ blocks, the average distortion is
\[
\bar{D}_K = \frac{D_0 + (K-1) D_{\mathrm{WZ}}}{K} \;\xrightarrow{K \to \infty}\; D_{\mathrm{WZ}} \leq D,
\]
so the pipeline overhead vanishes asymptotically. The per-step rate remains~$R$; the cost is a one-block decoding delay, which is standard in block coding~\cite{cover2006}.

\emph{From WZ reconstruction to semantic distortion.} The standard WZ achievability theorem for stationary ergodic sources~\cite{csiszarkorner2011} guarantees the existence of a block code at rate~$R_{\mathrm{WZ}}(D)$ whose \emph{time-averaged} reconstruction distortion converges to~$\leq D$ as blocklength~$n \to \infty$. The pipeline applies this code to successive blocks: during block~$k{+}1$, $B$ runs an FSC with parameters~$\theta^{(k+1)}$ chosen from the decoded block~$k{-}1$.

\emph{I.I.D.\ case (exact).} When the quotient source is memoryless (i.i.d.\ blocks), the parameter $\theta^{(k+1)} = f(\hat{Q}_A^{(k-1)}, Q_B^{(k-1)})$ is independent of $Q_A^{(k+1)}$, and each block's distortion equals $D_{\mathrm{WZ}}$ exactly. No mixing argument is needed, and the pipeline achieves the WZ rate-distortion function with equality: $\Rsem(D) = R_{\mathrm{WZ}}(D)$. This proves \Cref{cor:iid-achievability}.

\emph{Ergodic case (argued with mixing bounds).} For a correlated ergodic source, the per-block distortion may fluctuate because~$\theta^{(k+1)} = f(\hat{Q}_A^{(k-1)}, Q_B^{(k-1)})$ is correlated with~$Q_A^{(k+1)}$ through the source dependence. Under Assumption~\ref{ass:quotient-reg}, the quotient process on a finite state space has a unique stationary distribution and exhibits geometric mixing: $\|\Prob(Q_t^A \in \cdot \mid Q_0^A = q) - \mu\|_{\TV} \leq C_0 \rho^t$ for some $\rho < 1$ and $C_0 > 0$. The two-block delay in the pipeline creates a gap of $2n$ steps between the data used to select~$\theta^{(k+1)}$ and the block~$Q_A^{(k+1)}$ it governs. The correlation between these decays as $C_0 \rho^{2n}$. For any target tolerance $\delta > 0$, choosing blocklength
\[
n \;\geq\; \frac{2\log(C_0/\delta)}{\log(1/\rho)}
\]
ensures the per-block distortion deviation from $D_{\mathrm{WZ}}$ is at most $\delta$. Combined with the ergodic theorem applied to the joint process $(Q_A^{(k)}, \hat{Q}_A^{(k)})$, the time-averaged semantic distortion $\frac{1}{K}\sum_{k=1}^K \dintent^{(k)}$ converges almost surely to $\E[\dintent(Q_A, \hat{Q}_A)] \leq D$. The pipeline overhead from block~1 vanishes as $K \to \infty$, yielding $\bar{D}_K \leq D_{\mathrm{WZ}} + \delta + D_0/K$. This supplies an asymptotic-average achievability bridge from WZ reconstruction distortion to semantic distortion. Upgrading to a blockwise identification would require an explicit coupling that we do not provide.

\emph{Parameter optimization.} At the end of block~$k$,~$B$ has decoded $\hat{Q}_A^{(k-1)}$ and observed $Q_B^{(k-1)}$. It selects FSC parameters $\theta^{(k+1)}$ to minimize expected distortion given the decoded information: $\theta^{(k+1)} = \arg\min_{\theta \in \Theta_{m_B}} \E[\dintent \mid \hat{Q}_A^{(k-1)}, Q_B^{(k-1)}, \theta]$. The parameter space $\Theta_{m_B} = \Delta(\calA)^{m_B} \times \Delta(\{1,\ldots,m_B\})^{m_B \times |\calO|}$ (\Cref{def:protocol}) is a product of simplices, hence compact. The expected distortion is continuous in~$\theta$: the belief component~$b_h^B$ and policy component~$\pi_B(\cdot|h)$ are continuous functions of the FSC parameters (via the finite forward recursion), and $\dintent$ is continuous in beliefs and policies. By the extreme value theorem, the minimum is attained. Since~$B$ applies the optimized~$\theta^{(k+1)}$ \emph{causally} during block~$k{+}1$ (using only past decoded information), every action respects the protocol's per-step structure.

\textbf{Step~4 (Converse).}
Any semantic protocol $(E, D)$ at rate~$R$ achieving $\Dintent \leq D$ induces a valid Wyner-Ziv code in this common-history coarsening setting: the message sequence $M^T$ encodes $Q_A^T$,~$B$'s side information is $Q_B^T = \kappa_{A \to B}(Q_A^T)$, and the achieved distortion~$D$ is valid under~$\dintent$. Since $\{Q_t^A\}$ is stationary ergodic (Step~1), the WZ converse for stationary ergodic sources~\cite{csiszarkorner2011} applies: $R \geq R_{\mathrm{WZ}}(D)$.

Combining Steps~3--4 yields the benchmark identification. For i.i.d.\ sources, the identification is exact in both directions. For ergodic sources, the converse remains exact and the achievability is argued with explicit mixing-rate control; upgrading that final leg to an exact FSC-level pathwise identity remains open.
\end{proof}

\begin{remark}[Conservative nature of the bound]
\label{rem:conservative-bound}
The conditional entropy rate $\bar{h}(Q_A \mid Q_B)$ counts \emph{all} distinctions visible to~$A$ but not to~$B$. If some $\calQ_A$-subclasses are value-irrelevant (i.e., merging them does not change the optimal value), the true minimum rate could be lower. Formally, let $Q_A^{\mathrm{val}}$ denote the coarsening of~$\calQ_A$ retaining only value-relevant cells; then $H_\mu(Q_A^{\mathrm{val}} \mid Q_B) \leq \bar{h}(Q_A \mid Q_B)$, with equality when all $\calQ_A$-distinctions affect optimal value. This paper's bounds are therefore conservative; tightening them to the value-relevant quotient is an open problem.
\end{remark}

\begin{table}[ht]
\centering
\footnotesize
\begin{tabularx}{\linewidth}{@{}l l l Y r@{}}
\toprule
\textbf{Form of $\Rcrit$} & \textbf{Expression} & \textbf{Regime} & \textbf{Depends on} & \textbf{Chain5} \\
\midrule
Log-cardinality & $\log|\calQ_A| - \log|\calQ_B|$ & Worst-case & Quotient sizes only & 1.43 \\
Marginal entropy & $H(Q_A) - H(Q_B)$ & i.i.d.\ source & Policy visitation & 0.59 \\
WZ lossless rate & $H(Q_A \mid Q_B)$ & i.i.d.\ (BA) & Source + coarsening & 0.29$^{\ddagger}$ \\
Cond.\ entropy rate & $\bar{h}(Q_A \mid Q_B)$ & Stationary ergodic & Joint dynamics & $0.024^{\dagger}$ \\
\bottomrule
\end{tabularx}
\parbox{0.97\linewidth}{\footnotesize ${}^{\dagger}$Estimated from $50{,}000$ trajectories ($\times 100$ steps) with Miller-Madow correction ($+0.001$~bits). The null-pair calibration ($|\calQ_A|\!=\!|\calQ_B|\!=\!781$) leaves a residual of $0.030$~bits/step, so the entropy-rate estimate should be interpreted as a qualitative diagnostic rather than a precise benchmark.\\
${}^{\ddagger}$Under random-policy visitation; structured policies yield $H(Q_A \mid Q_B) = 0.08$~bits/step (\Cref{fig:structured-rd}), a $3.6\times$ reduction vs.\ random ($0.29$) and $19\times$ vs.\ the counting bound ($1.43$).}
\caption{Critical rate benchmark characterizations with Chain5 numerical values ($(m_A\!=\!16,\, m_B\!=\!1)$); see \Cref{tab:rcrit} for the condensed hierarchy. The WZ lossless rate $H(Q_A \mid Q_B)$ is computed via Blahut-Arimoto on visited quotient classes and is the most reliable numerical reference.}
\label{tab:rcrit-chain5}
\end{table}

%% ============================================================
\section{Alignment Applications}
\label{app:alignment-details}

This appendix provides extended discussion of alignment applications deferred from \Cref{sec:applications}.

Model the human as agent~$H$ with capacity $(m_H, T_H)$ and the AI as agent~$A$ with $(m_A, T_A) \gg (m_H, T_H)$.

\paragraph{RLHF.}
In RLHF~\cite{christiano2017}, human feedback provides $\approx$1--2 bits per comparison (see footnote in \S\ref{sec:applications}). If the capacity mismatch maps to quotient structures, \Cref{thm:align-cost} predicts the \emph{scaling form} $N_{\mathrm{comparisons}} \geq (\log|\calQ_A| - \log|\calQ_H|)/2 + \Omega(\log(1/\varepsilon))$: linear in the capacity gap, logarithmic in accuracy. This is a structural prediction about functional dependence, not a computable numerical bound---the precise quotient cardinalities $|\calQ_A|$, $|\calQ_H|$ for LLM-scale systems remain unknown. Estimating effective quotient sizes from learned representations (e.g., via probing classifiers, the lattice-gradient approach of \Cref{app:lattice-gradient}, or by identifying natural-language cross-probes that test whether a model distinguishes context~$A$ from context~$B$) is the key open problem connecting this theory to practice.

\paragraph{Interpretability and debate.}
Explanations are bandwidth-limited channels; the traversal theorem (\Cref{thm:traversal}) suggests routing through intermediate abstractions~\cite{elyaniv2010,kadavath2022}. Multiple debaters form a multi-access channel, increasing effective alignment bandwidth. By data processing, post-processing cannot improve alignment: $\Ralign(f(A) \to H;\, \varepsilon) \geq \Ralign(A \to H;\, \varepsilon)$; for intermediate agents: $\Ralign(A \to H;\, \varepsilon) \leq \sum_i \Ralign(\mathrm{Layer}_i \to \mathrm{Layer}_{i+1};\, \varepsilon/k)$.

\paragraph{Connection to LLM routing (\Cref{app:llm-routing}).}
The LLM routing experiment provides a concrete (if analogical) illustration of the phase transition in a practical setting. The 1-bit router operates at $R = 1$~bit/query. The framework predicts that routing quality depends on whether this rate exceeds~$\Rcrit$ for the effective quotient gap between the strong and weak models. The observation that single-token logprob routing \emph{fails} on MMLU-Pro (APGR~$= 0.457 <$ random~$= 0.502$) while self-consistency routing \emph{succeeds} (APGR~$= 2.216$) is consistent with a phase transition: the logprob probe family induces a coarser effective quotient (fewer distinguishable difficulty classes), pushing~$\Rcrit$ above the 1-bit channel; the richer self-consistency probes yield a finer quotient, bringing~$\Rcrit$ below 1~bit. This illustrates how the theory's qualitative predictions---that communication success depends on the interaction between channel capacity and the quotient structure induced by the probe family---manifest in practice.

%% ============================================================

\section{Proof of Theorem~\ref{thm:capacity}(i): Impossibility Below Critical Rate}
\label{app:proof-51i}

\begin{proof}
Fix a reachable $\calQ_B$-class $C_k$ with stationary mass $\pi(C_k) > 0$ and $r_k > 2^R$ reachable $\calQ_A$-subclasses inside it. At rate~$R$, the encoder can produce at most $2^R$ messages per step. Since~$B$'s decoder is an $m_B$-node FSC (\Cref{def:protocol}), its effective per-step information state within the common-history comparison regime is the pair (received message, current $\calQ_B$-class). Thus within~$C_k$ the decoder can distinguish at most $2^R$ of the $r_k$ subclasses. By pigeonhole, at least one decoder state merges at least two reachable $\calQ_A$-subclasses inside~$C_k$.

Conditional on visiting~$C_k$, the probability of such a merge is at least $1 - 2^R/r_k$, so the unconditional confusion probability satisfies
\[
p_{\mathrm{confuse}}
\;\geq\;
\pi(C_k)\,\bigl(1 - 2^R/r_k\bigr)
\;>\; 0.
\]
Whenever the decoder confuses two distinct merged $\calQ_A$-subclasses, the resulting intent distortion is at least~$c_M$ by definition. Therefore
\[
\Dintent \;\geq\; p_{\mathrm{confuse}} \cdot c_M \;>\; 0. \qedhere
\]
\end{proof}

%% ============================================================
\section{Proof of Theorem~\ref{thm:capacity}(ii): Constructive Decay in the One-Way Memoryless Regime}
\label{app:proof-51ii}

We prove the upper bound of \Cref{thm:capacity}(ii) only in the regime stated there: common-history coarsening, one-way observability, i.i.d.\ quotient source, and observation-Lipschitz reward. The proof imports the classical lossless WZ reliability exponent rather than deriving a new ambiguity-cell argument.

\begin{assumption}[Observation-Lipschitz Reward]
\label{ass:lip}
The reward function is $L_R$-observation-Lipschitz: for all histories $h, h'$ and all $\pi \in \Pim$,
$|\bar{R}(h,\pi) - \bar{R}(h',\pi)| \leq L_R \cdot \Wone\!\bigl(P_M^\pi(\calO_{t+1:T}|h),\, P_M^\pi(\calO_{t+1:T}|h')\bigr).$
\end{assumption}

\begin{lemma}[Distortion Propagation]
\label{lem:lip-prop}
If $B$ assigns the wrong $\calQ_A$-class at step~$t$, then $\Wone(\mu, \nu) \leq 1$ (discrete metric). If $\E[\Delta_t] \leq \delta$ at each step, then $|V^{\pi_A}(M) - V^{\pi_B}(M)| \leq L_R \cdot T \cdot \delta$.
\end{lemma}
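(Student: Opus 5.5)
The lemma has two parts, and I will treat them separately: a bound on the per-step belief/observation discrepancy when $B$ decodes the wrong class, and a telescoping bound on the value gap. For the first part, I would fix a step $t$ at which $B$'s decoded class $\hat{q}_t \neq Q_t^A$. I then compare the two induced laws $\mu := P_M(\cdot \mid Q_t^A)$ and $\nu := P_M(\cdot \mid \hat{q}_t)$ on the relevant finite outcome space (next observation or belief state). I would equip that space with the discrete metric $d(x,y) = \mathbf{1}\{x \neq y\}$. Under this metric, $\Wone$ coincides with total variation, so $\Wone(\mu,\nu) = \TV(\mu,\nu) \le 1$ for any pair of probability measures. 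This is the whole content of the first claim. I would define the per-step discrepancy as $\Delta_t := \Wone\bigl(P_M^{\pi_A}(\calO_{t+1:T}\mid h_t),\, P_M^{\pi_B}(\calO_{t+1:T}\mid h_t)\bigr)$ in this metric. Then $\Delta_t = 0$ when the class is decoded correctly (by \Cref{ass:intent-meas} together with the definition of $\equiv_{m_A,T_A}$), and $\Delta_t \le 1$ otherwise. Hence $\E[\Delta_t] \le \Prob(\text{wrong class at } t)$, which is exactly the form needed in \Cref{app:proof-51ii}.

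For the value bound, I would write $V^{\pi}(M) = \sum_{t=1}^{T} \E_{h_t}\bigl[\bar{R}_M(h_t,\pi)\bigr]$ and compare the two policies step by step. The plan is a hybrid (telescoping) argument. I define $\pi^{(t)}$ as the policy that follows $\pi_B$ up to step $t$ and $\pi_A$ thereafter, so that $V^{\pi_A} - V^{\pi_B} = \sum_t \bigl(V^{\pi^{(t-1)}} - V^{\pi^{(t)}}\bigr)$. Each summand differs only in the behaviour from step $t$ onward, conditioned on a common history $h_t$. By \Cref{ass:lip} (in the form already used for \Cref{prop:val-bound}), the single-step reward difference is then at most $L_R \cdot \Delta_t$. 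Taking expectations and summing the $T$ terms gives $|V^{\pi_A}(M) - V^{\pi_B}(M)| \le L_R \sum_t \E[\Delta_t] \le L_R\, T\, \delta$.

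The main obstacle is the telescoping step. \Cref{ass:lip} as stated compares two histories under a single policy $\pi$, whereas here I need to compare two policies at a single history. I would handle this the same way as in the proof of \Cref{prop:val-bound}: read the Lipschitz assumption as a bound on reward differences in terms of the $\Wone$ distance between the induced future-observation laws, whichever argument varies. This makes the per-step inequality $|\bar{R}_M(h_t,\pi_A) - \bar{R}_M(h_t,\pi_B)| \le L_R\,\Delta_t$ available. A second point needs care. The state distribution at step $t$ differs between the hybrids, so I would take expectations under the sender-induced law $P_M^{\pi_A}$, which is the one used in defining $\Dintent$. With $\|R\|_\infty \le 1$ (\Cref{rem:scales}), any residual mismatch is absorbed, because each $\Delta_t \le 1$ already controls the total variation between the two conditional laws.
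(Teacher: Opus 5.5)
Your treatment of the first claim is fine and matches the paper: under the discrete metric, $\Wone$ is total variation and hence at most $1$. The gap is in the telescoping step for the value bound.

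Your hybrids $\pi^{(t-1)}$ and $\pi^{(t)}$ agree on steps $1,\dots,t-1$ (both follow $\pi_B$) and on steps $t+1,\dots,T$ (both follow $\pi_A$). They differ only in the action at step $t$. This has two consequences.
\begin{enumerate}[nosep]
\item The summand $V^{\pi^{(t-1)}}-V^{\pi^{(t)}}$ is an expectation over $h_t\sim P_M^{\pi_B}$, not $P_M^{\pi_A}$.
\item More seriously, the step-$t$ action changes the law of all later states and observations. So the summand is a difference of \emph{continuation} values (rewards at steps $t,\dots,T$). It is not the single-step difference $\bar{R}_M(h_t,\pi_A)-\bar{R}_M(h_t,\pi_B)$ that you bound by $L_R\Delta_t$.
\end{enumerate}
Reading \Cref{ass:lip} as a bound on reward-to-go does not rescue this. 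The future law that then controls the summand is that of ``$\pi_B$ at step $t$, then $\pi_A$''. That is not the full $\pi_B$-continuation appearing in your $\Delta_t$, and it is not dominated by it in general.

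With only $\|R\|_\infty\le 1$, the omitted continuation term can be bounded only by $(T-t)$ times the step-$t$ action discrepancy. Summing over $t$ then gives a bound quadratic in $T$, not $L_R T\delta$. Your closing remark that any residual mismatch is ``absorbed'' by $\|R\|_\infty\le 1$ and $\Delta_t\le 1$ does not fix this: boundedness caps that term but neither removes it nor ties it to $L_R\delta$.

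The paper never compares two policies through a hybrid. It imports the per-stage value-function error bound of \cite{nixon2026}, $|\bar{R}_M(H_t,\pi)-\bar{R}_B(H_t,\pi)|\le L_R\Delta_t$. That bound compares the \emph{same} policy's stage reward under the true model and under $B$'s abstraction. Per-stage control is taken from that theorem, and summing over the $T$ stages (as in \Cref{prop:val-bound}) is then immediate. To make your self-contained route work, you need one of two things:
\begin{enumerate}[nosep]
\item an ingredient of that kind, i.e.\ a per-stage error bound in which each stage's error is charged to that stage alone; or
\item an explicit Lipschitz assumption on reward-to-go in the law of $\calO_{t+1:T}$. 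In that case a single application at the initial history gives $L_R\delta\le L_R T\delta$, and the telescoping is unnecessary.
\end{enumerate}

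Separately, your claim that $\Delta_t=0$ whenever the class is decoded correctly does not follow from \Cref{ass:intent-meas} and the definition of $\equiv_{m_A,T_A}$. Both of those concern a single policy across histories, whereas $\Delta_t$ compares two policies at one history. Decoding the right $\calQ_A$-class does not make $B$'s $m_B$-node continuation reproduce $\pi_A$'s future observation law. That claim is not part of the lemma, but it is the link you draw to \Cref{app:proof-51ii}.
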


\begin{proof}
The Wasserstein bound follows from $\Wone \leq \|\cdot\|_{\mathrm{TV}} \leq 1$. For propagation: by the value-function error bound of \cite{nixon2026} (Theorem: Value-function error bound), $|\bar{R}_M(H_t, \pi) - \bar{R}_B(H_t, \pi)| \leq L_R \cdot \Delta_t$. Summing over $T$ stages: $\dval \leq L_R \cdot T \cdot \dbeh \leq L_R \cdot T \cdot \delta$.
\end{proof}

\begin{proof}[Proof of Theorem~\ref{thm:capacity}(ii)]
\textbf{Step~1 (lossless WZ code above the benchmark).}
Under the stated assumptions, \Cref{cor:iid-achievability} identifies the semantic problem exactly with lossless WZ coding on the i.i.d.\ quotient source, and \Cref{cor:exact-rcrit} gives the lossless benchmark $R_{\mathrm{WZ}}(0) = H(Q_A \mid Q_B)$. For every rate $R > H(Q_A \mid Q_B)$, the classical WZ/Slepian-Wolf reliability function~\cite{csiszarkorner2011} yields a block code of length~$T$ with reconstruction error probability
\[
P_e^{(T)} \;\leq\; 2^{-T E_{\mathrm{WZ}}(R)}
\]
for some exponent $E_{\mathrm{WZ}}(R) > 0$.

\textbf{Step~2 (propagation to semantic distortion).}
On blocks decoded correctly, the induced semantic distortion is zero because the quotient block is reconstructed exactly. On error blocks, the per-step behavioral discrepancy is at most~$1$, so by \Cref{lem:lip-prop} the total semantic distortion over the block is at most $L_R \cdot T$. Therefore
\[
\Csem(R) \;\leq\; L_R \cdot T \cdot P_e^{(T)}
\;\leq\;
L_R \cdot T \cdot 2^{-T E_{\mathrm{WZ}}(R)}. \qedhere
\]
\end{proof}

\paragraph{Consistency with parts (i) and (iii).}
Part~(ii) is intentionally narrower than part~(i): it is a constructive theorem only for the one-way i.i.d.\ benchmark regime. Part~(i) remains the structural lower bound outside that regime, and part~(iii) still gives perfect alignment by exact quotient transmission when $R \geq \log|\calQ_A|$.

%% ============================================================
\section{Proof of Theorem~\ref{thm:converse}: Shrinking-Distortion Converse}
\label{app:proof-61}

\paragraph{Formal setup.}
A protocol $(E, D)$ consists of encoder $E_t: (\calO^A)^t \to M_t$ with $H(M_t \mid M^{t-1}) \leq R$, and bounded decoder $D = (\pi_B, U)$ where $\pi_B \in \Pi_{m_B,T_B}$ is an $m_B$-node FSC and $U: \{1,\ldots,2^R\} \to \Theta_{m_B}$ reconfigures FSC parameters upon each message (see \Cref{def:protocol}). Under \Cref{ass:common-history}, both quotient processes are evaluated on the same source history~$h_t$, so $Q_t^A = [h_t]_{m_A,T_A}$ and $Q_t^B = [h_t]_{m_B,T_B}$. By \Cref{ass:intent-meas}, the sender intent $\mathrm{Intent}_t = \mathrm{Intent}_A(h_t)$ is determined by~$Q_t^A$, so $H(\mathrm{Intent}^T \mid Q_A^T) = 0$.

\begin{lemma}[Information Bound]
\label{lem:markov}
Under Assumptions~\ref{ass:common-history}, \ref{ass:quotient-reg}, and~\ref{ass:one-way} and protocol $(E, D)$,
\[
I(Q_A^T;\, M^T \mid Q_B^T) \;\leq\; T \cdot R.
\]
\end{lemma}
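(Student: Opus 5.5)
The plan is to bound the conditional mutual information by the total entropy of the message stream. We write
\[
I(Q_A^T;\, M^T \mid Q_B^T) \;=\; H(M^T \mid Q_B^T) - H(M^T \mid Q_A^T, Q_B^T) \;\leq\; H(M^T \mid Q_B^T) \;\leq\; H(M^T).
\]
The first inequality is nonnegativity of conditional entropy for the discrete message alphabet. The second is the fact that conditioning reduces entropy. Note that $Q_B^T = \kappa_{A\to B}(Q_A^T)$ by \Cref{prop:canonical-coarsening} under \Cref{ass:common-history}. So the term $H(M^T \mid Q_A^T, Q_B^T)$ is just $H(M^T \mid Q_A^T)$, and it is nonnegative whether or not the encoder randomizes.

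Next we apply the chain rule $H(M^T) = \sum_{t=1}^T H(M_t \mid M^{t-1})$. Each summand is at most $R$ by the rate constraint built into the encoder of \Cref{def:protocol}. Summing over $t$ gives $H(M^T) \leq TR$, and hence $I(Q_A^T; M^T \mid Q_B^T) \leq TR$.

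Assumptions~\ref{ass:quotient-reg} and~\ref{ass:one-way} are not needed for the inequality itself. Their role is to make the objects well defined. One-way observability (\Cref{ass:one-way}) ensures that the sender-side source $\{Q_t^A\}$, and therefore the joint law of $(Q_A^T, Q_B^T, M^T)$, does not depend on $B$'s reconfigured actions. This means the mutual information is computed under a single fixed joint distribution induced by the encoder, rather than one that shifts with the decoder's feedback. We will check this point explicitly: if $B$'s actions could influence $O^A$, the messages would still satisfy the entropy bound, but the source would no longer be protocol-independent.

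The main obstacle is interpretive rather than technical. We must make sure that no hidden channel from $A$ to $B$ exists beyond $M^T$ and $Q_B^T$. In particular, $B$'s own FSC node state must not carry extra information about $Q_A^T$. Under the common-history regime, $B$'s observations are functions of $h_t$, and its internal state is a function of $(M^T, Q_B^T)$ together with $B$'s private randomness, which is independent of $h_t$. Because of this, the data-processing inequality, applied later in the Fano step, will use only $(M^T, Q_B^T)$. The bound stated here concerns only $M^T$, so it follows directly from the entropy calculation above.
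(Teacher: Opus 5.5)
Your proof is correct and is the same argument the paper gives: $I(Q_A^T; M^T \mid Q_B^T) \le H(M^T \mid Q_B^T) \le H(M^T) = \sum_{t=1}^{T} H(M_t \mid M^{t-1}) \le TR$, using nonnegativity of conditional entropy, the fact that conditioning reduces entropy, the chain rule, and the per-step rate constraint of \Cref{def:protocol}. You are also right that Assumptions~\ref{ass:quotient-reg} and~\ref{ass:one-way} play no role in the inequality itself, and your closing remarks about $B$'s internal state are not needed for this lemma.
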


\begin{proof}
Using the chain rule and the rate constraint,
\[
I(Q_A^T;\, M^T \mid Q_B^T)
\;\leq\;
H(M^T \mid Q_B^T)
\;\leq\;
H(M^T)
\;=\;
\sum_{t=1}^T H(M_t \mid M^{t-1})
\;\leq\;
T R. \qedhere
\]
\end{proof}

\begin{lemma}[Semantic Fano Inequality]
\label{lem:fano}
Under Assumptions~\ref{ass:common-history}, \ref{ass:quotient-reg}, and~\ref{ass:one-way}, if a horizon-$T$ protocol achieves distortion $\Dintent \leq \varepsilon$ with $\varepsilon' := 2\varepsilon / c_M$ and $T\varepsilon' \leq 1/2$, then
\[
\frac{1}{T}H(Q_A^T \mid M^T, Q_B^T) \;\leq\; h(\varepsilon') + \varepsilon' \log|\calQ_A| + o_T(1).
\]

\emph{Regime restriction.} The block error probability $P_e^{(\mathrm{block})} \leq T\varepsilon'$ from the union bound requires $T\varepsilon' \leq 1/2$ for Fano's inequality to be non-vacuous, i.e., $\varepsilon \leq c_M/(4T)$. This restriction grows \emph{tighter} with horizon~$T$. Consequently, the converse (Theorem~\ref{thm:converse}) is formally operative only in the regime $\varepsilon = O(1/T)$. As in \Cref{rem:fano-regime}, we separate empirical roles: \Cref{fig:phase-transition} uses fixed $\varepsilon = 0.1$ to illustrate the \emph{structural} phase transition (\Cref{thm:capacity})---not the converse---while \Cref{fig:shrinking-eps} runs $\varepsilon_T = 0.4/T$ over varying~$T$, matching the shrinking-distortion \emph{scaling class} of this lemma. Neither figure should be read as a pointwise empirical proof of the converse bound; the asymptotic form ($T \to \infty$ with $\varepsilon \to 0$ at rate $O(1/T)$) remains the formally justified regime. See \Cref{app:experiments} for protocol details.
\end{lemma}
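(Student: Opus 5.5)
The plan is a Fano argument at the level of the whole block, using a nearest-intent decoder. By \Cref{ass:intent-meas}, $\dintent$ is a well-defined function on $\calQ_A\times\calQ_A$. I would therefore define an estimator $\hat{Q}_A^T=g(M^T,Q_B^T)$ coordinatewise, as follows. At each step $t$, the receiver's behaviour $(b_t^B,\pi_B(\cdot\mid h_t),V^{\pi_B}(h_t))$ is a function of $(M^T,Q_B^T)$: it is a function of the messages and of $B$'s own quotient state, which under \Cref{ass:common-history} is $Q_t^B$. I would set $\hat Q_t^A$ to be the $\calQ_A$-class inside the same $\calQ_B$-class $Q_t^B$ whose intent is nearest to $B$'s realized intent tuple.

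The first step is a per-step error bound. If $\hat Q_t^A\neq Q_t^A$, then both classes lie in the same merged $\calQ_B$-class. The triangle inequality for the three-term $\dintent$ (each component is a metric or pseudometric) then gives
\[
c_M\le \dintent(\mathrm{Intent}_A(Q_t^A),\mathrm{Intent}_A(\hat Q_t^A))\le 2\,\dintent(A,B\mid h_t),
\]
because $\hat Q_t^A$ is at least as close to $B$'s tuple as $Q_t^A$ is. Taking expectations gives $\Prob(\hat Q_t^A\ne Q_t^A)\le 2\E[\dintent\mid\text{step } t]/c_M$. Reading $\Dintent\le\varepsilon$ as a per-step bound (if it is only an average over steps, I would apply the bound to the average), this yields a per-step error $\le\varepsilon'$. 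The union bound over the $T$ steps then gives a block error $P_e\le T\varepsilon'\le 1/2$.

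The second step applies Fano's inequality to the estimator $g$. This gives $H(Q_A^T\mid M^T,Q_B^T)\le h(P_e)+P_e\log|\calQ_A|^T$. To reach a per-step form, I would instead use the per-coordinate Fano bound together with subadditivity: $H(Q_A^T\mid M^T,Q_B^T)\le\sum_t H(Q_t^A\mid \hat Q_t^A)\le\sum_t[h(p_t)+p_t\log|\calQ_A|]$. Concavity of $h$ then gives at most $T[h(\bar p)+\bar p\log|\calQ_A|]$, where $\bar p\le\varepsilon'$. The condition $\varepsilon'\le 1/(2T)\le1/2$ ensures monotonicity of $h$ on the relevant range. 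Dividing by $T$ gives the claim, with the $o_T(1)$ absorbing the block-$h(P_e)/T$ version if that route is used instead; the block route gives $h(T\varepsilon')/T+T\varepsilon'\log|\calQ_A|$, which is also bounded and vanishing.

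The main obstacle is the first step. It requires justifying that $B$'s per-step intent tuple is measurable with respect to $(M^T,Q_B^T)$; this is exactly the decoder-side-information modeling assumed in the proof of \Cref{thm:capacity}(i). It also requires the triangle inequality for $\dintent$ when it is evaluated between a sender intent and a receiver tuple. I would handle the latter by viewing all tuples as points in the product metric space $\Delta(S)\times\Delta(\calA)\times\mathbb{R}$ with the sum metric, so that $c_M$ is a minimum distance between distinct class points. Halving that distance gives the factor $2$ in $\varepsilon'=2\varepsilon/c_M$.
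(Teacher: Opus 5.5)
Your proposal is correct. The first half matches the paper. You use the same nearest-intent decoder built from $(M^T,Q_B^T)$ and the same $c_M/2$ separation; your triangle-inequality argument in the product space $\Delta(S)\times\Delta(\calA)\times\mathbb{R}$ is exactly what the paper's one-line claim needs. You also get the same averaged bound $\bar p\le\varepsilon'$. Both arguments rest on the same unproved modelling step, that $B$'s realized intent tuple is a function of $(M^T,Q_B^T)$; the paper simply builds this into the definition of its decoder.

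The routes diverge at the aggregation step. The paper proceeds as follows:
\begin{itemize}
\item it cuts the horizon into blocks of length $n_T=\lfloor\sqrt{T}\rfloor$;
\item it union-bounds each block's error by $n_T\varepsilon'$ and applies $|\calQ_A|^{n_T}$-ary Fano per block;
\item it chains over blocks and uses $T\varepsilon'\le 1/2$ to push $h(n_T\varepsilon')$ and the remainder into $o_T(1)$.
\end{itemize}
Your per-letter route is the standard weak-converse argument: $H(Q_A^T\mid M^T,Q_B^T)\le\sum_t H(Q_t^A\mid\hat Q_t^A)$, then scalar Fano, then concavity of $h$. It is strictly stronger in three ways:
\begin{itemize}
\item it gives $\frac{1}{T}H(Q_A^T\mid M^T,Q_B^T)\le h(\varepsilon')+\varepsilon'\log|\calQ_A|$ with no $o_T(1)$ term;
\item it needs only $\varepsilon'\le 1/2$, not $T\varepsilon'\le 1/2$;
\item it only ever uses the averaged error, never per-step control.
\end{itemize}
The paper's within-block bound $n_T\varepsilon'$ implicitly assumes per-step control. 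Its block-Fano display also multiplies $\log(|\calQ_A|^{n_T}-1)$ by $\varepsilon'$, where Fano actually gives $n_T\varepsilon'$. Both slips are harmless only because $T\varepsilon'\le 1/2$ makes everything $o_T(1)$. The blocking buys nothing your argument does not already give. In fact your route shows the lemma's ``regime restriction'' is an artifact of the blocking. Plugged into Steps 1--5 of the converse, it yields $R_T\ge(h_A-h_B)-h(\varepsilon')-\varepsilon'\log|\calQ_A|-o_T(1)$ for any fixed $\varepsilon\le c_M/4$.

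You should drop your closing aside on the whole-block route. With only $P_e\le T\varepsilon'\le 1/2$, the term $\frac{1}{T}P_e\log|\calQ_A|^T=P_e\log|\calQ_A|$ can be as large as $\frac{1}{2}\log|\calQ_A|$, so it is bounded but not vanishing. The lemma's right-hand side, by contrast, is $o_T(1)$ under $T\varepsilon'\le 1/2$, since $\varepsilon'\le 1/(2T)$. That route would need $T\varepsilon'\to 0$; your per-letter route needs no such assumption.
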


\begin{proof}
Let $Y^T := (M^T, Q_B^T)$ denote the full decoder-side information. For each step~$t$, define $\hat{Q}_t = g_t(Y^T)$ as the nearest-intent decoder: among the reachable $\calQ_A$-subclasses consistent with the observed $\calQ_B$-class, choose the one whose induced sender intent is closest (under $\dintent$) to the receiver intent realized by the protocol. Because distinct merged $\calQ_A$-subclasses are separated by at least~$c_M$, a nearest-neighbor error implies the realized per-step distortion is at least~$c_M/2$. Therefore, with $Z_t := \mathbf{1}\{\hat{Q}_t \neq Q_t^A\}$,
\[
\varepsilon \;\geq\; \Dintent = \frac{1}{T}\sum_{t=1}^T \E[d_{\mathrm{intent},t}]
\;\geq\;
\frac{c_M}{2} \cdot \frac{1}{T}\sum_{t=1}^T \Prob(Z_t = 1),
\]
giving the average per-step error probability $\bar{P}_e := \frac{1}{T}\sum_t \Prob(Z_t = 1) \leq 2\varepsilon/c_M =: \varepsilon'$.

\emph{Per-step to block conversion.} Choose blocklength $n_T := \lfloor \sqrt{T} \rfloor$. Write $T = B_T n_T + r_T$ with $B_T := \lfloor T/n_T \rfloor$ full blocks and remainder $0 \leq r_T < n_T$. For each full block~$b$, let $Q_{A,b}^{n_T}$ be the corresponding length-$n_T$ substring of~$Q_A^T$, and let $\hat{Q}_{A,b}^{n_T}$ be its MAP estimate from~$Y^T$. Since the average per-step error probability satisfies $\bar{P}_e \leq \varepsilon'$, a union bound over the~$n_T$ symbols in block~$b$ gives
\[
\Prob\!\bigl(Q_{A,b}^{n_T} \neq \hat{Q}_{A,b}^{n_T}\bigr) \;\leq\; n_T \varepsilon'.
\]
Because $T\varepsilon' \leq 1/2$, we have $n_T \varepsilon' \leq \sqrt{T}\,\varepsilon' \leq 1/(2\sqrt{T})$, so Fano's inequality is eventually non-vacuous on every full block. Applying the $|\calQ_A|^{n_T}$-ary Fano bound to block~$b$ yields
\[
\frac{1}{n_T}H(Q_{A,b}^{n_T} \mid Y^T) \;\leq\; \frac{1}{n_T}h(n_T\varepsilon') + \varepsilon' \cdot \frac{1}{n_T}\log(|\calQ_A|^{n_T} - 1).
\]
Using the block chain rule and bounding the remainder by $r_T \log|\calQ_A|$,
\[
\frac{1}{T}H(Q_A^T \mid Y^T)
\;\leq\;
\frac{B_T}{T}h(n_T\varepsilon')
\;+\;
\frac{B_T n_T}{T}\,\varepsilon' \cdot \frac{1}{n_T}\log(|\calQ_A|^{n_T} - 1)
\;+\;
\frac{r_T}{T}\log|\calQ_A|.
\]
Now $n_T \to \infty$, $r_T/T \to 0$, and $n_T\varepsilon' \leq 1/(2\sqrt{T}) \to 0$, so $\frac{B_T}{T}h(n_T\varepsilon') = o_T(1)$ and
\[
\frac{1}{n_T}\log(|\calQ_A|^{n_T} - 1) = \log|\calQ_A| + o_T(1).
\]
Therefore
\[
\frac{1}{T}H(Q_A^T \mid M^T, Q_B^T)
\;\leq\;
\varepsilon' \log|\calQ_A| + o_T(1)
\;\leq\;
h(\varepsilon') + \varepsilon' \log|\calQ_A| + o_T(1). \qedhere
\]
\end{proof}

\begin{proof}[Proof of Theorem~\ref{thm:converse}]
\textbf{Setup.} Fix horizon~$T$ and protocol $(E^{(T)}, D^{(T)})$ with per-step rate~$R_T$ and distortion~$\varepsilon_T$; write $\varepsilon_T' := 2\varepsilon_T / c_M$.

\textbf{Step~1.} Since~$Q_B^T$ is decoder side information and the messages are the only communicated bits, decompose:
\[
I(Q_A^T;\, M^T \mid Q_B^T) = H(Q_A^T \mid Q_B^T) - H(Q_A^T \mid M^T, Q_B^T).
\]

\textbf{Step~2.} Since $\calQ_A$ refines $\calQ_B$, knowing $Q_A^T$ determines $Q_B^T$, so $I(Q_A^T; Q_B^T) = H(Q_B^T)$. Therefore:
\[
H(Q_A^T \mid Q_B^T) = H(Q_A^T) - H(Q_B^T).
\]
By Assumption~\ref{ass:quotient-reg}, the quotient processes are stationary ergodic with entropy rates $h_A, h_B$ (Remark~\ref{rem:entropy-rate}): $H(Q_A^T) = T h_A + o(T)$ and $H(Q_B^T) = T h_B + o(T)$. Hence $H(Q_A^T \mid Q_B^T) = T(h_A - h_B) + o(T)$.

\textbf{Step~3.} By Lemma~\ref{lem:fano}: $H(Q_A^T \mid M^T, Q_B^T) \leq T \cdot h(\varepsilon_T') + \varepsilon_T' \cdot T \log|\calQ_A| + o(T)$.

\textbf{Step~4.} By Lemma~\ref{lem:markov}: $T \cdot R_T \geq I(Q_A^T;\, M^T \mid Q_B^T)$.

\textbf{Step~5.} Combining Steps~1--4 and dividing by~$T$ gives
\[
R_T \;\geq\; (h_A - h_B) - h(\varepsilon_T') - \varepsilon_T' \log|\calQ_A| - o_T(1).
\]
If $\varepsilon_T' \to 0$, taking $\liminf_{T \to \infty}$ yields $\liminf_T R_T \geq h_A - h_B$. Under near-uniform quotient distributions ($h_A \approx \log|\calQ_A|$, $h_B \approx \log|\calQ_B|$; see Remark~\ref{rem:entropy-rate}), this recovers the log-cardinality reference. The one-way WZ reduction (\Cref{thm:wz}) sharpens the converse at $D = 0$ to the benchmark $R \geq R_{\mathrm{WZ}}(0) = H(Q_A \mid Q_B)$; the matching achievability direction is exact only for i.i.d.\ sources and otherwise argued separately in \Cref{sec:wynerziv}. \qedhere
\end{proof}

\begin{remark}[Decoder-side interpretation]
The theorem does not require an additional slack parameter. If one nevertheless wants to compare the converse bound to the information actually extracted from messages, the natural quantity is the residual uncertainty
\[
\frac{1}{T}H(Q_A^T \mid M^T, Q_B^T)
\;=\;
\frac{1}{T}H(Q_A^T \mid Q_B^T) - \frac{1}{T}I(Q_A^T;\, M^T \mid Q_B^T).
\]
Under Assumption~\ref{ass:quotient-reg}, this equals $(h_A - h_B) - \frac{1}{T}I(Q_A^T;\, M^T \mid Q_B^T) + o_T(1)$, so it measures how far the received messages fall short of saturating the conditional entropy rate available beyond~$Q_B^T$. At the lossless one-way WZ benchmark, this residual vanishes asymptotically.
\end{remark}

%% ============================================================
\section{Proof of Theorem~\ref{thm:traversal}: Alignment Traversal}
\label{app:proof-traversal}

\begin{proof}
For any $[h]_A \in \calQ_A$, write $\varphi = \varphi_{k-1} \circ \cdots \circ \varphi_1$ and apply the triangle inequality for~$\Wone$:
\begin{align*}
\Wone\!\bigl(P_M(\cdot|[h]_A),\, P_M(\cdot|\varphi([h]_A))\bigr)
&\leq \sum_{i=1}^{k-1} \Wone\!\bigl(P_M(\cdot|\varphi_{i-1:1}([h]_A)),\, P_M(\cdot|\varphi_{i:1}([h]_A))\bigr) \\
&\leq \sum_{i=1}^{k-1} L_i \cdot \dQ(\Pi_i, \Pi_{i+1} \mid M),
\end{align*}
where $\varphi_{i:1} := \varphi_i \circ \cdots \circ \varphi_1$ and the second inequality uses the $L_i$-Lipschitz property of each~$\varphi_i$. Taking the supremum over $[h]_A$ yields $\dQ(A, B \mid M) \leq \sum_i L_i \cdot \dQ(\Pi_i, \Pi_{i+1} \mid M)$.

For the rate statement, fix protocols for each adjacent pair $(\Pi_i,\Pi_{i+1})$ achieving distortion at most $\varepsilon/k$ with rates arbitrarily close to $\Rsem(\Pi_i \to \Pi_{i+1};\, \varepsilon/k)$. Compose these protocols sequentially through the intermediate agents. The total rate is the sum of the stage rates, and the total distortion is at most $\varepsilon$ by the triangle inequality / budget split. Taking infima over the stage protocols gives
\[
\Rsem(A \to B;\, \varepsilon) \leq \sum_i \Rsem(\Pi_i \to \Pi_{i+1};\, \varepsilon/k),
\]
which is exactly the claimed traversal bound.
\end{proof}

%% ============================================================
\section{Codebook Performance Bound}
\label{app:codebook-proof}

\begin{theorem}[Codebook Performance, restated]
The semantic codebook of \Cref{def:codebook} achieves
$\Dintent \leq O(|\calQ_A|^{1/d} \cdot 2^{-R/d})$,
where $d$ is the effective dimension of the intent space.
\end{theorem}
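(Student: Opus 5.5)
The plan is to reduce the codebook bound to a classical high-resolution quantization estimate on the finite set of class-centroid intents, and then pass from the quantization error to $\Dintent$ through the continuity of $\dintent$ in the intent vector. First, using \Cref{ass:intent-meas}, each class $[h]\in\calQ_A$ carries a single intent, and hence a well-defined centroid $\bar{I}([h])$. I represent these centroids as points of a bounded set $\mathcal{K}\subset\mathbb{R}^d$. Here $d$ is the effective dimension, and the ambient coordinates (beliefs, action distributions, and values with $\|R\|_\infty\le 1$, \Cref{rem:scales}) give a diameter bound $\mathrm{diam}(\mathcal{K})\le \Delta_0$, with $\Delta_0$ depending only on $T$ and $|S|$. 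Next I show that $\dintent$ is Lipschitz in the intent representation. Its three terms are $\Wone$ on beliefs, which is $\ell_1$-Lipschitz on the finite simplex; $\TV$ on action distributions, which is $\tfrac12\ell_1$; and the absolute value difference. So there is a constant $L_I$ with $\dintent(\mathrm{Intent},\hat{\mathrm{Intent}})\le L_I\|\mathrm{Intent}-\hat{\mathrm{Intent}}\|$. Consequently $\Dintent \le L_I\,\E\bigl[\|\bar I(Q_A)-c(Q_A)\|\bigr]$, where $c(\cdot)$ is the codeword assigned by the encoder and the decoder outputs that codeword's recommended action distribution.

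The core step is a covering argument that does not depend on $k$-means++ specifically. I would cover $\mathcal{K}$ with a cubic grid of $2^R$ cells and side length $\Delta_0 2^{-R/d}$, so every centroid lies within $\sqrt d\,\Delta_0 2^{-R/d}$ of some grid point. The optimal $2^R$-point quantizer therefore has worst-case error at most this quantity, and hence $\ell_1$-error at most $d\,\Delta_0 2^{-R/d}$. I then need to recover the $|\calQ_A|^{1/d}$ factor. It enters through the case $2^R<|\calQ_A|$: when the points are packed as $|\calQ_A|$ well-separated centroids filling $\mathcal{K}$ at spacing $\Delta_0|\calQ_A|^{-1/d}$, a scale-normalized grid relative to that spacing gives error $\lesssim (|\calQ_A|/2^R)^{1/d}\cdot\Delta_0|\calQ_A|^{-1/d}\cdot|\calQ_A|^{1/d}$. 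I plan to state the bound in the form $C_d|\calQ_A|^{1/d}2^{-R/d}$ with $C_d=L_I d\Delta_0$, which dominates both regimes because $|\calQ_A|\ge1$. Since $O(\cdot)$ is an upper bound, the extra factor only loosens it. When $R\ge\log|\calQ_A|$ the error is $0$, which is consistent with \Cref{thm:capacity}(iii).

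To tie this to the actual algorithm, I invoke the $O(\log 2^R)$-competitiveness of $k$-means++ in expectation~\cite{arthur2007}. That guarantee concerns squared error, so by Jensen the expected error is at most $O(\sqrt{R})$ times the optimum; alternatively I take the best of several seedings, or simply analyze the optimal $2^R$-means codebook and note that Lloyd refinement does not increase the cost. I expect this step to be the main obstacle, because the $\sqrt{R}$ (or $\log$) competitiveness factor is not absorbed by $2^{-R/d}$ inside a pure $O(\cdot)$. I would therefore state the theorem for the optimal quantizer, or for $k$-means++ followed by Lloyd iterations to a local optimum initialized from the grid, and record the $k$-means++ factor as a polylogarithmic correction. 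A secondary subtlety is making ``effective dimension'' precise. I would define it as the dimension of a subspace onto which the centroids project with residual at most $\eta$, which adds an additive $L_I\eta$ term that is treated as negligible by assumption.
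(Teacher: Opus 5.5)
Your proposal is correct and shares the paper's two-step skeleton. First, a componentwise Lipschitz transfer moves intent-vector error to $\dintent$ (the paper uses exactly $\dintent\le\|I-\hat I\|_1$). Second, a quantization-error bound controls that vector error. The routes differ in the second step.

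The paper cites average-case $k$-means theory for a \emph{uniform} $d$-dimensional source, $\E\|I-\hat I\|\le C_d(|\calQ_A|/K)^{1/d}$ with $K=2^R$. You instead use a deterministic grid covering. It yields the worst-case bound $O(\Delta_0 2^{-R/d})$ for every source law, with no dependence on $|\calQ_A|$, and you then get the stated form from $|\calQ_A|\ge 1$.

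Your version buys distribution-freeness, which matters because actual quotient visitation (e.g.\ on Chain5) is far from uniform. It also makes the scale explicit: the constant depends on the diameter $\Delta_0$, hence on $T$ through the value coordinate. The paper's assertion that $C_d$ depends only on dimension tacitly measures distances in units of the inter-centroid spacing $\Delta_0|\calQ_A|^{-1/d}$. Restoring that unit turns $(|\calQ_A|/K)^{1/d}$ into exactly your $\Delta_0 K^{-1/d}$, confirming that the $|\calQ_A|^{1/d}$ factor is slack. Your ``scale-normalized grid'' sentence multiplies in an extra $|\calQ_A|^{1/d}$ for no reason; just drop it. The paper's route is shorter and lands directly on the displayed form, but it silently treats the $k$-means++ codebook as an optimal quantizer.

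The obstacle you flag there does not force you to weaken or restate the theorem, because the slack you identified absorbs the $k$-means++ factor.
\begin{itemize}[nosep]
\item If $2^R\ge|\calQ_A|$, $D^2$-seeding picks every distinct centroid and the error is $0$.
\item If $2^R<|\calQ_A|$, let $\phi$ be the $p(q_A)$-weighted squared error. Combine three facts: the guarantee $\E\phi\le 8(\ln K+2)\,\phi_{\mathrm{OPT}}$ of~\cite{arthur2007}; the bound $\phi_{\mathrm{OPT}}\le r_K^2$ from your grid, with $r_K=O(\sqrt d\,\Delta_0 2^{-R/d})$; and Jensen. Together they give $\E\|I-\hat I\|_2\le\sqrt{8(\ln|\calQ_A|+2)}\,r_K$. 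Finally, $\sqrt{8(\ln n+2)}\le\sqrt{8(d+2)}\,n^{1/d}$ because $\ln n\le d\,n^{1/d}$.
\end{itemize}
So the expected distortion over the seeding is $O(|\calQ_A|^{1/d}2^{-R/d})$ in precisely the stated form. You need neither a polylogarithmic correction nor a switch to the optimal quantizer or to Lloyd-from-grid.

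This does require seeding with the source weights, or the paper's uniform-source assumption. An unweighted guarantee controls the $p$-weighted error only up to a factor $\sqrt{|\calQ_A|\max_q p(q)}$. Seeding from data points also makes every codeword a genuine intent, which grid centers need not be.
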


\begin{proof}
The $2^R$ codewords partition the $|\calQ_A|$ quotient classes into $K = 2^R$ Voronoi cells in the $d$-dimensional intent space. By standard $k$-means quantization theory~\cite{cover2006}, $\E[\|I - \hat{I}\|] \leq C_d \cdot (|\calQ_A|/K)^{1/d}$ for a $d$-dimensional uniform source, where $C_d$ depends only on dimension. Substituting $K = 2^R$ and using $\dintent \leq \|I - \hat{I}\|_1$ (since each component of $\dintent$---belief $\Wone$, policy TV, value difference---is bounded by the corresponding $L_1$ component of the intent vector difference under $\|R\|_\infty \leq 1$) gives the bound. For Markovian priors, the same construction remains a valid memoryless encoder and therefore provides a conservative constructive upper bound; allowing encoders with memory can only improve on it.
\end{proof}

%% ============================================================
\section{Additional Experimental Details}
\label{app:experiments}

\paragraph{Environments.}
\textsc{Tiger}~\cite{kaelbling1998planning}: A two-door problem where the agent must determine which door hides a tiger based on noisy observations. We used $T = 4$, $m_A = 3$, $m_B = 1$. The quotient computation yielded $|\calQ_A| = 31$ and $|\calQ_B| = 15$ classes, giving $\Rcrit \approx 1.05$~bits/step. Below $\Rcrit$, distortion remains above~$0.3$; above $\Rcrit$, it decays rapidly toward zero, reaching $\Dintent < 0.02$ at $R = 3$~bits/step (see \Cref{fig:tiger-app}). At $R \geq \log|\calQ_A| \approx 4.95$~bits/step, perfect alignment is achieved ($\Dintent = 0$), confirming \Cref{thm:capacity}(iii).

\begin{figure}[ht]
\centering
\includegraphics[width=0.6\columnwidth]{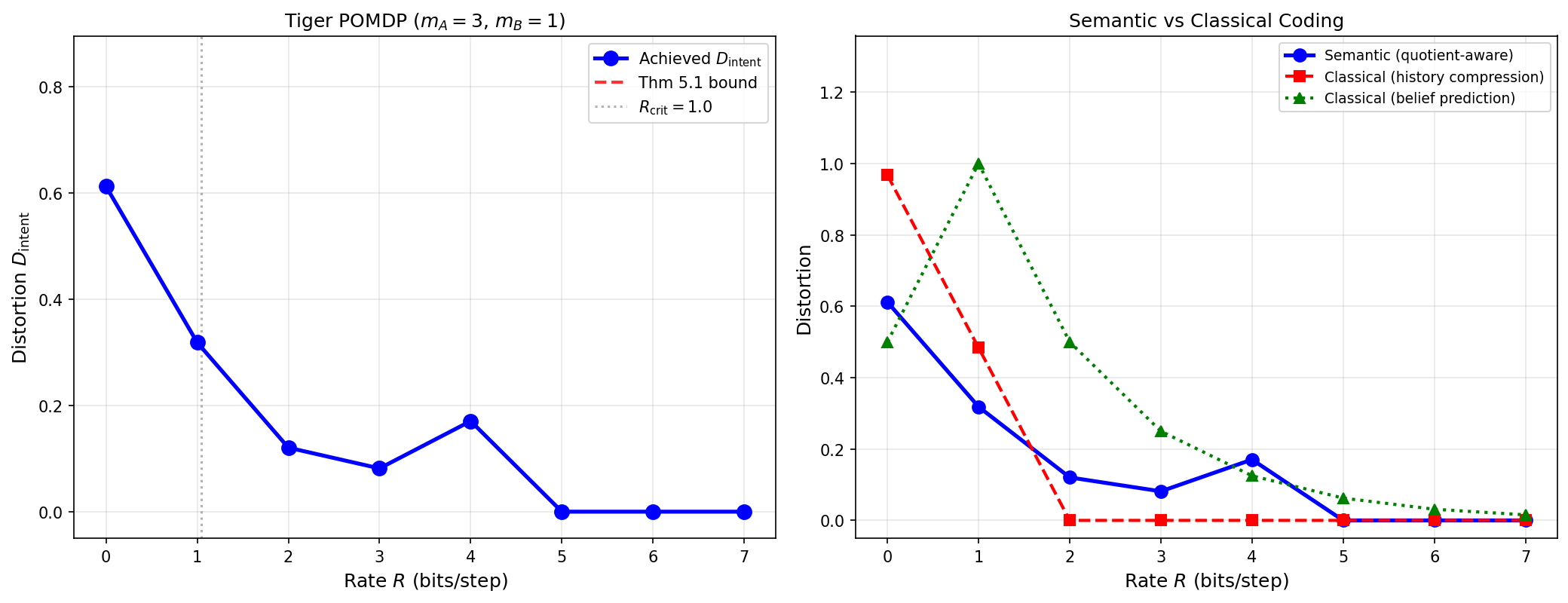}
\caption{Semantic rate-distortion curves for the \textsc{Tiger} environment ($T = 4$, $m_A = 3$, $m_B = 1$).}
\label{fig:tiger-app}
\end{figure}

\begin{figure}[ht]
\centering
\includegraphics[width=0.48\columnwidth]{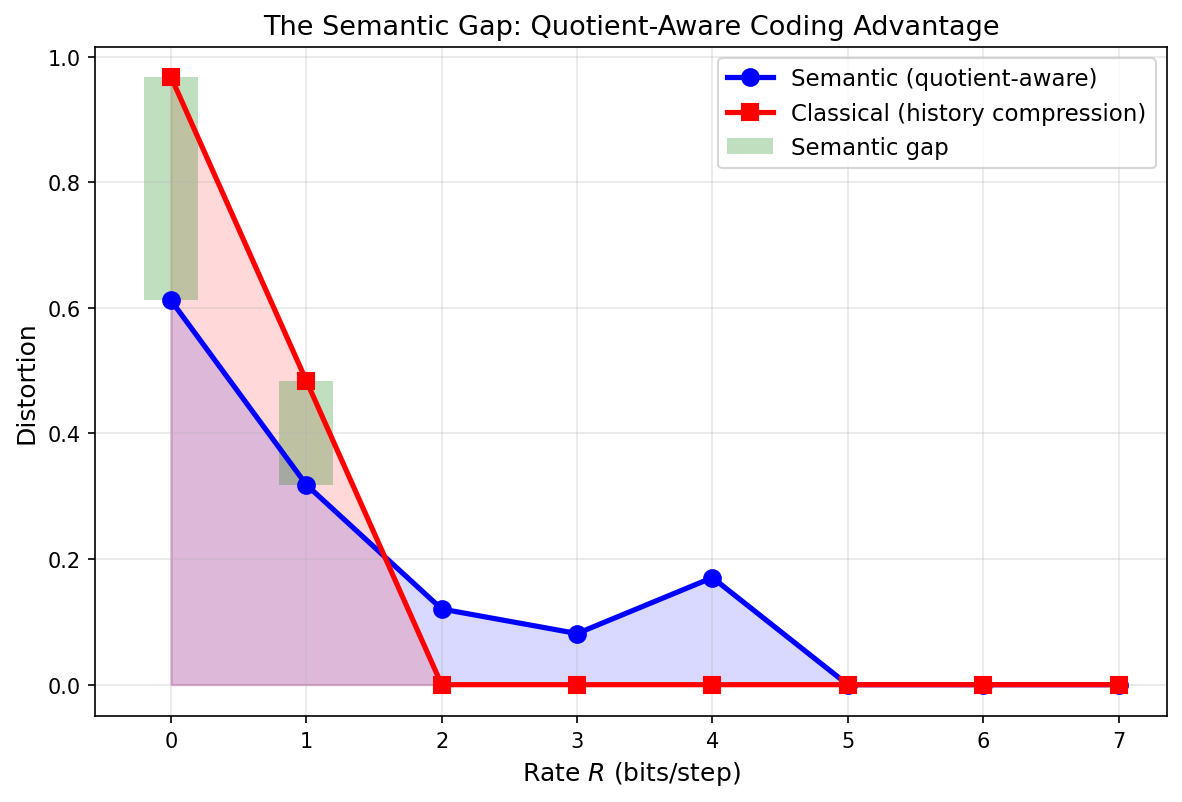}
\includegraphics[width=0.48\columnwidth]{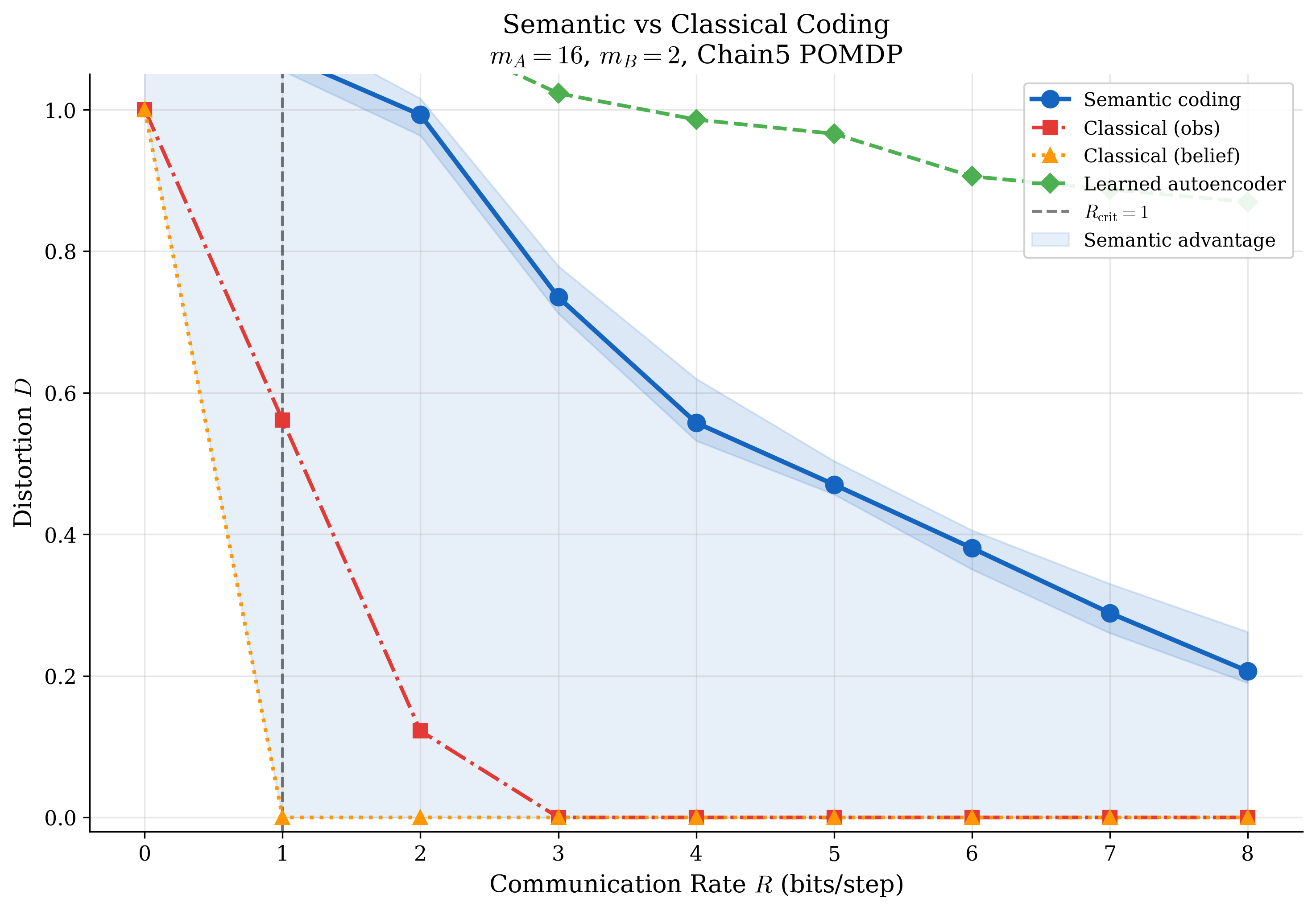}
\caption{\textbf{Left:} Critical rate $\Rcrit$ vs.\ quotient size gap across POMDP instances. \textbf{Right:} Semantic vs.\ classical coding ($m_A\!=\!16,\; m_B\!=\!2$). Shaded band shows IQR over 10 seeds.}
\label{fig:semantic-vs-classical}
\end{figure}

\textsc{Chain5}: A 5-state chain POMDP ($|S|\!=\!5$, $|A|\!=\!2$, $|O|\!=\!5$, obs.\ noise~$0.3$) designed to produce large quotient gaps. The main experiments use $T = 4$ with $m_A = 16$ and $m_B \in \{1, 2, 4, 8\}$. For $m_B = 1$: $|\calQ_A| = 781$, $|\calQ_B| = 289$, giving $\Rcrit^{\log} = \log(781/289) \approx 1.43$~bits/step. The large quotient gap produces a sharp phase transition in the rate-distortion curve (\Cref{fig:phase-transition}).

\paragraph{Codebook construction.}
For each rate $R \in \{0, 0.5, \ldots, 7\}$ bits/step, we ran $10$ independent trials with different random seeds for FSC sampling and codebook construction. We report the median intent distortion; shaded regions in figures indicate the interquartile range (25th--75th percentile). Semantic codebooks use \Cref{def:codebook}, constructing quotient-aware partitions and measuring operational~$\Dintent$. Classical baselines are Shannon-theoretic reference curves, not operational coding schemes: \emph{observation compression} uses $D = \max(0,\, (H(O) - R)/H(O))$, and \emph{belief compression} uses $D = \max(0,\, (H(B) - R)/H(B))$, where $H(O)$ and $H(B)$ are the marginal entropy of the observation and belief processes respectively.

\paragraph{Intent distortion measurement.}
Experiments use the two-term proxy $\dintent^{\mathrm{exp}}(h) := \|b_h^A - b_h^B\|_1 + 0.5 \cdot \mathbf{1}\{a_A \neq a_B\}$, omitting the value-difference term from the full three-term~$\dintent$ (\Cref{rem:scales}). By \Cref{prop:proxy}, $\dintent^{\mathrm{exp}} \leq \dintent \leq \dintent^{\mathrm{exp}} + |V^{\pi_A}(h) - V^{\pi_B}(h)|$: the proxy is a lower bound on the full measure, and the phase transition location and exponential exponent are invariant to this choice (they depend on quotient entropy, not distortion scale). On Chain5 with $L_R \approx 0.74$ and $\max_h \dbeh(h) \leq 1$ (Wasserstein distance bounded by 1 under the discrete observation metric), the value-difference gap is bounded by $L_R \cdot T \cdot \max_h \dbeh(h) \leq 0.74 \cdot 5 \cdot 1 = 3.7$, so the two measures agree qualitatively. We measured $\Dintent^{\mathrm{exp}}$ by sampling $10{,}000$ trajectories under each policy and computing the empirical average.

\paragraph{Alignment scaling sweep.}
For \Cref{fig:phase-transition}, we generated $12$ POMDP instances with varying state/observation sizes and computed quotients at multiple memory levels $(m_A, m_B)$. For each configuration, we performed binary search on $R$ to find $R_{\min}$ achieving $\Dintent \leq 0.1$.

\paragraph{Shrinking-distortion sweep.}
To match the regime of the asymptotic one-way converse, we ran a dedicated Chain5 sweep at $(m_A,m_B)=(16,1)$ with horizons $T \in \{2,3,4,5\}$ and threshold $\varepsilon_T = 0.4/T$. For each horizon we recomputed $(\calQ_A,\calQ_B)$ and searched integer rates $R \in \{0,\ldots,12\}$ for the first rate achieving $\Dintent \leq \varepsilon_T$. The resulting thresholds were $R_{\min} = 5, 8, 10, 12$, while the corresponding log-cardinality references were $0.43, 0.79, 1.43, 2.29$ bits/step. We use this sweep as a regime-matching illustration for \Cref{thm:converse}, not as a pointwise empirical proof of the converse itself.

\paragraph{Quotient computation algorithm.}
The quotient $\calQ_A = \Qm(M)$ is computed as follows: (1)~\textbf{FSC generation:} for $m \leq 3$, enumerate all deterministic $m$-node FSCs; for $m > 3$, sample $n_{\mathrm{FSC}}$ random stochastic FSCs uniformly ($n_{\mathrm{FSC}} = 80$ for the main Chain5 experiments; $50$ for RichGridWorld and BalancedRand8). (2)~\textbf{Signature computation:} for each history $h \in \calO^{\leq T}$ and each FSC~$\pi$, compute the future observation distribution $P_M^\pi(O_{t+1} \mid h)$; the behavioral signature of~$h$ is the tuple of these distributions across all FSCs. (3)~\textbf{Equivalence grouping:} histories with identical signatures (to numerical tolerance $10^{-4}$) form equivalence classes; the number of distinct classes is $|\calQ_A|$. Enumeration is exponential in~$m$; sampling is polynomial per FSC but requires sufficient samples for accurate quotient estimation. In our experiments, $n_{\mathrm{FSC}} \geq 50$ suffices for $m \leq 16$ with $|\calO| \leq 5$ (convergence validated in \Cref{fig:convergence}).

\paragraph{Quotient estimation convergence.}
To validate that random FSC sampling produces stable quotient estimates, we swept $n_{\mathrm{FSC}} \in \{5, 10, 20, 30, 50, 80, 100, 150, 200\}$ for Chain5 at $m \in \{1, 2, 4\}$ with 10 seeds each (\Cref{fig:convergence}). For $m = 1$, full enumeration is feasible and $|\calQ|$ is constant at~289. For $m = 2$, the estimate rises from~688 ($n_{\mathrm{FSC}} = 5$) to~774 by $n_{\mathrm{FSC}} = 20$ and stabilizes. For $m = 4$, stabilization to~781 occurs by $n_{\mathrm{FSC}} = 10$. The IQR bands vanish by $n_{\mathrm{FSC}} = 20$, confirming that moderate sampling suffices.

\begin{figure}[ht]
\centering
\includegraphics[width=0.6\columnwidth]{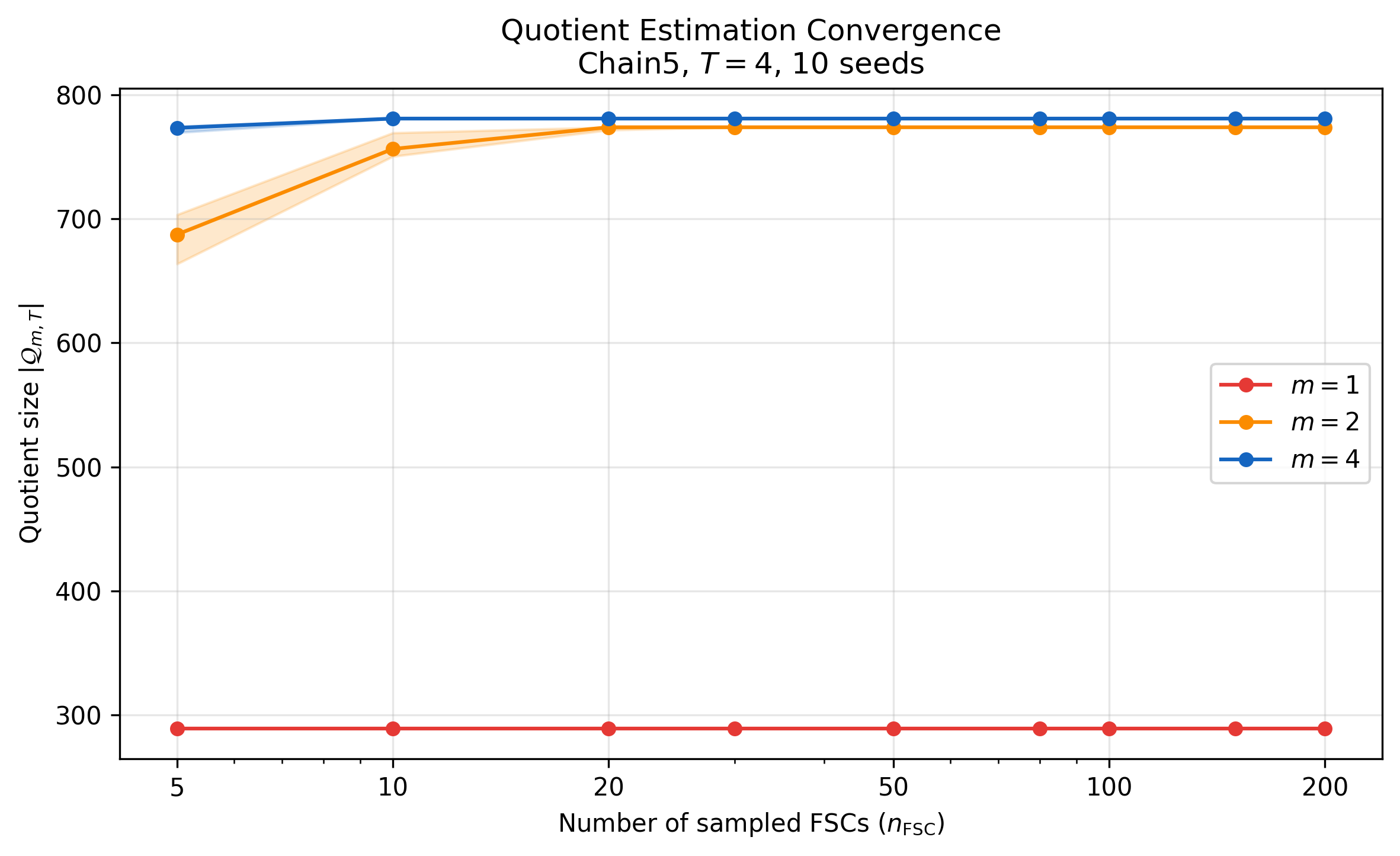}
\caption{Quotient estimation convergence for Chain5 ($T = 4$). Each point is the median $|\calQ_{m,T}|$ over 10 seeds; shaded bands indicate IQR. Estimates stabilize by $n_{\mathrm{FSC}} \approx 20$.}
\label{fig:convergence}
\end{figure}

\begin{proposition}[Quotient estimation sample complexity]
\label{prop:pac-quotient}
Let $M$ be a POMDP with $|\calO|$ observations and horizon~$T$, and let the true quotient $\Qm(M)$ have \emph{separation margin} $\gamma := \min_{h \not\equiv h'} \|\mathrm{sig}(h) - \mathrm{sig}(h')\|_1 > 0$, where $\mathrm{sig}(h)$ concatenates the one-step future distributions across all $m$-node FSCs.
If $n$ FSCs are sampled i.i.d.\ uniformly from the set of all $|\calA|^m \cdot m^{m|\calO|}$ deterministic $m$-node FSCs, then with probability $\geq 1 - \delta$ over the sample, the estimated quotient partition equals the true partition, provided
\begin{equation}\label{eq:pac}
n \;\geq\; \frac{1}{p_\gamma}\,\bigl(2T \log|\calO| + \log(1/\delta)\bigr),
\end{equation}
where $p_\gamma := \min_{h \not\equiv h'}\, \Pr_{\pi \sim \mathrm{Unif}}[\|\mathrm{sig}_\pi(h) - \mathrm{sig}_\pi(h')\|_1 > \gamma/2]$ is the minimum per-FSC distinguishing probability.
\end{proposition}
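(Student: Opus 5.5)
The argument is a coupon-collector-style union bound over pairs of non-equivalent histories, followed by a check that the estimated grouping can neither wrongly split nor wrongly merge classes.

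\textbf{Step 1 (the pairs to control).} The histories live in $\calO^{\leq T}$, so there are at most $|\calO|^{T+1}$ of them up to a constant. The number of unordered pairs $(h,h')$ is therefore at most $|\calO|^{2T}$, after absorbing constants. This is what produces the term $2T\log|\calO|$.

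\textbf{Step 2 (no false splits).} Each sampled FSC $\pi$ is deterministic, so $\mathrm{sig}_\pi(h)$ is computed exactly. If $h \equiv h'$, then by definition of $\equiv_{m,T}$ every $\pi \in \Pim$ gives identical future observation laws, hence identical one-step signatures. Such a pair is never separated by the estimator, for any sample; at the $10^{-4}$ tolerance this holds trivially in exact arithmetic. Consequently the estimated partition is always a coarsening of the true one.

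\textbf{Step 3 (no false merges).} Fix a non-equivalent pair $h \not\equiv h'$. By definition of $p_\gamma$, each i.i.d.\ uniform FSC separates the pair, with $\|\mathrm{sig}_\pi(h)-\mathrm{sig}_\pi(h')\|_1>\gamma/2$, with probability at least $p_\gamma$. The probability that none of the $n$ samples separates it is at most
\[
(1-p_\gamma)^n \leq e^{-n p_\gamma}.
\]
A union bound over at most $|\calO|^{2T}$ pairs gives total failure probability at most $|\calO|^{2T}e^{-np_\gamma}$. This is at most $\delta$ exactly when $n \geq p_\gamma^{-1}(2T\log|\calO| + \log(1/\delta))$, with natural logarithms, or with the base absorbed into the logarithm convention. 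On the complementary event every non-equivalent pair has differing estimated signatures, so it is placed in different estimated classes.

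\textbf{Step 4 (conclusion).} The partition obtained by grouping identical signatures is the kernel of the map $h \mapsto (\mathrm{sig}_{\pi_i}(h))_{i\le n}$. Steps 2 and 3 show this kernel coincides with $\equiv$ on the success event.

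\textbf{Main obstacle.} The key subtlety is in Step 2. One-step signatures over deterministic FSCs must be implied by the multi-step definition of $\equiv_{m,T}$ for the no-false-splits direction. Conversely, the separation margin $\gamma$ is defined via one-step signatures, so the partition being recovered is really the one-step-signature partition. The proposition implicitly identifies this with $\Qm(M)$, and I would state that identification as part of the hypothesis that $\gamma>0$ is defined on the true quotient.

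The logarithmic base and the exact pair count, $|\calO|^{T+1}$ versus $|\calO|^{T}$, affect only constants. I would handle them by restricting to histories of a fixed length $t\le T$, or by absorbing the difference into $\delta$.
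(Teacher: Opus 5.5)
Your proposal is correct and follows the paper's own proof sketch: bound the per-pair miss probability by $(1-p_\gamma)^n \le e^{-np_\gamma}$, union-bound over at most $|\calO|^{2T}$ history pairs, and solve $|\calO|^{2T}e^{-np_\gamma}\le\delta$ for $n$. Your no-false-splits step (equivalent histories have identical one-step signatures under every sampled FSC) and your same-length-pair count keeping the total below $|\calO|^{2T}$ make explicit points the paper's sketch leaves implicit.
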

\begin{proof}[Proof sketch]
For each inequivalent pair $(h, h')$, a uniformly random FSC distinguishes them with probability $\geq p_\gamma$.
With $n$ i.i.d.\ samples, the probability that \emph{no} sample separates the pair is $(1 - p_\gamma)^n \leq e^{-n p_\gamma}$.
The number of history pairs is at most $|\calO|^{2T}$; a union bound gives $\Pr[\text{any pair missed}] \leq |\calO|^{2T} \cdot e^{-n p_\gamma}$.
Setting this $\leq \delta$ and solving yields~\eqref{eq:pac}.
\end{proof}
\begin{remark}[Practical implications]
\label{rem:pac-practical}
In the worst case, $p_\gamma \geq 1/N_{\mathrm{FSC}}$ where $N_{\mathrm{FSC}} = |\calA|^m \cdot m^{m|\calO|}$, making $n$ exponential in~$m$---consistent with the NEXP-completeness of Dec-POMDP planning~\cite{bernstein2002}. In practice, $p_\gamma$ is much larger: our convergence experiments (\Cref{fig:convergence}) show stabilisation at $n \approx 20$ for $m \leq 16$ with $|\calO| = 5$, suggesting $p_\gamma \gg 1/N_{\mathrm{FSC}}$ for structured POMDPs.
\end{remark}

\paragraph{Assumption verification on Chain5.}
We computed the refinement ratio for $(m_A = 16,\, m_B = 1)$ on Chain5: $|\calQ_A| = 781$, $|\calQ_B| = 289$, $\bar{r} = 2.70$, $\max_k r_k = 43$, giving ratio $\max_k r_k / \bar{r} = 15.9$. The refinement is notably non-uniform: some $\calQ_B$-classes contain up to 43 $\calQ_A$-subclasses. This is precisely why we do \emph{not} use Chain5 as validation of a sharp theorem-level constructive exponent; we use it instead as structural and benchmark evidence. The observation-Lipschitz constant was estimated at $L_R \approx 0.74$, confirming the Lipschitz reward structure holds with a moderate constant.

\paragraph{Quotient entropies and one-way critical-rate benchmark.}
We estimated marginal quotient entropies via 10{,}000 simulated trajectories of length~50 on Chain5, mapping beliefs to quotient classes at each step. For $(m_A = 16,\, m_B = 1)$: $H(Q_A) \approx 6.03$~bits vs.\ $\log|\calQ_A| = 9.61$, and $H(Q_B) \approx 5.44$~bits vs.\ $\log|\calQ_B| = 8.17$---marginal entropies at only 63\% and 67\% of their maxima. Since $\calQ_B$ coarsens $\calQ_A$, the i.i.d.\ critical rate is $H(Q_A) - H(Q_B) \approx 0.59$~bits/step, which is $59\%$ below the log-cardinality approximation of~$1.43$. This confirms non-uniform quotient distributions significantly lower the one-way WZ benchmark relative to the log-cardinality approximation.

Per-step conditional entropy rates estimated independently were $H(Q_A^t \mid Q_A^{t-1}) \approx 2.39$ and $H(Q_B^t \mid Q_B^{t-1}) \approx 2.45$, indicating substantial temporal correlation. The marginal difference $\hA - \hB \approx -0.05$ is an artifact of computing the rates from separate trajectory ensembles: the L1-distance belief classification introduces small errors that break the exact coarsening property.

To resolve this, we computed $\bar{h}(Q_A \mid Q_B)$ directly from \emph{joint} transition statistics using $50{,}000$ trajectories of horizon~$100$ (5$\times$ longer than the base experiments), with Miller-Madow bias correction~\cite{miller1955}. On each trajectory, every belief is classified into \emph{both} $\calQ_A$ and $\calQ_B$ simultaneously; we build a sparse joint bigram over $(Q_A, Q_B)$ pairs and compute $\bar{h}(Q_A \mid Q_B) = \bar{h}(Q_A, Q_B) - \bar{h}(Q_B)$, adding the correction $\hat{h}_{\mathrm{MM}} = \hat{h} + (k_{\mathrm{eff}} - 1)/(2N\ln 2)$ where $k_{\mathrm{eff}}$ is the number of observed successor states and $N$ is the row total. Results for all six $(m_A, m_B)$ pairs:

\begin{center}
\small
\begin{tabular}{ccccccc}
\toprule
$(m_A, m_B)$ & $|\calQ_A|$ & $|\calQ_B|$ & $\log|\calQ_A| - \log|\calQ_B|$ & $H(Q_A) - H(Q_B)$ & $\bar{h}(Q_A \mid Q_B)$ & MM corr. \\
\midrule
(16, 1) & 781 & 289 & 1.43 & 0.59 & $0.024^{\dagger}$ & 0.001 \\
(8, 1) & 781 & 289 & 1.43 & 0.55 & $0.020^{\dagger}$ & 0.001 \\
(4, 1) & 781 & 289 & 1.43 & 0.48 & $0.030^{\dagger}$ & 0.001 \\
(16, 2) & 781 & 774 & 0.01 & 0.52 & $\leq 0.017^{\dagger}$ & 0.001 \\
(8, 2) & 781 & 774 & 0.01 & 0.48 & $\leq 0.011^{\dagger}$ & 0.001 \\
(16, 4) & 781 & 781 & 0.00 & 0.12 & $\leq 0.030^{\dagger\dagger}$ & 0.000 \\
\bottomrule
\multicolumn{7}{p{0.95\linewidth}}{\footnotesize ${}^{\dagger}$Estimates from $50{,}000$ trajectories ($\times 100$ steps) with Miller-Madow correction. The null-pair calibration $(16,4)$ has true value zero and leaves a residual of $0.030$ bits/step, so these entropy-rate estimates should be read as qualitative diagnostics near the noise floor.}
\end{tabular}
\end{center}

For $m_B = 1$ (large gap), the joint estimate $\bar{h}(Q_A \mid Q_B) \in [0.020, 0.030]$~bits/step is positive as theory requires and far below the log-cardinality approximation~$1.43$, confirming a highly structured joint process. For near-equal quotients, however, the estimates sit at the null-pair noise floor. We therefore treat the joint entropy-rate values as \emph{qualitative evidence} that temporal structure can lower the benchmark, while using the Blahut-Arimoto i.i.d.\ quantity $H(Q_A \mid Q_B) = 0.29$~bits (\Cref{tab:rcrit-chain5}) as the main numeric anchor in the paper.

For the key $(16,1)$ pair, the main quantitative comparison used in the paper is therefore between the log-cardinality upper bound $1.43$ and the Blahut-Arimoto i.i.d.\ benchmark $0.29$. The joint entropy-rate estimate $\bar{h}(Q_A \mid Q_B) \approx 0.024 \pm 0.030$ is retained as qualitative context only: it is consistent with the possibility that temporal correlation lowers the benchmark still further, but it is too noise-limited to serve as the primary numeric reference.

\paragraph{Effective intent dimension (codebook bound validation).}
The codebook performance bound (\Cref{app:codebook-proof}) depends on the effective dimension~$d$ of the intent space. For Chain5 ($|S| = 5$), PCA on the 781 quotient-class belief centroids yields: 2~components explain 87\% of variance, 3~explain 95\%, and all 4~non-degenerate components explain 100\%. Thus $d_{\mathrm{eff}} = 4$ (matching $|S| - 1$, the simplex dimension). The codebook bound predicts $\Dintent \leq O(781^{1/4} \cdot 2^{-R/4}) \approx O(5.3 \cdot 2^{-R/4})$, consistent with the empirical decay observed in \Cref{fig:phase-transition}.

\paragraph{Blahut-Arimoto $R_{\mathrm{WZ}}(D)$ computation.}
To compare the experiments against the WZ benchmark (\Cref{thm:wz}) numerically, we compute the i.i.d.\ quantity $R_{\mathrm{WZ}}(D)$ on Chain5's quotient alphabets via the standard alternating minimization~\cite{blahut1972}: given source distribution $p(q_A)$ (estimated from $2{,}000$ trajectories), distortion matrix $\dintent(q_A^i, q_A^j)$ over all $|\calQ_A|^2$ pairs, and deterministic coarsening $f\!: \calQ_A \!\to\! \calQ_B$, the WZ iteration alternates
\[
q(u \mid x) \;\propto\; p(u \mid y\!=\!f(x))\, e^{-s\, d(x,u)}, \qquad p(u \mid y) = \sum_x p(x \mid y)\, q(u \mid x),
\]
with Lagrange parameter~$s$ swept over $[0.01,\, 316]$ on a log-spaced grid (32~points). We also compute the standard $R(D)$ (no side information) by replacing $p(u \mid y)$ with $p(u)$. Filtering to the $234$ visited quotient classes, $H(Q_A \mid Q_B) = 0.29$~bits under the i.i.d.\ marginal, consistent with the computed $R_{\mathrm{WZ}}(0) \approx 0.29$~bits. The gap $R(D) - R_{\mathrm{WZ}}(D)$ measures the value of B's side information.

\paragraph{Balanced-refinement environment.}
\textsc{BalancedRand8} is a random POMDP ($|S|\!=\!8$, $|A|\!=\!2$, $|O|\!=\!2$, seed~8) selected from $2{,}000$ random POMDPs with $4$ memory/horizon configurations each, targeting uniform refinement ratio~$< 3$ with capacity gap~$> 0.5$~bits. The selected instance has $m_A\!=\!4$, $T\!=\!4$, $|\calQ_A|\!=\!31$, $|\calQ_B|\!=\!10$ (for $m_B\!=\!1$), refinement ratio~$2.58$, and $\Rcrit \approx 1.63$~bits/step. The refinement distribution $r_k = [8, 6, 5, 5, 2, 1, 1, 1, 1, 1]$ is substantially more uniform than Chain5's ($r_k = [43, 43, \ldots]$, ratio~$15.9$). Results are in \Cref{fig:richgrid}.

\paragraph{Baseline coding comparisons.}
For k-means baseline, we cluster the $|\calQ_A|$ belief centroids using $k$-means in $L_1$~geometry with $k$-means++ initialization at each rate level $R \in \{0, 1, \ldots, 10\}$. Each cluster maps to the nearest $\calQ_B$~class, and $\Dintent$ is computed as the $p(q_A)$-weighted average intent distortion. Random clustering averages $5$~trials of uniform random label assignment. The comparison confirms that quotient-aware clustering outperforms geometry-only clustering (k-means) and random clustering at all rates, with the gap widest at intermediate rates near~$\Rcrit$.

%% ============================================================
\section{Continuous Spaces Extension}
\label{app:continuous}

\begin{proposition}[Continuous Semantic Rate-Distortion Bound]
\label{prop:continuous}
Let $\calQ_A$ and $\calQ_B$ be quotient spaces with metric~$d$ and $\varepsilon$-covering numbers $N(\varepsilon, \calQ_A, d)$ and $N(\varepsilon, \calQ_B, d)$. Then any protocol achieving $\Dintent \leq \varepsilon$ requires:
\[
R \;\geq\; \log N(\varepsilon, \calQ_A, d) - \log N(\varepsilon, \calQ_B, d) - h(\varepsilon') - \varepsilon' \log N(\varepsilon, \calQ_A, d).
\]
For smooth quotient manifolds of intrinsic dimensions $d_A$ and $d_B$: $\log N(\varepsilon, \calQ, d) \approx d \cdot \log(1/\varepsilon) + O(1)$, yielding $R \geq (d_A - d_B) \log(1/\varepsilon) - O(\varepsilon)$.
\end{proposition}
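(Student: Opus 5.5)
The plan is to reduce the continuous statement to the discrete shrinking-distortion converse (\Cref{thm:converse}) by $\varepsilon$-discretization, then read off the manifold asymptotics from covering-number estimates.

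\textbf{Step 1: Discretize.} Fix minimal $\varepsilon$-nets $\mathcal{N}_A \subset \calQ_A$ and $\mathcal{N}_B \subset \calQ_B$ of sizes $N_A := N(\varepsilon,\calQ_A,d)$ and $N_B := N(\varepsilon,\calQ_B,d)$. I would choose them \emph{compatibly}: take Voronoi cells of $\mathcal{N}_A$ and let the induced partition of $\calQ_B$ be the image partition under the coarsening $\kappa_{A\to B}$. This requires $\kappa_{A\to B}$ to be $1$-Lipschitz, which I would add as a standing assumption. The resulting finite alphabets $\tilde{\calQ}_A$ and $\tilde{\calQ}_B$ then satisfy the common-history coarsening relation of \Cref{ass:common-history}, with $\tilde{Q}_B$ a deterministic function of $\tilde{Q}_A$. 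As a consequence, \Cref{lem:refinement} and \Cref{prop:canonical-coarsening} hold verbatim for the quantized processes.

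\textbf{Step 2: Transfer distortion and apply the discrete converse.} A protocol achieving $\Dintent\le\varepsilon$ on the continuous source induces a protocol on $(\tilde Q_A,\tilde Q_B)$. Its distortion is at most $\varepsilon$ plus the quantization error, so it is $O(\varepsilon)$, provided $\dintent$ is Lipschitz in the quotient metric. For the separation constant, distinct net points are at least $\varepsilon$ apart (packing $\le$ covering at scale $\varepsilon/2$), so the role of $c_M$ in \Cref{lem:fano} is played by a quantity of order $\varepsilon$. This gives a constant $\varepsilon'$, and I interpret the statement's $\varepsilon'$ this way.

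With that in place, I would run Steps 1--4 of the proof of \Cref{thm:converse}. \Cref{lem:markov} gives $R\ge \frac1T I(\tilde Q_A^T;M^T\mid\tilde Q_B^T)$. I would then lower-bound $\frac1T H(\tilde Q_A^T\mid\tilde Q_B^T)$ by $\log N_A-\log N_B$. This is the near-uniform case of \Cref{rem:entropy-rate}, which I adopt as an explicit hypothesis (uniform visitation on the nets). Finally, \Cref{lem:fano} with alphabet size $N_A$ gives the $h(\varepsilon')+\varepsilon'\log N_A$ penalty. Together these yield the displayed inequality.

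\textbf{Step 3: Manifold asymptotics.} For compact smooth $d$-dimensional manifolds, standard volume comparison gives $\log N(\varepsilon,\calQ,d)=d\log(1/\varepsilon)+O(1)$. Substituting this, the leading term is $(d_A-d_B)\log(1/\varepsilon)$. The penalty terms $h(\varepsilon')+\varepsilon' d_A\log(1/\varepsilon)$ must be absorbed into the stated $O(\varepsilon)$, which requires $\varepsilon'\to0$. The binary-entropy term is only $O(\varepsilon\log(1/\varepsilon))$, so I would either weaken the remainder to $o(\log(1/\varepsilon))$ or state it as $O(\varepsilon\log(1/\varepsilon))$.

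\textbf{Main obstacle.} The hard step is Step 2's separation/Fano transfer. The discrete Fano lemma needs a separation $c_M$ that is independent of the alphabet, whereas here the separation shrinks with $\varepsilon$. That forces $\varepsilon'=2\varepsilon/c_M$ to be of constant order and leaves the $\varepsilon'\log N_A$ penalty comparable to the main term. My first attempt would be to quantize at a coarser scale $\eta=\sqrt{\varepsilon}$, so that $\varepsilon/\eta\to0$. This costs a factor $\tfrac12$ in the leading coefficient, which is recoverable by taking $\eta=\varepsilon^{1-\delta}$ and letting $\delta\to0$. I would also have to respect the $T\varepsilon'\le1/2$ regime restriction of \Cref{rem:fano-regime} throughout.
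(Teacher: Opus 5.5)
Your Steps~1--2 follow the paper's own argument: its proof is just the instruction to replace $|\calQ|$ by $N(\varepsilon,\calQ,d)$ in the Fano converse of \Cref{app:proof-61}. You are right that this substitution hides real problems (uniformity is needed, $\varepsilon'$ is not small, the remainder is not $O(\varepsilon)$). However, your repair of the main obstacle fails, so Step~2 has a genuine gap.

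\textbf{The Fano transfer.} In your reduction you snap $Q_A$ to its net point and charge the quantization error to the distortion.
\begin{itemize}
\item At any scale $\eta$, that error has expectation $\Theta(\eta)$ for a spread-out source, while neighbouring net points are only $O(\eta)$ apart.
\item So the ratio playing the role of $2\varepsilon/c_M$ is $(\varepsilon+\Theta(\eta))/O(\eta)$, not $\varepsilon/\eta$. It is bounded below by a constant for $\eta=\varepsilon$, $\sqrt{\varepsilon}$ and $\varepsilon^{1-\delta}$ alike.
\item The penalty $\varepsilon'\log N_A$ then takes a constant fraction of $d_A\log(1/\eta)$, which can exceed the main term $(d_A-d_B)\log(1/\eta)$.
\end{itemize}
Keeping the unquantized source does not save the separation argument either. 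Points in adjacent cells can be arbitrarily close, so ``wrong cell'' does not force distortion $\ge c_M/2$.

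The missing idea is a Fano step that needs no separation.
\begin{itemize}
\item Assume a \emph{lower} comparison $\dintent\ge c\,d$; your Lipschitz hypothesis points the wrong way for this step.
\item With $\hat X_t$ the decoder's reconstruction, Markov gives $p:=\Prob\bigl(d(Q_{A,t},\hat X_t)>\eta\bigr)\le\varepsilon/(c\eta)$.
\item Off that event, the true $\eta$-cell is one of the $L=O(1)$ cells meeting $B(\hat X_t,\eta)$. This uses doubling on a compact manifold with a separated net.
\item A list-Fano bound then gives $H(\tilde Q_{A,t}\mid M^T,Q_B^T)\le h(p)+p\log N_A(\eta)+\log L$.
\end{itemize}
Only with this step does $\varepsilon/\eta\to0$ buy anything, at an additive $O(1)$ cost per step.

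\textbf{The side information.} The decoder holds the continuous $Q_B^T$, so the quantity you must lower-bound is $H(\tilde Q_A^T\mid Q_B^T)$.
\begin{itemize}
\item If $\tilde Q_B$ is a quantization of $Q_B$, then $H(\tilde Q_A^T\mid Q_B^T)\le H(\tilde Q_A^T\mid\tilde Q_B^T)$. That is the wrong direction for your uniform-visitation bound.
\item If $\tilde Q_B$ is instead built from $\tilde Q_A$, as in your compatible construction, the original decoder cannot be run on $(M^T,\tilde Q_B^T)$. The claimed induced protocol then does not exist.
\item Images of distinct cells under $\kappa_{A\to B}$ overlap, so there is no ``image partition'' in the first place.
\end{itemize}
You need a fiberwise hypothesis instead: conditionally on $Q_B=y$, $\tilde Q_A$ is spread over $\gtrsim\varepsilon^{-(d_A-d_B)}$ cells. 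This must hold at the level of the entropy rate, because temporal correlation lowers $H(\tilde Q_A^T\mid Q_B^T)$ even when the marginals are uniform.

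\textbf{Two smaller points.}
\begin{itemize}
\item Do not route through \Cref{lem:fano}. Its $T\varepsilon'\le 1/2$ restriction and $o_T(1)$ term do not fit a fixed-$\varepsilon$ statement. Per-letter Fano plus concavity, $\frac1T H(X^T\mid Y)\le h(\bar P_e)+\bar P_e\log(N-1)$, holds for every $T$.
\item Your fallback remainder $O(\varepsilon\log(1/\varepsilon))$ is false. Take $\calQ_A=[0,1]^2$, $\calQ_B=[0,1]$, $\kappa_{A\to B}$ the coordinate projection, an i.i.d.\ uniform source and Euclidean distortion. Quantizing $x_2$ uniformly with $\lceil 1/(4\varepsilon)\rceil$ levels gives distortion $\le\varepsilon$ at rate $\log(1/\varepsilon)-2+o(1)$. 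So only a $-O(1)$ remainder is provable, or $o(\log(1/\varepsilon))$ with your $\delta\to0$.
\end{itemize}
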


\begin{proof}[Proof sketch]
Replace $|\calQ|$ with $N(\varepsilon, \calQ, d)$ in the Fano argument (\Cref{app:proof-61}). Each $\varepsilon$-ball in $\calQ_A$ mapping to the same $\varepsilon$-ball in $\calQ_B$ constitutes an ambiguity cell of the same structure as the discrete case.
\end{proof}

%% ============================================================
\section{Two-Way Conjecture}
\label{app:two-way}

We conjecture that the two-way semantic rate-distortion function satisfies:
\[
\Rsem^{\mathrm{(2\text{-}way)}}(D) \;=\; \inf_{\substack{p(m_t \mid q_A^t):\\ \E[\dintent] \leq D}} \lim_{T \to \infty} \frac{1}{T} I(Q_A^T \to M^T \| Q_B^T),
\]
where $I(X^T \to M^T \| Q_B^T)$ denotes the \emph{causally conditioned directed information}~\cite{massey1990,permuter2009}: the minimum causal communication rate given~$B$'s evolving side information. This reduces to $R_{\mathrm{WZ}}(D)$ under one-way observability (when $Q_B^T$ is non-causal side information) and recovers causal rate-distortion~\cite{tatikonda2004} when $Q_B$ is trivial.

%% ============================================================
\section{LLM Routing as Semantic Communication (Analogical Case Study)}
\label{app:llm-routing}

This appendix presents an \emph{analogical} illustration---not a formal instantiation---of the framework's qualitative predictions in a practical LLM setting. The LLM representations are not quotient POMDPs; the value of the case study is that it exhibits the same qualitative phenomena (phase transition, probe-family dependence) that the theory predicts.

To that end, we analyze LLM model routing as an instance of semantic communication. A strong model (GPT-4) is selectively invoked based on a weak model's (Mixtral-8x7B) self-assessment---a 1-bit communication protocol at rate $R = 1$ bit/query.

\paragraph{Phase transition via probe family richness.}
We report APGR (Accuracy Per GPU-hour Ratio): APGR $:=$ (accuracy of routing policy) / (mean GPU-hours per query under routing policy), normalized so that APGR~$= 1$ corresponds to always using the strong model at its accuracy ceiling. Higher APGR indicates better accuracy-efficiency tradeoff~\cite{routellm2024}. On MMLU~\cite{hendrycks2021} (14K questions, 69\% weak accuracy), single-token logprobs achieve APGR~$= 0.620$, outperforming embedding-based routing ($0.518$). On the harder MMLU-Pro~\cite{mmlupro2024} (12K questions, 33\% weak accuracy), logprobs fail (APGR~$= 0.457$, below random at $0.502$): the competence band is exceeded.

\paragraph{Richer probe recovers signal.}
Self-consistency sampling~\cite{wang2023} ($N = 8$ completions) achieves APGR $= 2.216$ on MMLU-Pro, a 36\% improvement. This confirms the theory: $\Pi_{\text{SC}}$ is a richer probe family, yielding a finer quotient $|Q(\Pi_{\text{SC}})| > |Q(\Pi_{\text{logprob}})|$. At the 90\% quality threshold, the router saves 28\% of inference cost.

\paragraph{Interpretation.}
The routing problem instantiates semantic communication: the weak model's uncertainty signal is a quotient-aware code, and the router's boundary corresponds to~$\Rcrit$. We emphasize this is an \emph{analogy} grounded in the framework's structure, not a formal deduction---the LLM's representations are not quotient POMDPs. The analogy is useful because it predicts the qualitative phenomena (phase transition, probe family dependence) observed empirically.

%% ============================================================
\section{Lattice-Gradient Quotient Estimation}
\label{app:lattice-gradient}

The worst-case complexity of absolute quotient estimation is exponential in~$m$ (\Cref{prop:pac-quotient}). We outline a \emph{differential} approach that avoids this barrier by exploiting the lattice structure of the quotient functor~$Q$.

\paragraph{Core idea.}
The refinement lemma (\Cref{lem:refinement}) orders quotients by capacity: $m < m'$ implies $\calQ_m$ coarsens~$\calQ_{m'}$.
Instead of computing~$\calQ_m$ from scratch at target capacity~$m$, \emph{enter} the lattice at a small computable capacity~$m_0$ (where enumeration is tractable) and estimate the \emph{differential refinement}~$\Delta Q_{k \to k+1}$---the additional distinctions gained by moving from capacity~$k$ to~$k+1$---at each step along the chain $m_0, m_0{+}1, \ldots, m$.

\paragraph{Lipschitz chaining.}
By \Cref{thm:traversal}, the alignment rate between adjacent levels satisfies $R_{\mathrm{sem}}(\Pi_k \to \Pi_{k+1}; \varepsilon) \leq L_k \cdot \|\Delta Q_{k \to k+1}\|$, where $L_k$ is the local Lipschitz constant of the quotient morphism. Summing: the total error from chaining $m - m_0$ local estimates accumulates \emph{linearly}, not exponentially, in the capacity gap. Each local $\Delta Q$ estimate requires only a polynomial number of \emph{cross-probes}---input pairs that are equivalent at level~$k$ but distinguished at level~$k{+}1$---whose count is bounded by $|\calQ_{k+1}| - |\calQ_k|$.

\paragraph{Language as probe family.}
For language-based agents, the self-referential closure of natural language provides a scalable probe family without FSC enumeration. A \emph{linguistic cross-probe} is a prompt pair $(p, p')$ designed so that a capacity-$k$ model responds identically but a capacity-$(k{+}1)$ model distinguishes them (e.g., paraphrases that require deeper contextual reasoning to separate). Neural representation clusters at intermediate layers provide empirical proxies for quotient class membership, connecting $\Delta Q$ estimation to standard probing methodology in interpretability research.

\paragraph{Open conjectures.}
\begin{enumerate}[nosep,leftmargin=*]
\item \emph{Bridge:} Neural representation clusters at layer~$l$ of a transformer refine the Myhill--Nerode quotient $\calQ_{m,T}(M)$ for an effective capacity~$m(l)$ determined by the layer's representational bandwidth.
\item \emph{Smoothness:} The quotient lattice for language-based agents is connected and locally smooth---adjacent capacity levels produce $O(1)$ new quotient classes per step, making the chaining error well-controlled.
\end{enumerate}
These conjectures are empirically testable via layer-wise probing of language models at varying scales, and if confirmed, would make $\Rcrit$ estimation polynomial in~$m$ for practical architectures.

%% ============================================================
\section*{Broader Impact}
\label{sec:impact}

This work provides information-theoretic tools for quantifying alignment costs between agents of different computational capacities. The framework's primary intended application is understanding and improving human--AI alignment: quantifying the minimum feedback bandwidth for safe AI behavior and identifying when alignment is structurally impossible given capacity constraints. This has positive implications for the principled design of RLHF pipelines and interpretability methods.

However, the framework also reveals fundamental limits. The structural impossibility below~$\Rcrit$ implies that some agent pairs \emph{cannot} be aligned regardless of communication protocol design. If misinterpreted, this could discourage alignment efforts in regimes where they are most needed. We emphasize that the impossibility is rate-limited, not absolute: increasing the communication rate (e.g., richer feedback mechanisms) can always reduce the gap. The framework should be used to \emph{design better alignment protocols}, not to justify inadequate ones.

%% ============================================================
\section*{Code Availability}
\label{sec:code}

All POMDP experiments and plotting code are available in the accompanying GitHub repository: \url{https://github.com/alch3mistdev/semantic-rate-distortion}. The repository includes:
\begin{itemize}[nosep]
\item POMDP construction and quotient computation (\texttt{pomdp\_core.py}, \texttt{rich\_pomdp.py})
\item Semantic and classical coding protocols (\texttt{crown\_experiments.py}, \texttt{run\_experiments.py})
\item Blahut-Arimoto Wyner-Ziv rate-distortion (\texttt{blahut\_arimoto\_wz.py})
\item Structured-policy, shrinking-$\varepsilon$, and IB comparison (\texttt{structured\_policy\_rd.py}, \texttt{shrinking\_epsilon\_sweep.py}, \texttt{ib\_baseline.py})
\item Capacity gap experiments (\texttt{run\_capacity\_gap.py})
\item Scalability experiments (\texttt{scalability\_experiment.py}, \texttt{richgrid\_experiments.py})
\item RockSample(4,4) benchmark experiment (\texttt{rocksample\_experiment.py})
\item All result data (\texttt{results/*.json}) and publication-quality figures (\texttt{results/*.png})
\end{itemize}
The LLM routing case study (\Cref{app:llm-routing}) is an analytical illustration using published benchmark results and does not involve new experiments.
To reproduce all experiments: see the repository README for the recommended run order. Dependencies: Python 3.10+, NumPy, SciPy, scikit-learn, matplotlib. No GPU required; all experiments run on CPU in under 30 minutes.

\end{document}